\newtheorem{theorem}{Theorem}
\theoremstyle{plain}
\theoremstyle{plain}
\theoremstyle{plain}\newtheorem{definition}{Definition}
\newtheorem{example}{Example}
\newcommand{\twotriangle}{\hfill $\bigtriangleup \bigtriangleup$  }
\begin{document}
%
\title{Cyclic and Quasi-Cyclic LDPC Codes on Row and Column Constrained Parity-Check Matrices and Their Trapping Sets
}
%
%
%

\author{(\emph{submitted to IEEE Transactions on Information Theory})\\ 
Qin Huang$^1$, Qiuju Diao$^2$, Shu	Lin$^1$ and Khaled Abdel-Ghaffar$^1$\\
{\small $^1$ Electrical and Computer Engineering Department, University of California, Davis, 95616, USA\\
$^2$ State Key Laboratory of Integrated Service Networks, Xidian University, Xi'an, 710071, CHINA}

Email: \{qinhuang,qdiao, shulin, ghaffar\}@ucdavis.edu\\ 
\thanks{This research was supported by
     NSF under the Grants CCF-0727478 and CCF-1015548, NASA under the Grant NNX09AI21G  and gift grants from Northrop Grumman Space Technology, Intel and Denali Software Inc.. }
}\maketitle

\begin{abstract}
  This paper is concerned with construction and structural analysis of both cyclic and quasi-cyclic codes, particularly LDPC codes.  It consists of three parts.  The first part shows that a cyclic code given by a parity-check matrix in circulant form can be decomposed into descendant cyclic and quasi-cyclic codes of various lengths and rates. Some fundamental structural properties of these descendant codes are developed, including the characterizations of the roots of the generator polynomial of a cyclic descendant code.  The second part of the paper shows that cyclic and quasi-cyclic descendant LDPC codes can be derived from cyclic finite geometry LDPC codes using the results developed in first part of the paper.  This enlarges the repertoire of cyclic LDPC codes. The third part of the paper analyzes the trapping sets of regular LDPC codes whose parity-check matrices satisfy a certain constraint on their rows and columns.  Several classes of finite geometry and finite field cyclic and quasi-cyclic LDPC codes with large minimum weights are shown to have no harmful trapping sets with size smaller than their minimum weights. Consequently, their performance error-floors are dominated by their minimum weights.
 

\end{abstract}


\section{Introduction}

   The rapid dominance of LDPC codes [1] in applications requiring error control coding is due to their capacity-approaching performance which can be achieved with practically implementable iterative decoding algorithms.  LDPC codes were first discovered by Gallager in 1962 [1] and then rediscovered in late 1990's [2], [3].  Ever since their rediscovery, a great deal of research effort has been expended in design, construction, structural analysis, efficient encoding and decoding, performance analysis, generalizations and applications of LDPC codes.  Numerous papers have been published on these subjects.  Many LDPC codes have been chosen as the standard codes for various next generations of communication systems and their applications  to digital  data storage systems are now being seriously considered and investigated.

   Let GF($q$) be a field with $q$ elements. A \emph{regular} $q$-ary LDPC code [1] is given by the null space over GF($q$) of a \emph{sparse} parity-check matrix $\bf H$ that has constant column weight $\gamma$  and constant row weight $\rho$.  Such an LDPC code is said to be ($\gamma$,$\rho$)-regular. If the columns and/or rows of $\bf H$ have \emph{varying} weights, then the null space of $\bf H$ gives an \emph{irregular} $q$-ary LDPC code.  If $\bf H$ is an array of sparse circulants of the same size, then the null space over GF($q$) of $\bf H$ gives a $q$-ary \emph{quasi-cyclic} (QC)-LDPC code.  If $\bf H$ consists of a single sparse circulant or a column of sparse circulants of the same size, then the null space of $\bf H$ over GF($q$) gives a \emph{cyclic LDPC code}.  If $q = 2$, an LDPC code is said to be binary.

   In almost all of the proposed constructions of LDPC codes, the following constraint on the rows and columns of the parity-check matrix $\bf H$ is imposed: \emph{no two rows (or two columns) can have more than one place where they both have non-zero components}.  This constraint on the rows and columns of $\bf H$ is referred to as the \emph{row-column (RC)-constraint}.  This RC-constraint ensures that the Tanner graph [4] of the LDPC code given by the null space of $\bf H$ is free of cycles of length 4 and hence has a girth of at least 6 and that the minimum distance of the code is at least $\gamma_{\min} + 1$, where $\gamma _{\min}$ is the minimum column weight of $\bf H$ [5], [6]. The distance bound $\gamma _{\min} + 1$ is poor for small $\gamma_{\min}$ and irregular LDPC codes, but it is tight for regular LDPC codes  whose parity-check matrices have large column weights, such as finite geometry LDPC codes [5]-[9], and finite field QC-LDPC codes constructed in [10]-[13] and this paper.  A parity-check matrix  $\bf H$  that satisfies the RC-constraint is called an RC-constrained parity-check matrix and the code given by its null space is called an RC-constrained LDPC codes.  An RC-constrained LDPC code is one-step majority-logic decodable [5], [6].  Furthermore, the RC-constraint on the parity-check matrices of  LDPC codes allows us to analyze the \emph{trapping-set} structure  [14], [15] of RC-constrained LDPC codes which affects their error-floor performances. Analysis of trapping-set structure of RC-constrained LDPC codes is a part of investigation in this paper.
 
   LDPC codes can be classified into two general categories: 1) random or pseudo-random codes that are constructed using computer-based algorithms or methods; and 2) algebraic codes that are constructed using algebraic or combinatorial tools such as finite fields, finite geometries and experimental designs.  Codes in these two categories can be classified into two types, codes whose parity-check matrices possess little structure and codes whose parity-check matrices have structures.  A code whose parity-check matrix possesses no structure beyond being a linear code is problematic in that both encoding and decoding implementations become quite complex.  A code whose parity-check matrix has structures beyond being a linear code is in general more easily implemented. Two desirable structures for hardware implementation of encoding and decoding of LDPC codes are cyclic and quasi-cyclic structures.  A cyclic LDPC code can be efficiently and systematically encoded with a single feedback shift-register with complexity linearly proportional to the number of parity-check symbols (or information symbols) [6].  Encoding of a QC-LDPC code can also be efficiently implemented but requires multiple shift-registers [16], [17].  It is in general more complex than encoding of a cyclic code but still enjoys linear complexity.  However, QC-LDPC codes enjoy some advantages in hardware implementation of decoding in terms of wire routing [18].  Furthermore, the QC structure allows partially parallel decoding [19] which offers a trade-off between decoding complexity and decoding speed, while cyclic structure allows either full parallel or serial decoding.   In this paper, we show that a cyclic LDPC code can be put in QC form through column and row permutations.  As a result, a cyclic LDPC code enjoys both encoding and decoding implementation advantages.  Encoding is carried out in cyclic form while decoding is carried out in QC form.

   QC-LDPC codes are more commonly studied than cyclic LDPC codes. There are at least a dozen of or more methods for constructing QC-LDPC codes, including both algebraic and computer-based methods; however, there is only one known class of cyclic LDPC codes which are constructed based on finite geometries [5].

    This paper is concerned with constructions and structural analysis of both cyclic and QC codes, particularly LDPC codes. It consists of three parts.  In the first part, it is shown that a cyclic code given by a parity-check matrix in circulant form can be decomposed, through column and row permutations, into various cyclic and QC codes, called \emph{descendant codes}. Some fundamental structures of the descendant codes are developed, including the characterization of the roots of the generator polynomial of a cyclic descendant code. In the second part of the paper, it is shown that RC-constrained cyclic and QC-LDPC codes can be derived from the class of cyclic finite geometry (FG) LDPC codes based on circulant decomposition presented in the first part. Several new families of RC-constrained cyclic and QC-LDPC codes are presented.  The third part of the paper is concerned with {trapping sets} of RC-constrained regular LDPC codes.  It is shown that for an RC-constrained ($\gamma$,$\rho$)-regular LDPC code, its Tanner graph has no trapping sets of size smaller than or equal to $\gamma$ with numbers of odd-degree check-nodes less than or equal to $\gamma$.   Several classes of cyclic and QC-LDPC codes are shown to have large minimum distances (or minimum weights) and no \emph{elementary} trapping sets [20] with sizes and numbers of degree-1 check-nodes smaller than their minimum weights.

\section{Circulant Decomposition, Cyclic and Quasi-cyclic Codes}

   A circulant is a square matrix over a certain field such that every row is the cyclic-shift one place to the right (or one place to the left) of the row above it and the first row is the cyclic-shift one place to the right (or one place to the left) of the last row. In coding theory, a cyclic-shift commonly refers to the cyclic-shift one place to the right. Hereafter, by a cyclic-shift, we mean a cyclic-shift one place to the right unless explicitly mentioned otherwise.  In this case, every column of a circulant is a downward cyclic-shift the column on its left and the first column is the downward cyclic-shift of the last column.  It is clear that a circulant is uniquely specified (or characterized) by its first row which is called the \emph{generator} of the circulant. The columns and rows of a circulant have the same weight.

\subsection{Circulant Decomposition}
   Let $\bf W$ be an $n\times n$  circulant over the field GF($q$) where $q$ is a power of a prime. We label the rows and columns of $\bf W$ from $0$ to $n - 1$.  Let ${\bf w} = (w_0, w_1, . . . , w_{n-1})$ be the generator of $\bf W$.  We denote $\bf W$ by $\Psi({\bf w}) = \Psi (w_0, w_1, . . . , w_{n-1})$.  Then
\begin{equation}
 {\bf W} = \Psi ({\bf w})=\left[\begin{array}{cccc}
 w_0 & w_1& \cdots& w_{n-1}\\
 w_{n-1}&w_0&\cdots&w_{n-2}\\
 \vdots&\vdots&\ddots&\vdots\\
 w_1&w_2&\cdots&w_0
 \end{array}\right].
\end{equation}
Let $\Psi^{(1)} ({\bf w})$ denote the circulant obtained by simultaneously cyclically shifting all the rows of $\Psi ({\bf w})$ one place to the right.  Let ${\bf w}^{(1)}$ denote the $n$-tuple obtained by cyclic-shifting all the components of $\bf w$ one place to the right.  Then, it is clear that $\Psi^{(1)} ({\bf w}) =\Psi ({\bf w}^{(1)})$. Note that $\Psi ({\bf w})$ and $\Psi ({\bf w}^{(1)})$ have identical set of rows and identical set of columns except that all the columns are cyclically shifted one place to the right and all the rows are cyclically shifted upward one place.  Therefore, $\Psi ({\bf w})$ and $\Psi ({\bf w}^{(1)})$ are isomorphic up to cyclic-shift.

   Suppose $n$ can be factored as a product of two positive integers, $c$ and $l$, such that $c \neq  1$ and $l \neq  1$, i.e., $n = c\cdot l$ and $c$ and $l$ are proper factors of $n$.  Let ${\cal I}   = \{0, 1, 2, \cdots, c\cdot l - 1\}$ be the set of indices (or labels) for the rows and columns of the $n\times n$  circulant $\Psi ({\bf w})$ given by (1). Define the following index sequences:
\begin{eqnarray}
                  \pi ^{(0)} = [0, c, 2c, \cdots, (l - 1)c ], \\
                  \pi  = [\pi ^{(0)},  \pi ^{(0)} + 1, \cdots ,  \pi ^{(0)} + c - 1]. 
                  \end{eqnarray}
Then, $\pi$  gives a permutation of the indices in ${\cal I}$. Suppose we first permute the columns and then the rows of $\bf W$ based on $\pi$. These column and row permutations based on $\pi$  result in the following  $c\times c$  array of circulants of size $l\times l$ over GF($q$):
\begin{equation}
\Phi({\bf w})=\left[\begin{array}{ccccc}
 \Psi ({\bf w}_0)    &            \Psi ({\bf w}_1)    &        \cdots   &    \Psi ({\bf w}_{c - 2})&    \Psi ({\bf w}_{c - 1})\\  
  \Psi^{(1)} ({\bf w}_{c-1})    &            \Psi ({\bf w}_0)    &     \cdots   &    \Psi ({\bf w}_{c - 3}) &     \Psi ({\bf w}_{c - 2})\\ 
       \vdots    &             \vdots   &      \ddots    &     \vdots  &     \vdots \\ 
           \Psi^{(1)}  ({\bf w}_{2})    &            \Psi^{(1)}  ({\bf w}_{3})    &         \cdots   &    \Psi ({\bf w}_{0}) &     \Psi ({\bf w}_{1})\\ 
                      \Psi^{(1)}  ({\bf w}_{1})    &            \Psi^{(1)}  ({\bf w}_{2})    &           \cdots   &    \Psi^{(1)}  ({\bf w}_{c-1}) &     \Psi ({\bf w}_{0})\\       
\end{array}\right],
\end{equation}
where, for $0 \leq  i < c$,
\begin{eqnarray}
                        {\bf w}_i &=& (w_i, w_{c + i}, \cdots, w_{(l - 1)c + i}),\\                
                            \Psi ({\bf w}_i) &=& \left[\begin{array}{cccc}
                                     { w}_i  &         { w}_{c + i}   &    \cdots  &   { w}_{(l - 1)c + i}\\
                             {w}_{(l - 1)c + i}  &      {w}_i  &       \cdots  &   { w}_{(l - 2)c + i}\\
                             \vdots&\vdots&\ddots&\vdots\\
                          {w}_{c + i}&       {w}_{2c + i}     &      \cdots   &         { w}_i
                                  \end{array}\right].                                  
                                  \end{eqnarray}
Each $l\times l$ circulant $\Psi ({\bf w}_i)$ (or $\Psi ({\bf w}^{(1)} _i)$) in $\Phi ({\bf w})$ is called a \emph{descendant} circulant of $\Psi ({\bf w})$. Since $\Psi ({\bf w}_i)$ and $\Psi ({\bf w}^{(1)} _i)$ are isomorphic for $0 \leq i < c$, there are at most $c$ distinct (or non-isomorphic) descendant circulants of $\Psi ({\bf w})$ in $\Phi ({\bf w})$, namely $\Psi ({\bf w}_0), \Psi ({\bf w}_1), \cdots, \Psi ({\bf w}_{c-1}) $. The $l$-tuple ${\bf w}_i$ is called the $i$-th \emph{cyclic section} of $\bf w$.

   Since $\Phi ({\bf w})$ is obtained by applying the permutation $\pi$  to the columns and rows of the ciruclant $\Psi ({\bf w})$, we write  $\Phi ({\bf w}) = \pi (\Psi ({\bf w}))$.  Let $\pi^{-1}$ be the inverse permutation of $\pi$. Then $\Psi ({\bf w}) = \pi ^{-1}(\Phi ({\bf w}))$.  \emph{From the structure of $\Phi ({\bf w})$ displayed by (4), we see that each row of $l\times l$ circulants is a right cyclic-shift of the row above it, however, when the last circulant on the right is shifted around to the left, all its rows are cyclically shifted one place to the right within the circulant.} This structure is referred to as the \emph{doubly cyclic} structure which is pertinent to the construction of new cyclic codes, especially new cyclic LDPC codes, as will be shown in later sections.   From the expression of (4), we see that the descendant circulant $\Psi ({\bf w}_0)$ of $\Psi ({\bf w})$ appears in the array $\Phi ({\bf w})$  $c$ times on the main diagonal.  For $1 \leq  i < c$, the descendent circulant $\Psi ({\bf w}_i)$ appears $i$ times and its shift $\Psi ^{(1)}({\bf w}_i)$ (or $\Psi ({\bf w}^{(1)}_i)$) appears $c - i$ times in $\Phi ({\bf w})$.  $\Psi ({\bf w}_i)$ and its shifts appear on an off-diagonal of $\Phi ({\bf w})$ starting from the $i$th position of the first row and moving down to the right on a 45$^\circ$ diagonal.  When it reaches to the last (rightmost) column of $\Phi ({\bf w})$, it moves to the left of the next row of $\Phi ({\bf w})$ and continues to move down on a 45$^\circ$ diagonal until it reaches the last row of $\Phi ({\bf w})$.
   
      Summarizing the above results, we have the following theorem.
\begin{theorem}  
Given an $n\times n$  circulant ${\bf W}=\Psi({\bf w})$ over a field with generator $\bf w$, if $n$ can be properly factored, then there is a permutation $\pi$  which puts $\bf W$ into an array of circulants of the same size in the form of (4).  Conversely, if an array $\Phi ({\bf w})$ of circulants of the same size is given in the form (4), then there is a permutation $\pi ^{-1}$ which puts the array $\Phi ({\bf w})$ into a circulant $\bf W$ with generator $\bf w$.     
\end{theorem}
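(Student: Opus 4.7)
The plan is to prove the forward direction by direct computation on matrix entries after reindexing, and then obtain the converse as a trivial consequence of the bijectivity of $\pi$. The key initial observation is that, because $n = cl$, every index $m \in \mathcal{I}$ has a unique representation $m = tc + s$ with $0 \leq s < c$ and $0 \leq t < l$. The listing in (2)--(3) exhibits $\pi$ as exactly the bijection $\pi(sl + t) = tc + s$, so that the position $sl + t$ in the permuted ordering corresponds to block row (or column) $s$ with intra-block offset $t$, which is precisely the indexing structure displayed by (4).

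I would then record two index-level identities. From the circulant form (1), the $(r, c')$-entry of $\Psi({\bf w})$ is $w_{(c' - r) \bmod n}$. Combining this with the reindexing above, the $(s_1 l + t_1,\, s_2 l + t_2)$-entry of $\pi(\Psi({\bf w}))$ equals
\begin{equation*}
w_{((t_2 - t_1)c + (s_2 - s_1)) \bmod n}.
\end{equation*}
On the other side, reading off (6), the $(t_1, t_2)$-entry of $\Psi({\bf w}_i)$ is $w_{((t_2 - t_1) \bmod l)c + i}$, and consequently the $(t_1, t_2)$-entry of $\Psi^{(1)}({\bf w}_i) = \Psi({\bf w}_i^{(1)})$ is $w_{((t_2 - t_1 - 1) \bmod l)c + i}$.

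The proof then reduces to a two-case comparison of these expressions. When $s_2 \geq s_1$, I set $i = s_2 - s_1 \in \{0, 1, \ldots, c-1\}$; the exponent reduces to $((t_2 - t_1) \bmod l)c + i$, matching $\Psi({\bf w}_i)$ and therefore the upper-triangular and main-diagonal blocks of (4). When $s_2 < s_1$, I set $i = c + s_2 - s_1 \in \{1, \ldots, c-1\}$; then $(t_2 - t_1)c + (s_2 - s_1) = (t_2 - t_1 - 1)c + i$, which mod $n$ equals $((t_2 - t_1 - 1) \bmod l)c + i$, matching $\Psi^{(1)}({\bf w}_i)$ and therefore the subdiagonal blocks of (4). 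For the converse, $\pi$ is a bijection on $\mathcal{I}$, so $\pi^{-1}$ exists and applying it simultaneously to the rows and columns of $\Phi({\bf w})$ reverses the forward argument step for step and recovers $\Psi({\bf w})$.

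The main obstacle is the wrap-around bookkeeping in the second case: the single formula $w_{((t_2 - t_1)c + (s_2 - s_1)) \bmod n}$ coming from the original circulant reduces, depending on the sign of $s_2 - s_1$, to two genuinely different intra-block expressions, and it is precisely the wrap-around by $-c$ in the exponent that manifests itself as the extra within-circulant shift encoded by the $\Psi^{(1)}$ operator on the subdiagonal blocks. Once this case split is handled carefully, everything else is routine substitution driven by the factorization $n = cl$.
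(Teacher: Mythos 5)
Your proof is correct and follows the same route as the paper, which simply defines $\pi$ via (2)--(3) and asserts that applying it to the rows and columns of $\Psi({\bf w})$ yields the array (4); you supply the explicit index-level verification (the identity $\pi(sl+t)=tc+s$, the entry formula $w_{((t_2-t_1)c+(s_2-s_1))\bmod n}$, and the two-case reduction that produces $\Psi({\bf w}_i)$ on and above the block diagonal and $\Psi^{(1)}({\bf w}_i)$ below it) that the paper leaves implicit. The case split correctly accounts for the within-circulant shift on the subdiagonal blocks, and the converse via bijectivity of $\pi$ is exactly the paper's argument.
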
      
Theorem 1 gives a basis for decomposing a cyclic code into families of cyclic and QC codes or putting a group of cyclic codes into a longer cyclic code.

\subsection{Cyclic and QC Descendants of a Cyclic Code}

   In the following, we show that cyclic and QC codes can be derived from a given cyclic code using circulant decomposition.  The results developed in this section will be used in Section IV to construct new cyclic and QC-LDPC codes from cyclic FG-LDPC codes.

   Let ${\cal C}_c$ be an ($n$,$n-r$) cyclic code over GF($q$) given by the null space of an $n\times n$ circulant parity-check matrix ${\bf H}_{circ} = \Psi ({\bf w})$ over GF($q$) with rank $r$ where $\bf w$ is the generator of the circulant.  (For every cyclic code, a circulant parity-check matrix ${\bf H}_{circ}$  can always be constructed by using its parity-check vector as the generator $\bf w$ of the circulant [15].  This will be reviewed in the next section.)  Suppose $n$ can be properly factored as the product of two integers, $c$ and $l$.  Then, as shown in Section II. A,  the circulant parity-check matrix ${\bf H}_{circ}  = \Psi ({\bf w})$ of ${\cal C}_c$ can be decomposed as a $c\times c$ array ${\bf H}_{qc}$ of circulants of size $l\times l$ in the form given by (4) through column and row permutations:
\begin{equation}
   {\bf H}_{qc} = \Phi ({\bf w})  =   
   \left[\begin{array}{ccccc}
       \Psi ({\bf w}_0) &   \Psi ({\bf w}_1)  &         \Psi ({\bf w}_2)  & \ldots &    \Psi ({\bf w}_{c -1})\\                        \Psi ^{(1)} ({\bf w}_{c - 1}) &        \Psi ({\bf w}_0)  &         \Psi ({\bf w}_1) &  \ldots&   \Psi ({\bf w}_{c- 2})\\
       \Psi ^{(1)} ({\bf w}_{c - 2}) &     \Psi ^{(1)} ({\bf w}_{c -1}) &    \Psi ({\bf w}_0)&  \ldots &    \Psi ({\bf w}_{c - 3})\\
       \vdots&\vdots&&\ddots&\vdots\\
       \Psi ^{(1)} ({\bf w}_1)  &        \Psi ^{(1)} ({\bf w}_2) &     \Psi ^{(1)} ({\bf w}_3) &  \ldots & \Psi ({\bf w}_0)
                             \end{array} \right], 
 \end{equation}                                                           
 where, for $0 \leq  i < c$, ${\bf w}_i$  and $\Psi ({\bf w}_i )$ are given by (5) and (6). Then, the null space of ${\bf H}_{qc} = \Phi ({\bf w})$  gives an ($n$,$n-r$)  QC code ${\cal C}_{qc}$ over GF($q$) which is \emph{combinatorially equivalent} to ${\cal C}_c $.  We say that $\{ {\cal C}_c, {\cal C}_{qc}\}$ form an equivalent pair. Notation-wise, we express ${\cal C}_{qc}$ and ${\cal C}_c$  as ${\cal C}_{qc} = \pi( {\cal C}_c)$ and ${\cal C}_c = \pi^{-1}({\cal C}_{qc})$, respectively.

   From the array ${\bf H}_{qc} = \Phi ({\bf w}) $, we can construct new cyclic codes of three different types.  These new cyclic codes are called \emph{cyclic descendant codes} (simply \emph{descendants}) of the cyclic code ${\cal C}_c$.  The cyclic code ${\cal C}_c$ itself is called the \emph{mother} code.

 For $0 \leq  i < c$, if $\Psi ({\bf w}_i )$ is a nonzero circulant, then the null space over GF($q$)  of $\Psi ({\bf w}_i )$ gives a cyclic descendant of ${\cal C}_c$, denoted by ${\cal C}_i ^{(1)}$ , of length $l$.  This descendant code is referred to as a \emph{type-1 cyclic descendant} of ${\cal C}_c$. Since there are at most $c$ distinct non-isomorphic descendant circulants of ${\bf H}_{circ}  = \Psi ({\bf w})$  in the array ${\bf H}_{qc} = \Phi ({\bf w}) $.  There are at most $c$ distinct type-1 cyclic descendants of ${\cal C}_c$.

    From (7), we see that each column of the array ${\bf H}_{qc} = \Phi ({\bf w})$  consists of the circulants in the first row of ${\bf H}_{qc}$. For $0 \leq  i < c$, each circulant $\Psi ({\bf w}_i )$ or its cyclic shift $\Psi ^{(1)}({\bf w}_i )$ appears once and only once.  Since a circulant $\Psi ({\bf w}_i )$ and its cyclic shift $\Psi ^{(1)} ({\bf w}_i )$ differ only in permutation of their rows and hence their null spaces are identical.  Consequently,  the null spaces of all the columns of ${\bf H}_{qc} = \Phi ({\bf w})$  are the same.  In fact, the null space of each column of ${\bf H}_{qc} = \Phi ({\bf w})$  is identical to the null space of the following $cl\times l$ matrix:
\begin{equation*}        
{\bf H}_{col}=
\left[
\begin{array}{c}
                                         \Psi  ({\bf w}_0)\\
                                         \Psi  ({\bf w}_1)\\
                                                 \vdots\\
                                       \Psi  ({\bf w}_{c-1})
\end{array}\right].
\end{equation*}
For $1 \leq  k < c$, let $i_1, i_2, \ldots, i_k$ be $k$ distinct integers such that $0 \leq  i_1, i_2, \ldots, i_k < c$.  Let
\begin{equation}        
    {\bf H}_{col, k} =  \left[  
\begin{array}{c}
                   \Psi  ({\bf w}_{i_1} )\\
                                         \Psi ({\bf w}_{i_2} )\\
\vdots\\
                                        \Psi  ({\bf w}_{i_k})
\end{array}\right],
\end{equation}
which is a submatrix of ${\bf H}_{col}$. The null space of   ${\bf H}_{col, k}$ gives a cyclic code of length $l$, denoted by ${\cal C}_k^{(2)}$, which is referred to as a \emph{type-2 cyclic descendant} of the mother cyclic code ${\cal C}_c$.

   For $1 \leq  k < c$, let $i_1, i_2, \ldots, i_k$ be a set of distinct integers such that $0 \leq  i_1, i_2, \ldots, i_k < c$.   Suppose we replace the descendant circulants, $\Psi ({\bf w}_{i_1}), \Psi ({\bf w}_{i_2}), \ldots, \Psi ({\bf w}_{i_k})$ of ${\bf H}_{circ} = \Psi ({\bf w})$  and all their cyclic shifts in the array ${\bf H}_{qc} = \Phi ({\bf w})$ (see (7))  by zero matrices of size $l\times l$ (if $i_1 = 0$, we replace $c$ copies of the circulant, $\Psi ({\bf w}_0)$, by $c$ zero matrices).   By doing this, we obtain a $c\times c$ array ${\bf H}_{qc,mask} = \Phi ({\bf w})_{mask}$ of circulants and zero matrices of size $l\times l$.  Since the cyclic shift of a zero matrix is also a zero matrix, the array $\Phi ({\bf w}) _{mask}$ is still in the form given by (4).  Then ${\bf H}_{circ,mask} = \Psi ({\bf w})_{mask} = \pi^{-1}(\Phi ({\bf w})_{mask})$ gives a new $n\times n$ circulant over GF($q$) .  Let $r_{mask}$ be the rank of ${\bf H}_{circ,mask} = \Psi ({\bf w})_{mask}$. Then the null space of ${\bf H}_{circ,mask} = \Psi ({\bf w})_{mask}$ gives an ($n$,$n - r_{mask}$) cyclic code ${\cal C}_{mask}^{(3)}$ which is referred to as a \emph{type-3 cyclic descendant} of the mother cyclic code ${\cal C}_c$. The replacement of a set of circulants in the array ${\bf H}_{qc} = \Phi ({\bf w})$ by a set of zero matrices  is called \emph{masking} [6], [10], [11]. ${\bf H}_{circ,mask} = \Psi ({\bf w})_{mask}$  and ${\bf H}_{qc,mask}= \Phi ({\bf w})_{mask}$ are called \emph{masked circulant} and \emph{masked array} of ${\bf H}_{circ}  = \Psi ({\bf w})$   and  ${\bf H}_{qc} = \Phi ({\bf w}) $, respectively.  It is clear that different masking pattern results in a different cyclic descendant code of ${\cal C}_c$. In Section III, we will characterize the roots of the generator polynomials of cyclic descendant codes of all three types.

   For any pair ($s$,$t$) of integers with $1 \leq  s, t \leq  c$, let ${\bf H}_{qc}(s,t)$ be a $s\times t$ subarray of ${\bf H}_{qc} = \Phi ({\bf w}) $.  Since ${\bf H}_{qc}(s,t)$ is an array of circulants, its null space gives a QC code.  This QC code is called a QC descendant code of ${\cal C}_c$ (or ${\cal C}_{qc})$.

\subsection{Cyclic- and QC-LDPC Codes Derived From a Cyclic LDPC Code}
   If the circulant parity-check matrix ${\bf H}_{circ}  = \Psi ({\bf w})$  of ${\cal C}_c$  is a sparse circulant over GF($q$)  and satisfies the RC-constraint, then the null space of ${\bf H}_{circ}  = \Psi ({\bf w})$  gives an RC-constrained cyclic-LDPC code over GF($q$) .  Since the $c\times c$ array  ${\bf H}_{qc} = \Phi ({\bf w})$  is obtained from ${\bf H}_{circ}  = \Psi ({\bf w})$  by column and row permutations, it also satisfies the RC-constraint.  Hence, the null space of ${\bf H}_{qc} = \Phi ({\bf w})$  gives an RC-constrained QC-LDPC code ${\cal C}_{qc}$ which is equivalent to the cyclic LDPC code ${\cal C}_c$.  Since the entire array ${\bf H}_{qc} = \Phi ({\bf w})$  satisfies the RC-constraint, any subarray of ${\bf H}_{qc} = \Phi ({\bf w})$  also satisfies the RC-constraint.  Consequently, all the cyclic descendant codes derived from the cyclic-LDPC code ${\cal C}_c$ are cyclic-LDPC codes, i.e., the null space of the $i$th descendant circulant $\Psi ({\bf w}_i )$ (or $\Psi ^{(1)} ({\bf w}_i )$) of ${\bf H}_{circ}  = \Psi ({\bf w})$  in the array ${\bf H}_{qc} = \Phi ({\bf w})$  gives a cyclic-LDPC code of length $l$, the null space of the parity-check matrix ${\bf H}_{col,k}$ given by (8) gives a cyclic-LDPC code of length $l$, and the null space of a $c\times c$ masked circulant ${\bf H}_{circ,mask} = \Psi ({\bf w})_{mask}$  of ${\bf H}_{circ}  = \Psi ({\bf w})$  gives a cyclic-LDPC code of length $n$.  The Tanner graphs of the cyclic descendant LDPC codes of  ${\cal C}_c$ have a girth of length at least 6.

   For any pair ($s$,$t$) of integers with $1 \leq  s, t \leq  c$, let ${\bf H}_{qc}(s,t)$ be a $s\times t$ subarray of ${\bf H}_{qc} = \Phi ({\bf w}) $.  Then the null space of ${\bf H}_{qc}(s,t)$ gives a QC-LDPC code whose Tanner graph has a girth of at least 6.

   Among the classes of LDPC codes that have been constructed or designed, the only class of LDPC codes that are cyclic is the class of finite geometry (FG) LDPC codes [5] whose parity-check matrices are circulants and satisfy the RC-constraint.  Cyclic FG-LDPC codes have large minimum distances (or weights) and perform well with iterative decoding based on belief propagation. Cyclic-LDPC codes constructed based on two-dimensional \emph{projective} geometries have been proved that their Tanner graphs do not have trapping sets of sizes smaller than their minimum weights [20].  As a result, their error-floors are mainly determined by their minimum weights.  Since they have large minimum weights, their error-floors are expected to be very low.  In Section VII, we will show that the Tanner graphs of the cyclic-LDPC codes constructed based on two-dimensional \emph{Euclidean} geometries also do not have trapping sets with sizes smaller than their minimum weights.  Unfortunately, cyclic FG-LDPC codes form a small class of cyclic-LDPC codes.  However, using circulant decomposition presented in this section, we can construct large classes of cyclic and QC descendant LDPC codes from cyclic FG-LDPC codes, as will be shown in Sections IV, V and VI.  These cyclic and QC descendant LDPC codes of cyclic FG-LDPC codes also have good trapping set structures.

   Construction of QC-EG-LDPC codes through decomposition of a single circulant constructed based on lines of a two-dimensional Euclidean geometry was proposed earlier by Kamiya and Sasaki [9].  In this paper, their focus was mainly on construction of high rate QC-LDPC codes and analysis of the ranks of their parity-check matrices.  In this paper, we propose constructions of both cyclic- and QC-LDPC codes through decomposition of  a single or \emph{multiple} circulants constructed based on two and \emph{higher} dimensional Euclidean and projective geometries.  We particularly emphasize on construction of cyclic LDPC codes and characterization of the roots of their generator polynomials.

\section{Decomposition of Cyclic Codes and Characterization of Their Cyclic Descendants} 

   In this section, we first show that a circulant parity-check matrix of a given cyclic code can be expressed as a linear sum of circulants which correspond to the roots of the generator polynomial of the given code. From this linear sum of circulants, we then characterize the roots of the generator polynomials of the cyclic descendants of the given cyclic code.

\subsection{Circulant Parity-Check Matrices of Cyclic Codes}

   For any positive integer $m$, let GF($q^m$) be an extension field of GF($q$).  Let ${\cal C}_{c}$ be an ($n$,$k$) cyclic code over GF($q$) where $n$ is a factor of $q^m - 1$ and $(n,q) = 1$.  Every codeword ${\bf v} = (v_0, v_1, \cdots , v_{n-1})$ in ${\cal C}_c$ is represented by a polynomial ${\bf v}(X) = v_0 + v_1 X + \cdots + v_{n-1}X^{n-1}$ over GF($q$) with degree $n - 1$ or less.  The polynomial ${\bf v}(X)$ is called a code polynomial.  An ($n$,$k$) cyclic code ${\cal C}_c$ over GF($q$) is uniquely specified by its generator polynomial ${\bf g}(X) = g_0 + g_1 X + \cdots + g_{n-k-1}X^{n-k-1} + X^{n-k}$ which is a monic polynomial of degree $n-k$ over GF($q$) and divides $X^n - 1$ [6], [21]-[24] where $g_0 \neq 0$. A polynomial of degree $n - 1$ or less over GF($q$) is a code polynomial if and only if it is divisible by ${\bf g}(X)$.  Hence, every code polynomial ${\bf v}(X)$ is a multiple of ${\bf g}(X)$. 

   The generator polynomial  ${\bf g}(X)$  of ${\cal C}_c$ has $n - k$ roots in GF($q^m$).   The condition $(n,q) = 1$ ensures that all the roots of $X^n - 1$ are distinct elements of GF($q^m$) and hence all the roots of  ${\bf g}(X)$  are distinct elements of GF($q^m$).  In the construction of a cyclic code, its generator polynomial is often specified by its roots.  This is the case for BCH and RS codes [6], [21]-[24]. 

   Let
\begin{eqnarray} 
                       {\bf h}(X) &=& (X^n - 1)/ {\bf g}(X)\nonumber\\     
                                   &=& h_0 + h_1 X + \cdots + h_k X^k
                                   \end{eqnarray}                                           
where $h_j\in \mbox{GF($q$)}$ for $0\leq j\leq k$, $h_k = 1$ and $h_0\neq 0$. The polynomial ${\bf h}(X)$ is called the \emph{parity-check polynomial} of $\cal C$.  Let 
\begin{eqnarray} 
                 { \tilde{\bf h}}(X)&=& \tilde{h}_0 + \tilde{h}_1 X + \cdots + \tilde{h}_k X^k     \nonumber\\                  
                           &=& X^k {\bf h}(X^{-1}) = h_k + h_{k-1}X + \cdots + h_0 X^k,
                            \end{eqnarray}  
which is the \emph{reciprocal polynomial} of ${\bf h}(X)$.  Comparing the coefficients of ${ \tilde{\bf h}}(X)$ and ${\bf h}(X)$, we have
\begin{equation}
                    \tilde{h}_0 = h_k,\quad    \tilde{h}  _1 = h_{k-1},  \quad   \cdots ,  \quad    \tilde{h} _k = h_0.                     \end{equation}
Form the following $n$-tuple over GF($q$):
\begin{equation}
                     \tilde{\bf h}   = ( \underbrace{\tilde{h}  _0,  \tilde{h}  _1, \cdots ,  \tilde{h}  _k}_{k+1},  \tilde{h}  _{k+1}, \cdots,  \tilde{h}  _{n-1}),                            
                     \end{equation}
where the first $k+1$ components are the coefficients of  $\tilde{\bf h}  (X)$ and last $n - k - 1$ components are zeros, i.e.,
\begin{equation}
\tilde{h}  _{k+1} =  \tilde{h}  _{k+2} = \cdots =  \tilde{h}  _{n-1} = 0.   
                     \end{equation}
Using the $n$-tuple $\tilde{\bf h}$ of (12) as the generator, we form the following $n\times n$  circulant over GF($q$):
\begin{equation}
{\bf H}_{circ}= \Psi ( \tilde{\bf h}  )=\left[
 \begin{array}{cccccc}
 \tilde{h}  _0     &  \tilde{h}  _1   &    \tilde{h}  _2   &  \cdots   &    \tilde{h}  _{n-2}  &   \tilde{h}  _{n-1}\\
  \tilde{h}  _{n-1} &    \tilde{h}  _0   &    \tilde{h}  _1&     \cdots  &     \tilde{h}  _{n-3} &    \tilde{h}  _{n-2}\\
  \vdots&\vdots&&\ddots&\vdots&\vdots\\
   \tilde{h}  _1     &  \tilde{h}  _2   &    \tilde{h}  _3   &  \cdots   &    \tilde{h}  _{n-1}  &   \tilde{h}  _{0}\\
\end{array}
 \right].
\end{equation}
In terms of the coefficients of ${\bf h}(X)$, $\Psi ( \tilde{\bf h}  )$ is given as follows:
\begin{equation}
 {\bf H}_{circ}= \Psi ( \tilde{\bf h}  )=\left[
 \begin{array}{cccccccccc}
  {h}  _k     &   {h}  _{k-1}   &     {h}  _{k-2}   &  \cdots  &       {h}  _1 &       {h}  _0&0&0\; & \cdots&0\\
 0     &   {h}  _{k}   &     {h}  _{k-1}&     {h}  _{k-2}   &  \cdots   &     {h}  _1  &    {h}  _0&0\; &\cdots&0\\
    \vdots&\vdots&& &\ddots&\vdots &\vdots && \ddots&\vdots\\
         0  &   0   &     0   & &  \cdots &       {h}_{k} &  {h}_{k-1} &  &\cdots&h_0\\
         \hdashline
     {h}  _0  &   0   &     0  &  &  \cdots    &   0&  {h}_{k} &  {h}_{k-1}  &\cdots&h_1\\      
   \vdots&\vdots& &&\ddots&\vdots &\vdots & &\ddots&\vdots\\
 {h}  _{k-1}   &     {h}  _{k-2} &     {h}  _{k-3}  &  \cdots   &  h_0 &     0&0 &&\cdots &  {h}  _k  \\
\end{array}
 \right].
\end{equation}
   The first $n - k$ rows of ${\bf H}_{circ}= \Psi ( \tilde{\bf h}  )$ are linearly independent which give the conventional parity-check matrix  $\bf H$  of the ($n$,$k$) cyclic code ${\cal C}_c$.  The other $k$ rows of ${\bf H}_{circ}= \Psi ( \tilde{\bf h}  )$ are \emph{redundant rows} (or linearly dependent on the the first $n - k$ rows). Since  ${\bf H}_{circ}$  is a  \emph{redundant expansion} of $\bf H$, the null spaces of  $\bf H$  and  ${\bf H}_{circ}$  give the same  cyclic code ${\cal C}_c$. The $n$-tuple $ \tilde{\bf h} = (h_k, h_{k-1}, \cdots , h_0, 0, 0, \cdots, 0)$ is commonly referred to as the \emph{parity-check vector}.

Note that every row (or every column) of the circulant parity-check matrix ${\bf H}_{circ} = \Psi(\tilde{\bf h} )$ of ${\cal C}_c$ has a \emph{zero-span} of length $n-k-1$ (i.e., $n-k-1$ consecutive zeros).  It is proved in [25] that this zero-span has maximum length and is unique.  The maximum zero-spans of different rows of ${\bf H}_{circ}$ start from different positions (or different columns).  It is shown in [25] that using the parity-check matrix in circulnat form, an ($n$,$k$) cyclic code $\cal C$ can correct bursts of errors up to the code's burst-correction capability or it can correct any burst of erasures of length $n-k$ or less using iterative decoding [15], [25]. Decomposition of a burst-error correction cyclic codes gives new burst-error correction cyclic descendant codes.

   Suppose that $n$ can be properly factored as the product of two positive integers, $c$ and $l$. Then $\Psi ( \tilde{\bf h}  )$ can be decomposed into a  $c\times c$  array of $l\times l$ circulants in the form given by (4) by applying the permutation $\pi$  (defined by (3)) to the columns and rows of  $\Psi ( \tilde{\bf h}  )$,
\begin{equation}
 \Phi ( \tilde{\bf h}  ) = \left[\begin{array}{ccccc}
   \Psi ( \tilde{\bf h}_0)       &      \Psi ( \tilde{\bf h}  _1)   &     \cdots   &   \Psi ( \tilde{\bf h}  _{c - 2})&   \Psi ( \tilde{\bf h}  _{c - 1})\\
         \Psi^{(1)} ( \tilde{\bf h}   _{c - 1})   &   \Psi ( \tilde{\bf h}  _0)   &      \cdots &   \Psi ( \tilde{\bf h}  _{c - 3}) &    \Psi ( \tilde{\bf h}  _{c - 2})\\
         \vdots&&\ddots&\vdots&\vdots\\
             \Psi^{(1)} ( \tilde{\bf h}  _{1})   &   \Psi  ^{(1)}( \tilde{\bf h}  _2)   &      \cdots  &   \Psi^{(1)} ( \tilde{\bf h}  _{1}) &    \Psi ( \tilde{\bf h}  _{0})    
 \end{array}\right],
\end{equation}
where, for $0 \leq  j < c$,
\begin{eqnarray}
                        { \tilde{\bf h}}_j &=& (\tilde{h}_j, \tilde{h}_{c + j}, \cdots, \tilde{h}_{(l - 1)c + j}),\\
\tilde{h}_t&=&h_{k-t}  \quad  \mbox{ for } 0\leq t\leq k,\\
\tilde{h}_t&=&0  \qquad \mbox{ for } t>k.                                 
                                  \end{eqnarray}                                  
   The null space of $\Phi ( \tilde{\bf h}  )$ gives a QC code ${\cal C}_{qc}$ that is combinatorially equivalent to ${\cal C}_c$.

    In code construction, the generator polynomial  ${\bf g}(X)$  of an ($n$,$k$) cyclic code ${\cal C}_c$ over GF($q$) is specified by its roots [6], [21]-[24].  Let $\beta_0, \beta_1, \cdots , \beta_{n-k-1}$ be the roots of  ${\bf g}(X)$ .  Then
\begin{equation}
                            {\bf g}(X)  = \prod\limits_{0\leq i<n-k} (X - \beta_i).                                                  \end{equation}
Since  ${\bf g}(X) | X^n - 1$, $n|(q^m - 1)$ and $(n,q) = 1$, $\beta_0, \beta_1, \cdots , \beta_{n-k-1}$ are distinct nonzero elements of GF($q^m$). Let $\alpha$ be a primitive $n$th root of unity. Then, for $0 \leq  i < n - k$,  $\beta_i$ is a power of $\alpha$.  Since $\alpha^n = 1$, $(\beta _i)^n = 1$ for $0 \leq  i < n - k$. A polynomial ${\bf c}(X)$ of degree $n - 1$ or less over GF($q$)  is a code polynomial if and only if ${\bf c}(X)$ has  $\beta_0, \beta_1, \cdots , \beta_{n-k-1}$ as roots, i.e., ${\bf c}(\beta _i) = 0$ for $0 \leq  i < n - k$. 

   In terms of the roots of  ${\bf g}(X)$, the parity-check matrix of ${\cal C}_c$ generated by  ${\bf g}(X)$  is conventionally given by the following $(n - k)\times n$ matrix over GF($q^m$):
\begin{equation}
{\bf V}=\left[ \begin{array}{c}
\tilde{\bf v}_0\\
\tilde{\bf v}_1\\
\vdots\\
\tilde{\bf v}_{n-k-1}
\end{array}
\right]=
\left[ \begin{array}{ccccc}
 1    &     \beta _0      &     \beta _0^2      &    \cdots   &      \beta_0 ^{n-1}  \\
 1    &     \beta _1      &     \beta _1^2      &    \cdots   &      \beta_1 ^{n-1}  \\
\vdots&\vdots&&\ddots&\vdots\\
 1    &     \beta _{n-k-1}      &     \beta _{n-k-1}^2      &    \cdots   &      \beta_{n-k-1} ^{n-1}  \\
\end{array}
\right].
\end{equation}
The rows are linearly independent over GF($q^m$). An $n$-tuple over GF($q$), ${\bf c} = (c_0, c_1, \cdots , c_{n-1})$, is a codeword in ${\cal C}_c$ if and only if ${\bf c} \cdot {\bf V}^{\sf T} = {\bf 0}$.  This is to say that the null space over GF($q$) of  $\bf V$  gives the cyclic code ${\cal C}_c$.  The null spaces of the circulant parity-check matrix  ${\bf H}_{circ}$  and ${\bf V}$ give the same code ${\cal C}_c$. The parity-check matrix of ${\cal C}_c$ in the form of (21) is commonly used for algebraic decoding,  such as the Berlekamp-Massey algorithm for decoding BCH and RS codes [6], [21]-[24]. 

   In the following, we develop some structural properties of the circulant parity-check matrix  ${\bf H}_{circ}  = \Psi ( \tilde{\bf h}  )$ of ${\cal C}_c$. One such structural property is that  ${\bf H}_{circ}$  can be expressed in terms of the circulants formed by the rows of  $\bf V$.  For $0 \leq  i < n - k$, let
\begin{equation}
                       \tilde{\bf v}_i = (1, \beta_i, \beta_i^2, \cdots,\beta_i ^{n-1}).
\end{equation}
be the $i$th row of  $\bf V$  and $\Psi (\tilde{\bf v}_i)$ be the $n\times n$  circulant over GF($q^m$) with $\tilde{\bf v}_i$ as the generator.  Since $\tilde{\bf v}_0, \tilde{\bf v}_1, \cdots, \tilde{\bf v}_{n-k-1}$ are linearly independent, the circulants, $\Psi (\tilde{\bf v}_0), \Psi (\tilde{\bf v}_1), \cdots , \Psi (\tilde{\bf v}_{n-k-1})$, are also linearly independent (i.e., for $a_i \in \mbox{GF}(q^m)$ with $0\leq i<n-k$, $a_0 \Psi(\tilde{\bf v}_0) + a_1 \Psi(\tilde{\bf v}_1) + \cdots+a_{n-k-1} \Psi(\tilde{\bf v}_{n-k-1}) \neq 0$ unless $a_0=a_1=\cdots=a_{n-k}=0$).

   For $0 \leq  i < n - k$, let
                   \begin{equation}
                      \tilde{\bf v}_i(X) = 1 + \beta_i X + \beta ^2 _i X^2 + \cdots + \beta ^{n-1} _i X^{n-1}                                  \end{equation}
be the polynomial representation of $i$th row  $\tilde{\bf v}_i$ of  $\bf V$  and
\begin{equation}
                    {\bf v}_i(X) = \beta ^{n-1}_i + \beta ^{n-2}_iX + \cdots + \beta _i X^{n-2} + X^{n-1}                             \end{equation}
be the reciprocal of  $\tilde{\bf v} _i(X)$.  For $0 \leq  i < n - k$, since
\[
                    X^n - 1 = (X - \beta_i)( \beta ^{n-1}_i + \beta ^{n-2}_iX + \cdots + \beta _iX^{n-2} + X^{n-1}),
\]
then we have
\begin{equation}
{\bf v}_i (X) = \frac{ X^n - 1}{X - \beta _i}= \beta ^{n-1} _i + \beta ^{n-2} _iX + \cdots + \beta _iX^{n-2} + X^{n-1}. 
\end{equation}
It follows from (9), (20), partial-fraction expansion and (25) that the parity-check polynomial ${\bf h}(X)$ of ${\cal C}_c$ can be expressed as a linear combination of ${\bf v}_i(X)$s as follows:
\begin{eqnarray}
{\bf h}(X) &=&  \frac{ X^n - 1}{\prod\limits_{0\leq i<n-k} X - \beta _i}\nonumber\\
&=&\sum\limits^{n-k-1}_{i=0} \frac{ \sigma_i(X^n - 1)}{X - \beta _i}\nonumber\\
&=&\sum\limits^{n-k-1}_{i=0} \sigma_i{\bf v}_i(X)
\end{eqnarray}
where for $0 \leq  i < n - k$,
\begin{equation}
          \sigma_i = \left(  \prod\limits_{j = 0, j\neq i}^{n-k-1} (\beta_i - \beta_j)\right)^{-1} .  
 \end{equation}
Since $\beta _0, \beta_1, \cdots, \beta_{n-k-1}$ are distinct nonzero elements of GF($q^m$), all the coefficients, $\sigma_0, \sigma_1, \cdots, \sigma_{n-k-1}$, of the linear sum of (26) are nonzero.  

Summarizing the above results, we have the following theorem.

 \begin{theorem}
Let ${\cal C}_c$ be an ($n$,$k$) cyclic code over GF($q$) generated by  ${\bf g}(X)$  which has the following nonzero elements of GF($q^m$), $\beta_0, \beta_1, \cdots, \beta_{n-k-1}$, as roots.  For $0 \leq  i < n - k$, let ${\bf v}_i(X) = \beta ^{n-1}_i + \beta ^{n-2}_iX + \cdots + \beta_i X^{n-2} + X^{n-1}$.  Then the parity-check polynomial ${\bf h}(X)$ of ${\cal C}_c$ can be expressed as a linear sum of ${\bf v}_0 (X), {\bf v}_1 (X), \cdots, {\bf v}_{n-k-1}(X)$ as follows:
\begin{equation}
 {\bf h}(X)=\sum\limits^{n-k-1}_{i=0} \sigma_i{\bf v}_i(X),
 \end{equation}
where, for $0 \leq  i < n - k$,
\begin{equation}
          \sigma_i = (  \prod\limits_{j = 0, j\neq i}^{n-k-1} (\beta_i - \beta_j))^{-1} .  
 \end{equation}
 \end{theorem}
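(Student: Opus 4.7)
The plan is to derive (28) directly from the definition $\mathbf{h}(X) = (X^n-1)/\mathbf{g}(X)$ via partial-fraction decomposition of $1/\mathbf{g}(X)$ over GF($q^m$), as the excerpt itself hints after (25). Everything needed is already available: the factorization (20) of $\mathbf{g}(X)$, the identity (25) expressing $(X^n-1)/(X-\beta_i)$ as $\mathbf{v}_i(X)$, and the fact that the roots $\beta_0,\ldots,\beta_{n-k-1}$ are distinct and nonzero in GF($q^m$) (guaranteed by $n\mid q^m-1$, $(n,q)=1$, and $g_0\neq 0$).

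First I would write, using (9) and (20),
\begin{equation*}
\mathbf{h}(X) \;=\; \frac{X^n-1}{\mathbf{g}(X)} \;=\; \frac{X^n-1}{\prod_{i=0}^{n-k-1}(X-\beta_i)}.
\end{equation*}
Because the $\beta_i$ are pairwise distinct, the denominator factors into pairwise coprime linear terms over GF($q^m$), so $1/\mathbf{g}(X)$ admits a unique partial-fraction expansion
\begin{equation*}
\frac{1}{\prod_{i=0}^{n-k-1}(X-\beta_i)} \;=\; \sum_{i=0}^{n-k-1} \frac{\sigma_i}{X-\beta_i}
\end{equation*}
in the rational function field GF($q^m$)$(X)$, with unique constants $\sigma_i\in\mbox{GF}(q^m)$.

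Second, I would determine the $\sigma_i$ by the standard cover-up trick: multiply both sides by $(X-\beta_i)$ to clear that particular pole, then evaluate at $X=\beta_i$. This yields
\begin{equation*}
\sigma_i \;=\; \left.\frac{1}{\prod_{j\neq i}(X-\beta_j)}\right|_{X=\beta_i} \;=\; \left(\prod_{j=0,\,j\neq i}^{n-k-1}(\beta_i-\beta_j)\right)^{-1},
\end{equation*}
which is exactly (29). The denominator is nonzero since the $\beta_i$ are distinct.

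Third, I would multiply the partial-fraction identity through by $X^n-1$ and apply (25) term by term:
\begin{equation*}
\mathbf{h}(X) \;=\; \sum_{i=0}^{n-k-1}\sigma_i\cdot\frac{X^n-1}{X-\beta_i} \;=\; \sum_{i=0}^{n-k-1}\sigma_i\,\mathbf{v}_i(X),
\end{equation*}
which is (28). I do not anticipate any real obstacle; the only points worth flagging explicitly are (i) that the hypothesis $(n,q)=1$ is what forces $X^n-1$ to be separable so that all poles are simple, legitimizing the partial-fraction form with constant numerators, and (ii) that although each $\sigma_i$ a priori lies in the extension GF($q^m$), the resulting linear combination $\sum_i\sigma_i\mathbf{v}_i(X)$ necessarily has coefficients in the ground field GF($q$) because it equals $\mathbf{h}(X)\in\mbox{GF}(q)[X]$. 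Nonvanishing of every $\sigma_i$, asserted at the end, is immediate from the product formula since the $\beta_i$ are distinct.
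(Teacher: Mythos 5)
Your proposal is correct and follows essentially the same route as the paper: the paper also obtains (28) by writing ${\bf h}(X)=(X^n-1)/\prod_i(X-\beta_i)$, applying partial-fraction expansion to get $\sum_i\sigma_i(X^n-1)/(X-\beta_i)$ with $\sigma_i=(\prod_{j\neq i}(\beta_i-\beta_j))^{-1}$, and invoking (25). Your added remarks on separability and on the coefficients landing back in GF($q$) are sound but not a departure from the paper's argument.
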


Replacing $X$ in (28) by $X^{-1}$, multiplying both sides by $X^{n-1}$, using (10) and (23), the expression of (28) can be put in the following form:
\begin{equation}
        X^{n-k-1} \tilde{\bf h}  (X) =  \sum\limits_{i=0}^{n-k-1}   \sigma _i \tilde{\bf v} _i (X).               
\end{equation}
The vector representation of the polynomial $X^{n-k-1} \tilde{\bf h}  (X)$ is
\begin{equation}
                  \tilde{\bf h}  ^{(n-k-1)} = (0, 0, \cdots, 0, h_k, h_{k-1}, \cdots , h_0),
\end{equation}
which is the $(n-k-1)$th right cyclic-shift of the vector representation $\tilde{\bf h}=(h_k, h_{k-1}, \cdots, h_0, 0, 0, \cdots, 0)$ of the reciprocal polynomial $\tilde{\bf h}(X)$ of the parity-check polynomial ${\bf h}(X)$ of ${\cal C}_c$.   Putting (30) in vector form, we have 
\begin{equation}
 \tilde{\bf h}^{(n-k-1)} = \sum\limits^{n-k-1}_{i=0} \sigma_i \tilde{\bf v}_i.  
\end{equation}
If we cyclically shift the components of all the vectors in (32) $k + 1$ places to the right, then we have
\begin{equation}
 \tilde{\bf h} = \sum\limits^{n-k-1}_{i=0} \sigma_i \tilde{\bf v}^{(k+1)}_i  
\end{equation}
where 
\begin{eqnarray}
               \tilde{\bf v}^{(k+1)}_i  &=& (\beta^{n-k-1} _i, \cdots , \beta^{n-1} _i, 1, \cdots, \beta^{n-k-2} _i)      \nonumber \\                 
                                   &=& \beta^{n-k-1} _i (1, \beta_i, \beta_i^2, \cdots, \beta_i^{n-1})      \nonumber \\                                       
                               &=& \beta^{n-k-1} _i \tilde{\bf v}_i, 
\end{eqnarray}
is  the $(k+1)$th right cyclic-shift of $\tilde{\bf v}_i$, for $0 \leq i < n-k$. It follows from (33) and (34) that we have 
\begin{equation}
 \tilde{\bf h} = \sum\limits^{n-k-1}_{i=0} \lambda_i \tilde{\bf v}_i  
 \end{equation}
where, for $0 \leq  i < n - k$,
\begin{equation}
              \lambda_i = \sigma_i \beta^{n-k-1} _i.                             
\end{equation}                
Then, it follows from (28), (35) and (36) that we have Theorem 3.


\begin{theorem}
 For an ($n$,$k$) cyclic code ${\cal C}_c$ over GF($q$) whose generator polynomial has elements $\beta_0, \beta_1, \cdots, \beta_{n-k-1}$ of GF($q^m$), as roots, then
 \begin{equation}
 \tilde{\bf h}(X) = \sum\limits^{n-k-1}_{i=0} \lambda_i \tilde{\bf v}_i (X) 
 \end{equation}  
where, for $0 \leq  i < n - k$,  
\begin{eqnarray}
                      \lambda _i & = &\sigma_i \beta ^{n-k-1}_i \nonumber\\
                               & =& \beta ^{n-k-1}_i \left(  \prod\limits_{j=0, j\neq i}^{n-k-1} (\beta _i - \beta _j) \right)^{-1} .    
\end{eqnarray}
The circulant parity-check matrix  ${\bf H}_{circ}$  of ${\cal C}_c$ given by (15) can be expressed as the following linear sum of circulants, $\Psi ( \tilde{\bf v} _0), \Psi ( \tilde{\bf v} _1), \cdots, \Psi ( \tilde{\bf v} _{n-k-1})$,
\begin{equation}
                 {\bf H}_{circ}  = \Psi ( \tilde{\bf h}  )  =  \sum\limits_{i=0}^{n-k-1}   \lambda _i \Psi ( \tilde{\bf v} _i).    
\end{equation}
where for $0 \leq i< n-k$, $\tilde{\bf v}_i=(1, \beta_i, \beta_i^2, \cdots, \beta_i ^{n-1})$.
 \end{theorem}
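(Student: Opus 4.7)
The plan is to obtain Theorem~3 by an elementary post-processing of Theorem~2: first apply a reciprocal-polynomial transform to convert the identity for ${\bf h}(X)$ into one for $\tilde{\bf h}(X)$, then perform a single cyclic shift to absorb an unwanted factor of $X^{n-k-1}$, and finally promote the resulting vector identity to an identity between circulants.

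First I would take the expression ${\bf h}(X) = \sum_i \sigma_i {\bf v}_i(X)$ from Theorem~2, substitute $X \mapsto X^{-1}$, and multiply through by $X^{n-1}$. Using $\deg {\bf h}(X) = k$ together with the reciprocal identity $X^{k}{\bf h}(X^{-1}) = \tilde{\bf h}(X)$ on the left, and $\deg {\bf v}_i(X) = n-1$ together with $X^{n-1}{\bf v}_i(X^{-1}) = \tilde{\bf v}_i(X)$ on the right, one arrives at $X^{n-k-1}\tilde{\bf h}(X) = \sum_{i=0}^{n-k-1} \sigma_i \tilde{\bf v}_i(X)$. As a vector identity over GF($q^m$), this places the $(n-k-1)$-th right cyclic shift of $\tilde{\bf h}$ on the left-hand side while $\tilde{\bf v}_i$ appears unshifted on the right.

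To recover $\tilde{\bf h}$ itself, I would cyclically shift both sides by an additional $k+1$ places to the right, so that the shifts on the left compose to a shift by $n$, i.e., the identity permutation. The key computation --- essentially the only step that is not pure bookkeeping and the step that produces the factor $\beta_i^{n-k-1}$ appearing in $\lambda_i = \sigma_i\beta_i^{n-k-1}$ --- is showing that shifting $\tilde{\bf v}_i = (1,\beta_i,\beta_i^2,\ldots,\beta_i^{n-1})$ by $k+1$ places to the right yields $\beta_i^{n-k-1}\tilde{\bf v}_i$. This follows because $\tilde{\bf v}_i$ is a geometric progression of common ratio $\beta_i$ and $\beta_i^n = 1$, so after the shift each coordinate of the result factors as $\beta_i^{n-k-1}$ times the corresponding coordinate of the original. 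Substituting into the shifted identity yields the polynomial equation $\tilde{\bf h}(X) = \sum_{i} \lambda_i \tilde{\bf v}_i(X)$ claimed in the theorem, with $\lambda_i$ as stated.

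To lift this to the circulant identity $\Psi(\tilde{\bf h}) = \sum_i \lambda_i \Psi(\tilde{\bf v}_i)$, I would observe that for every $j$, the $j$-th row of $\Psi(\tilde{\bf h})$ is the $j$-th right cyclic shift of $\tilde{\bf h}$, and likewise the $j$-th row of $\Psi(\tilde{\bf v}_i)$ is the $j$-th shift of $\tilde{\bf v}_i$. Since cyclic shifting is linear and commutes with scalar multiplication over GF($q^m$), applying it $j$ times to both sides of $\tilde{\bf h} = \sum_i \lambda_i \tilde{\bf v}_i$ gives the row-$j$ equality; collecting all rows produces the circulant identity of the theorem. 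The only nontrivial obstacle in the whole argument is the geometric-progression shift calculation $\tilde{\bf v}_i^{(k+1)} = \beta_i^{n-k-1}\tilde{\bf v}_i$; every other step is a routine manipulation of shifts, reciprocals, and linear combinations already prepared by Theorem~2.
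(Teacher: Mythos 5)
Your proposal is correct and follows essentially the same route as the paper: the reciprocal substitution $X\mapsto X^{-1}$ with multiplication by $X^{n-1}$ to get $X^{n-k-1}\tilde{\bf h}(X)=\sum_i\sigma_i\tilde{\bf v}_i(X)$, the additional right cyclic shift by $k+1$ places, and the geometric-progression identity $\tilde{\bf v}_i^{(k+1)}=\beta_i^{n-k-1}\tilde{\bf v}_i$ (the paper's Eqs.\ (30)--(36)) are exactly the steps used there. Your final lift to the circulant identity via linearity of cyclic shifting is the step the paper leaves implicit, and you fill it in correctly.
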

 The circulants, $\Psi (\tilde{{\bf v}}_{0}), \Psi (\tilde{{\bf v}}_1), \ldots, \Psi (\tilde{{\bf v}}_{n-k-1})$, are called the \emph{root circulants} of the cyclic code ${\cal C}_c$.  It follows from (9), (10), (23) and (37) that the coefficients of the parity-check polynomial ${\bf h}(X)$ are:
 \begin{eqnarray}    
   {h}_j &=&    \sum\limits^{n-k-1}_{i = 0} \lambda_i\beta^{j}_i,  \mbox{ for } 0\leq j\leq k,\\
   {h}_j &=&  0,  \mbox{ for } k < j < n.
 \end{eqnarray}

\subsection{Characterization of Cyclic Descendants of a Cyclic Code}

   In the following, we characterize the roots of the generator polynomial of a cyclic descendant of an ($n$,$k$) cyclic code ${\cal C}_c$ over GF($q$) whose parity-check matrix is given in terms of roots of the form given by (21). Consider the circulant $\Psi(\tilde{\bf v} _{i})$ with $\tilde{\bf v}_i = (1, \beta_i, \beta_i^2, \cdots, \beta_i^{n-1})$ as the generator.  Decompose $\Psi(\tilde{\bf v}_i)$ into a $c\times c$ array of $l\times l$ circulants.  The descendant circulants in the first row of $\Psi(\tilde{\bf v}_i)$ are $\Psi(\tilde{\bf v}_{i,0}), \Psi(\tilde{\bf v}_{i,1}), \cdots, \Psi(\tilde{\bf v}_{i,c-1})$ where for $0 \leq j < c$,
   \begin{eqnarray}
             \tilde{\bf v}_{i,0} &=& (1, \beta^c _i, \beta^{2c} _i, \cdots, \beta^{(l-1)c} _i),\\
              \tilde{\bf v}_{i,j} &=& \beta^j _i \tilde{\bf v}_{i,0}.                            
              \end{eqnarray}
              If follows from (42) and (43) that we have
              \begin{equation}
              \Psi(\tilde{\bf v}_{i,j}) = \Psi(\beta^j _i\tilde{\bf v}_{i,0}) = \beta^j _i \Psi(\tilde{\bf v}_{i,0}).           
              \end{equation}
The equality of (44) implies that if $\Psi(\tilde{\bf v}_{i,0})$ is known, all the descendant circulants, $\Psi(\tilde{\bf v}_{i,j})$'s and $\Psi(\tilde{\bf v}^{(1)}_{i,j})$'s can be constructed from $\Psi(\tilde{\bf v}_{i,0})$ using (44).
    
    It follows from Theorem 3 that the circulant generated by $\tilde{\bf h}_j$ is given as follows:
\begin{equation}
\Psi ( \tilde{\bf h}_j)  =  \sum\limits^{n-k-1}_{i=0}   \lambda_i \Psi ( \tilde{\bf v} _{i,j}) =   \sum\limits^{n-k-1}_{i=0}     \lambda_i\beta^j_i \Psi ( \tilde{\bf v} _{i,0}),    
\end{equation}
where  $\tilde{\bf h}  _j$, the $j$th cyclic section of  $\tilde{\bf h}$, is given by (17).  The null space of $\Psi ( \tilde{\bf h}_j)$ gives a cyclic code ${\cal C}_j^{(1)}$ over GF($q$) of length $l$, a type-1 descendant of ${\cal C}_c$. 
 
 For $0 \leq  i_1, i_2 < n - k$, suppose there exists an integer $t$ with $0 < t < c$ such that $\beta _{i_{2}} = \alpha^{tl}\beta_{i_1}$. In this case, since $\alpha^{cl} = \alpha^n = 1$, we must have $\beta _{i_{1}}^c = \beta _{i_2}^c$.  We say that $\beta _{i_1}$ and $\beta _{i_2}$ are \emph{equal in $c$th power}. Then, it follows from (42) and (44) that  $\tilde{\bf v} _{i_1,0} =  \tilde{\bf v} _{i_2,0}$ and $\Psi ( \tilde{\bf v} _{i_1,0}) = \Psi ( \tilde{\bf v} _{i_2,0})$. Let $m$ be the number of distinct circulants among $\Psi( \tilde{\bf v} _{0,0}), \Psi ( \tilde{\bf v} _{1,0}), \cdots , \Psi ( \tilde{\bf v} _{n-k-1,0})$.  Then, we can partition the $n - k - 1$ roots, $\beta _0, \beta_1, \cdots , \beta_{n - k - 1}$, into $m$ \emph{equal classes} in $c$th power.  For $0 \leq  e < m$, let
\begin{equation}
                       \Omega_e = \{\beta_{e,0}, \beta_{e,1}, \cdots, \beta_{e,r_e-1} \}
\end{equation}
be the $e$th class of equal roots in $c$th power where each $\beta _{e,f}$ in $\Omega_e$ is one of the roots, $\beta _0, \beta _1, \cdots, \beta_{n - k - 1}$, and $r_e$ is the number of equal roots in $\Omega_e$. It is clear that $1 \leq r_e \leq c$.  For $0 \leq f < r_e$, let
\begin{equation}
                       \tilde{\bf v}^*_{e,f} = (1, \beta^c _{e,f}, \beta^{2c}_{e,f}, \cdots, \beta^{(l-1)c}_{e,f}).   
                       \end{equation}
Since $\beta^c _{e,0} = \beta^c _{e,1} = \cdots = \beta^c _{e,r_e-1}$, we have $\tilde{\bf v}^*_{e,0} = \tilde{\bf v}^*_{e,1} =\cdots= \tilde{\bf v}^*_{e,r_e-1}$ and $\Psi(\tilde{\bf v}^* _{e,0}) = \Psi(\tilde{\bf v}^* _{e,1}) = \cdots = \Psi(\tilde{\bf v}^*_{e,r_e-1})$.  For $0 \leq f < r_e$, $\Psi(\tilde{\bf v}^*_{e,f})$ is one of the circulants $\Psi(\tilde{\bf v}_{0,0}), \Psi(\tilde{\bf v}_{1,0}), \cdots, \Psi(\tilde{\bf v}_{n-k-1,0})$ in the second sum of (45).  For $0 \leq e < m$, let 
\begin{equation}
                      L = \{\lambda_{e,0}, \lambda_{e,1}, \cdots, \lambda_{e,r_e-1} \}  
                      \end{equation}
be the set of coefficients, $\lambda_i$, of the circulants, $\Psi(\tilde{\bf v}^* _{e,0}), \Psi(\tilde{\bf v}^* _{e,1}), \cdots, \Psi(\tilde{\bf v}^* _{e,r_e-1})$, in the second sum of (45).  Grouping the identical circulants in the second sum of (45) together and for each $e$ with $0 \leq e < m$, using $\Psi(\tilde{\bf v}^* _{e,0})$ to represent the $e$th group of identical circulants, we have 
\begin{eqnarray}
                \Psi(\tilde{\bf h}_j)  &=&  \sum\limits^{n-k-1} _{i=0}  \lambda_i \beta^j _i \Psi(\tilde{\bf v}_{i,0}) \nonumber\\                
                            & =&  \sum\limits^{m-1}_{e=0}  \lambda^*_{e,j} \Psi(\tilde{\bf v}^*_{e,0}), 
                                \end{eqnarray}
where
\begin{equation}
              \lambda^* _{e,j} =      \sum\limits^{r_e -1}_{f=0}  \lambda_{e,f} \beta^j _{e,f}.                                   
                               \end{equation}
From (49), we see that the circulant $\Psi(\tilde{\bf h} _j)$ with generator $\tilde{\bf h}_j$ is a linear sum of the $m$ circulants, $\Psi(\tilde{\bf v}^* _{0,0}), \Psi(\tilde{\bf v}^* _{1,0}), \cdots, \Psi(\tilde{\bf v}^*_{m-1,0})$, where for $0 \leq e < m$, the circulant $\Psi(\tilde{\bf v}^* _{e,0})$ is generated by $\tilde{\bf v}^* _{e,0} = (1, \beta^c _{e,0}, \beta^{2c} _{e,0}, \cdots, \beta^{(l-1)c} _{e,0})$.  Then, it follows from (21), (28), (37), (38), (42), (49) and (50) that we have the following theorem.

\begin{theorem}
The generator polynomial ${\bf g}^{(1)}_j(X)$ of the type-1 cyclic descendant code ${\cal C}^{(1)}_j$ of the cyclic mother code ${\cal C}_c$ given by the null space of the $l\times l$ circulant $\Psi(\tilde{\bf h}_j)$ has $\beta^c _{e,0}$, $0\leq e<m$, as a root if and only if $\lambda^* _{e,j}  \neq 0$.
\end{theorem}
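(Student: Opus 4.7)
The plan is to read off the roots of ${\bf g}^{(1)}_j(X)$ directly from expansion (49), by turning the null-space condition for $\Psi(\tilde{\bf h}_j)$ into a linear system on the polynomial values ${\bf c}(\beta^c_{e,0})$, which is then solved by a Vandermonde argument. Set $\gamma_e := \beta^c_{e,0}$ for brevity. Since each $\beta_{e,0}$ is a root of ${\bf g}(X)$ and hence of $X^n - 1$, and $n = cl$, each $\gamma_e$ is an $l$-th root of unity. By the definition of the equal classes in $c$-th power, the elements $\gamma_0, \gamma_1, \ldots, \gamma_{m-1}$ are pairwise distinct, so in particular $m \leq l$.

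First I would pin down the row structure of the rank-one circulant $\Psi(\tilde{\bf v}^*_{e,0})$. A direct computation shows that cyclically shifting $\tilde{\bf v}^*_{e,0} = (1, \gamma_e, \gamma_e^2, \ldots, \gamma_e^{l-1})$ one place to the right produces $\gamma_e^{-1} \tilde{\bf v}^*_{e,0}$, because $\gamma_e^l = 1$. Iterating, the $s$-th row of $\Psi(\tilde{\bf v}^*_{e,0})$ equals $\gamma_e^{-s} \tilde{\bf v}^*_{e,0}$. Consequently, the $s$-th row of $\Psi(\tilde{\bf h}_j) = \sum_e \lambda^*_{e,j} \Psi(\tilde{\bf v}^*_{e,0})$ is $\sum_{e=0}^{m-1} \lambda^*_{e,j} \gamma_e^{-s} \tilde{\bf v}^*_{e,0}$.

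Next I would express orthogonality in terms of polynomial evaluations. Writing ${\bf c} = (c_0, \ldots, c_{l-1})$ and ${\bf c}(X) = c_0 + c_1 X + \cdots + c_{l-1} X^{l-1}$, the inner product ${\bf c} \cdot (\tilde{\bf v}^*_{e,0})^{\sf T}$ equals ${\bf c}(\gamma_e)$. Therefore ${\bf c}$ lies in the null space of $\Psi(\tilde{\bf h}_j)$ if and only if
\[
\sum_{e=0}^{m-1} \lambda^*_{e,j}\, {\bf c}(\gamma_e)\, \gamma_e^{-s} = 0, \qquad s = 0, 1, \ldots, l-1.
\]
Since the $\gamma_e$'s are distinct and $m \leq l$, the $l \times m$ coefficient matrix with $(s,e)$-entry $\gamma_e^{-s}$ contains an $m \times m$ nonsingular Vandermonde submatrix; hence $\lambda^*_{e,j}\, {\bf c}(\gamma_e) = 0$ for every $e$.

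Finally I would translate this back to the roots of ${\bf g}^{(1)}_j(X)$. Setting $E_j = \{e : \lambda^*_{e,j} \neq 0\}$, the preceding step shows that the null space of $\Psi(\tilde{\bf h}_j)$ is exactly $\{{\bf c} : {\bf c}(\gamma_e) = 0 \text{ for all } e \in E_j\}$; no constraint is placed on ${\bf c}(\gamma_e)$ when $e \notin E_j$. For $e \in E_j$, every codeword vanishes at $\gamma_e = \beta^c_{e,0}$, so $\gamma_e$ is a root of ${\bf g}^{(1)}_j(X)$. For $e \notin E_j$, we have $|E_j| \leq m - 1 < l$, and the polynomial $p(X) = \prod_{e' \in E_j}(X - \gamma_{e'})$ of degree less than $l$ defines a length-$l$ vector in the null space of $\Psi(\tilde{\bf h}_j)$ over GF$(q^m)$ with $p(\gamma_e) \neq 0$; since this null space is the GF$(q^m)$-span of any GF$(q)$-basis of ${\cal C}^{(1)}_j$, some basis element must already fail to vanish at $\gamma_e$, so $\gamma_e$ is not a root of ${\bf g}^{(1)}_j(X)$. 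The main obstacle I expect is this last step of descending from GF$(q^m)$ back to GF$(q)$, but it is routine because $\Psi(\tilde{\bf h}_j)$ has entries in GF$(q)$ and so the two null spaces share the same rank.
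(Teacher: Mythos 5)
Your proof is correct and follows the same route the paper intends: it starts from the decomposition $\Psi(\tilde{\bf h}_j)=\sum_{e}\lambda^*_{e,j}\Psi(\tilde{\bf v}^*_{e,0})$ of (49) and reads off the roots of ${\bf g}^{(1)}_j(X)$ from the nonzero coefficients. The paper merely asserts that the theorem ``follows from'' (21), (28), (37), (38), (42), (49) and (50) without writing out the argument, so your Vandermonde step forcing $\lambda^*_{e,j}\,{\bf c}(\beta^c_{e,0})=0$ and the descent from GF($q^m$) back to GF($q$) are exactly the details the paper leaves implicit, and both are handled correctly.
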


   Theorem 4 characterizes the roots of the generator polynomial of a type-1 cyclic descendant of a given cyclic mother code ${\cal C}_c$.

\begin{example}
 Let $\alpha$ be a primitive element of GF($2^{11}$).  Consider the binary primitive (2047,2025) BCH code whose generator polynomial ${\bf g}(X)$ has $\alpha, \alpha^2, \alpha^3, \alpha^4$ and their conjugates as roots. The length 2047 of the code can be factored as a product of $c = 89$ and $l = 23$. The $2047\times 2047$ circulant parity-check matrix ${\bf H}_{circ}$ of this BCH code can be decomposed into an $89\times 89$ array $\Phi (\tilde{\bf h})$ of circulants of size $23\times 23$ by column and row permutations $\pi$ defined by (3).  The null space of each $23\times 23$ descendant circulant of $\Phi (\tilde{\bf h})$ gives the (23,12) Golay code with generator polynomial $1 + X + X^5 + X^6 + X^7 + X^9 + X^{11}$ [6], which has $\beta=\alpha^{89}, \beta^2, \beta^3, \beta^4$ and their conjugates as roots.
\twotriangle  \end{example}

   The next theorem characterizes a type-2 cyclic descendant ${\cal C}_k ^{(2)}$ of ${\cal C}_c$ given by the null space of the parity-check matrix ${\bf H}_{col,k}$.

\begin{theorem}  For $1 \leq  k < c$, let $i_1, i_2, \ldots, i_k$ be a set of distinct integers such that $0 \leq  i_1, i_2, \ldots, i_k < c$.   For $1 \leq  t \leq  k$, let ${\bf g}_{i_t}^{(1)} (X)$ be the generator polynomial of  $i_t$-th type-1 cyclic descendant code ${\cal C}_{i_t}^{(1)}$  of ${\cal C}_c$ given by the null space of $i_t$-th descendant circulant $\Psi ({\bf w}_{i_t})$ of ${\bf H}_{circ}  = \Psi ({\bf w}) $.  Then the generator polynomial ${\bf g}_k ^{(2)}(X)$ of the type-2 cyclic descendant code ${\cal C}_k ^{(2)}$ of ${\cal C}_c$ given by the null space of  the parity-check matrix ${\bf H}_{col,k}$ of (8) is  the {least common multiple} of ${\bf g}_{i_1} ^{(1)} (X), {\bf g}_{i_2}^{(1)} (X), \ldots,  {\bf g}_{i_k}^{(1)} (X)$, i.e.,
\begin{equation}
                     {\bf g}_k^{(2)} (X) = LCM\{ {\bf g}_{i_t}^{(1)} (X),  0 \leq  t < k\}.      
                     \end{equation}        
                     The roots of ${\bf g}_k^{(2)} (X)$ is the union of the roots of ${\bf g}_{i_1} ^{(1)} (X), {\bf g}_{i_2}^{(1)} (X), \ldots,  {\bf g}_{i_k}^{(1)} (X)$.
\end{theorem}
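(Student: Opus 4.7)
The plan is to reduce the statement to the standard fact that the intersection of cyclic codes of a common length is cyclic, with generator polynomial equal to the least common multiple of the individual generator polynomials.

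First I would unpack the definition of ${\cal C}_k^{(2)}$ from equation (8): a length-$l$ vector ${\bf c}$ lies in the null space of ${\bf H}_{col,k}$ if and only if ${\bf c}\cdot \Psi({\bf w}_{i_t})^{\sf T} = {\bf 0}$ for every $t$ with $1\le t\le k$. Since the null space of $\Psi({\bf w}_{i_t})$ is by definition the type-1 cyclic descendant ${\cal C}_{i_t}^{(1)}$, this is exactly the statement
\begin{equation*}
{\cal C}_k^{(2)} \;=\; \bigcap_{t=1}^{k} {\cal C}_{i_t}^{(1)}.
\end{equation*}
This is the key structural identification, and once it is in hand the rest of the argument is a standard fact about cyclic codes.

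Next I would invoke the generator-polynomial characterization of cyclic codes. Each ${\cal C}_{i_t}^{(1)}$ is a cyclic code of length $l$ over GF($q$), so in polynomial form ${\bf c}(X)\in {\cal C}_{i_t}^{(1)}$ precisely when ${\bf g}_{i_t}^{(1)}(X)\mid {\bf c}(X)$. Intersecting over $t=1,\ldots,k$ shows that ${\bf c}(X)\in {\cal C}_k^{(2)}$ iff every ${\bf g}_{i_t}^{(1)}(X)$ divides ${\bf c}(X)$, which is equivalent to
\begin{equation*}
\mathrm{LCM}\bigl\{{\bf g}_{i_t}^{(1)}(X): 1\le t\le k\bigr\}\;\Big|\;{\bf c}(X).
\end{equation*}
Hence ${\cal C}_k^{(2)}$ is itself a cyclic code of length $l$ and its generator polynomial is the monic polynomial $\mathrm{LCM}\{{\bf g}_{i_t}^{(1)}(X): 1\le t\le k\}$, which is precisely (51). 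To conclude the statement about roots, I would use the hypothesis (inherited from Section III.A) that $(n,q)=1$, so that $X^n-1$ splits into distinct linear factors over GF($q^m$); since each ${\bf g}_{i_t}^{(1)}(X)$ divides $X^l-1$ and $l\mid n$, each of these generator polynomials has distinct roots, and therefore the LCM is simply the product of the distinct linear factors appearing in any of the ${\bf g}_{i_t}^{(1)}(X)$. Equivalently, the root set of ${\bf g}_k^{(2)}(X)$ is the union of the root sets of ${\bf g}_{i_1}^{(1)}(X),\ldots,{\bf g}_{i_k}^{(1)}(X)$.

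There is no real obstacle: the only subtlety is the clean identification of ${\cal C}_k^{(2)}$ as the intersection of the ${\cal C}_{i_t}^{(1)}$, and this is immediate from the block-row structure of ${\bf H}_{col,k}$. Everything else is a direct application of the divisibility characterization of cyclic codes and the separability of roots guaranteed by $(n,q)=1$.
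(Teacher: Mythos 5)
Your proof is correct: the identification ${\cal C}_k^{(2)}=\bigcap_{t=1}^{k}{\cal C}_{i_t}^{(1)}$ from the block-row structure of ${\bf H}_{col,k}$, the divisibility characterization giving the LCM as generator polynomial, and the use of $(n,q)=1$ (hence $(l,q)=1$, since $l\mid n$) to conclude that the root set of the LCM is the union of the individual root sets are all sound. The paper states this theorem without supplying a proof, and your argument is exactly the standard one it implicitly relies on.
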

   Consider the parity-check matrix ${\bf H}_{circ,mask}^{(3)}$ of a type-3 cyclic descendent ${\cal C}_{mask}^{(3)}$ of ${\cal C}_c$.  Express each row of ${\bf H}_{circ,mask}^{(3)}$ as a polynomial of degree $n-1$ or less with the leftmost component as the constant term and the rightmost component as the coefficient of the term of degree $n - 1$.  This polynomial is call a row polynomial.  Find the greatest common divisor $\tilde{\bf h}_{ mask}^{(3)}(X)$ of all the \emph{row polynomials}.  Let ${\bf h}_{mask}^{(3)}(X)$ be the reciprocal polynomial of $\tilde{\bf h}_{mask} ^{(3)}(X)$. Then the generator polynomial of ${\cal C}_{mask}^{(3)}$ is given by
 \begin{equation}
                      {\bf g}_{mask}^{(3)}(X) = (X^n - 1)/{\bf h}_{mask}^{(3)}(X).
\end{equation}

\section{Decomposition of Cyclic Euclidean Geometry LDPC Codes}

In this section, we give constructions of new cyclic and QC-LDPC codes by decomposing the circulant parity-check matrices of the cyclic Euclidean geometry (EG) LDPC codes.   
\subsection{Cyclic Descendants of Two-Dimensional EG-LDPC Codes}

   Consider a two-dimensional Euclidean geometry EG(2,$q$) over the field GF($q$), where $q$ is a power of a prime [6], [22], [26].  This geometry consists of $q^2$ points and  $q(q + 1)$ lines.  A point in EG(2,$q$) is simply a two-tuple ${\bf a} = (a_0, a_1)$ over GF($q$) and the zero two-tuple (0,0) is called the origin.  A line in EG(2,$q$) is simply a one-dimensional subspace, or its coset, of the vector space of all the $q^2$ two-tuples over GF($q$).   A line contains $q$ points. If a point $\bf a$ is on a line $\cal L$ in EG(2,$q$), we say the line $\cal L$ passes through $\bf a$.   Any two points in EG($2$,$q$) are connected by one and only one line. For every point $\bf a$ in EG(2,$q$), there are $(q + 1)$ lines that intersect at (or pass through) the point $\bf a$.  These lines are said to form an \emph{intersecting bundle} of lines at the point $\bf a$. For each line in EG(2,$q$), there are $q-1$ lines parallel to it.  Two parallel lines do not have any point in common. The $q(q + 1)$ lines in EG($2$,$q$) can be partitioned into $(q + 1)$ groups, each group consists of $q$ parallel lines. A group of $q$ parallel lines is called a \emph{parallel bundle}.

   The field GF($q^2$), as an extension field of the ground field GF($q$), is a realization of EG(2,$q$).  Let $\alpha$  be a primitive element of GF($q^2$). Then, the powers of $\alpha$, $\alpha ^{-\infty} \triangleq 0,\; \alpha ^0 = 1,\;  \alpha ,\; \alpha ^2,\; \cdots ,\; \alpha ^{q^2-2}$, give all the $q^2$ elements of GF($q^2$) and they represent the  $q^2$ points of  EG(2,$q$).  The 0-element represents the origin of EG(2,$q$).

    Let EG*(2,$q$) be the subgeometry obtained from EG(2,$q$) by removing the origin and the $q+1$ lines passing through the origin.  This subgeometry consists of $q^2 - 1$ non-origin points and $q^2 - 1$ lines not passing through the origin.  Each line in EG*($2$,$q$) has only $q - 2$ lines parallel to it.  Hence, each parallel bundle of lines in EG*($2$,$q$) consists of $q -1$ parallel lines not passing through the origin.  Each intersecting bundle of lines at a non-origin point consists of $q$ lines.  Let ${\cal L} = \{\alpha ^{j _1},\;  \alpha ^{j _2},\; \cdots,\; \alpha ^{j _{q}}\}$ be a line in EG*($m$,$q$).  For $0 \leq  i < q^2 -1$, let $\alpha^i {\cal L} = \{ \alpha ^{j_1+i},\;  \alpha ^{j_2 +i},\; \cdots,\; \alpha ^{j_q +i} \}$.
Then, $\alpha^i {\cal L}$ is also a line in EG*(2,$q$) and $\alpha ^0 {\cal L},\; \alpha {\cal L},\; \cdots,\; \alpha ^{q^2 -2} {\cal L}$ give all the $q^2 - 1$ lines in EG*(2,$q$).  This structure of lines is called \emph{cyclic structure} [6], [7].

    Let ${\cal L}$ be a line EG*(2,$q$).  Based on $\cal L$, we define the following $(q^2 -1)$-tuple over GF(2), 
\[
                   {\bf v}_{\cal L} = (v_0, v_1, \cdots ,  v_{q^2 -2}), 
\]
whose components correspond to the $q^2 - 1$ non-origin points $\alpha ^0, \alpha , \alpha ^2, \cdots, \alpha ^{q^2-2}$ of  EG*(2,$q$), where $v_j = 1$ if $\alpha ^j$ is a point on $\cal L$ and $v_j = 0$ otherwise.  It is clear that the weight of ${\bf v}_{\cal L}$ is $q$. This $(q^2-1)$-tuple ${\bf v}_{\cal L}$ is called the \emph{incidence vector} of the line $\cal L$ [5], [6].  Due to the cyclic structure of the lines in EG*(2,$q$) (i.e., if $\cal L$ is a line, $\alpha {\cal L}$ is also a line), the incidence vector ${\bf v}_{\alpha {\cal L}}$ of the line $\alpha {\cal L}$ is the cyclic-shift (one place to the right) of the incidence vector ${\bf v}_{\cal L}$ of the line $\cal L$.

    Let $n = q^2 - 1$. Form an $n\times n$ matrix ${\bf H}_{EG}$ over GF(2) with the incidence vectors of the $n$ lines, $\alpha ^0{\cal L}, \alpha {\cal L}, \cdots, \alpha ^{n-1} {\cal L}$, of EG*(2,$q$) as rows. Then, ${\bf H}_{EG}$  is an $n\times n$ circulant with both column and row weights $q$.  ${\bf H}_{EG}$ can be obtained by using the incidence vector ${\bf v}_{\cal L}$ of the line $\cal L$ as the generator and cyclically shifting ${\bf v}_{\cal L}$ $n - 1$ times.  Since two lines in EG*(2,$q$) have at most one \emph{point} in common, their incidence vectors have at most one position where they both have 1-components.  Consequently, ${\bf H}_{EG}$ satisfies the RC-constraint and its null space gives a cyclic EG-LDPC code ${\cal C}_{EG}$ [5], [6], [15] whose Tanner graph is free of cycles of length 4 and hence has a girth of at least 6.  The RC-constraint on the parity-check matrix ${\bf H}_{EG}$ ensures that the minimum weight (or distance) of  ${\cal C}_{EG}$ is at least $q + 1$.  To find the generator polynomial ${\bf g}_{EG}(X)$ of ${\cal C}_{EG}$, we express each row of ${\bf H}_{EG}$  as a polynomial over GF(2) of  degree $n - 1$ or less with leftmost entry as the constant term and rightmost entry as the coefficient of $X^{n-1}$. Let $\tilde{\bf h}(X)$ be greatest common divisor of the row polynomials of ${\bf H}_{EG}$.  The reciprocal ${\bf h}_{EG}(X)$ of $\tilde{\bf h}_{EG}(X)$ is the parity-check polynomial.  Then, the generator polynomial ${\bf g} _{EG}(X) = (X^n - 1)/{\bf h}_{EG}(X)$.

   For the special case with $q = 2^s$, the rank of ${\bf H}_{EG}$ is $3^s - 1$ [27], [6], [9] and the minimum weight of ${\cal C}_{EG}$ is exactly $2^s + 1$ [6], [28].  An integer $h$ with $0 \leq  h < 2^{2s}$, can be expressed in radix-$2^s$ form as follows: $h = c_0 + c_1 2^s$, where $0 \leq  c_0, c_1 < 2^s$.  The sum $W_{2^s} (h) = c_0  +c_1$ is called the $2^s$-weight of $h$.  For any non-negative integer $l$, let $h^{(l)}$ be the remainder resulting from dividing $2^{l}h$ by $2^{2s} - 1$.  Then $0 \leq   h^{(l)} < 2^{2s} - 1$. The radix-$2^s$ form and $2^s$-weight of $h^{(l)}$ are $h^{(l)} = c^{(l)} _0 + c^{(l)} _1 2^s$ and $W_{2^s}(h^{(l)}) = c^{(l)}_0 + c^{(l)} _1$, respectively.  Then, $\alpha^{h}$ is root of the generator polynomial ${\bf g}_{EG}(X)$ of ${\cal C}_{EG}$ if and only if [6], [28]
\begin{equation}
                           0 < \max\limits_{       0 \leq  l < s} W_{2^s}(h^{(l)}) < 2^s.            
\end{equation}                
The smallest integer that does not satisfy the condition given by (53) is $2^s +1$.  Hence, ${\bf g}_{EG}(X)$ has the following consecutive powers of $\alpha , \alpha ^2, \cdots, \alpha ^{2^s}$, as roots. 

   Constructions of cyclic LDPC codes based on finite geometries, Euclidean and projective, were first presented in [5].  In [5], the authors showed that cyclic finite geometry (FG) codes perform very well over the AWGN channel with iterative decoding based on belief propagation (IDBP) using the sum-product algorithm (SPA) and the decoding of these codes converges very fast. 

   Let $c$ and $l$ be two proper factors of $n$ such that $n = c\cdot l$.  Decompose the $n\times n$ circulant parity-check matrix ${\bf H}_{EG}$ into a $c\times c$ array $\pi({\bf H}_{EG}) = \pi( \Psi ({\bf v}_{\cal L}))$ of circulants of size $l\times l$ in the form of (4) through column and row permutation $\pi$ defined by (3).  Note that every row of $\pi ({\bf H}_{EG})$, as a $(q^2- 1)\times (q^2- 1)$ matrix, still corresponds to a line in EG*(2,$q$) not passing through the origin of EG(2,$q$).  Since ${\bf H}_{EG}$ satisfies the RC-constraint, each descendant circulant in $\pi({\bf H}_{EG})$ also satisfies the RC-constraint.
   
   Based on the array $\pi ({\bf H}_{EG} )$ of circulants, three types of cyclic descendant LDPC-codes of the cyclic EG-LDPC code ${\cal C}_{EG}$   can be constructed.  Note that the first row of ${\bf H}_{EG}$  is not the parity-check vector.  For $q = 2^s$, the roots of the generator polynomial ${\bf g}_{EG}(X)$ of ${\cal C}_{EG}$ can be determined from (53).  Then, it follows from Theorems 4 and 5, the roots of the generator polynomials of a type-1 and type-2 cyclic descendant codes can be determined.  QC-EG-LDPC codes can also be constructed by taking the null spaces of subarrays of $\pi ({\bf H}_{EG} )$.

For $q = 2^s$, let $2^s - 1 = c\cdot l$.  Let ${\bf v} = (v_0, v_1, . . . , v_{2^s -2})$ be the incidence vector of a chosen line in EG(2,$2^s$) not passing through the origin as the generator of the $(2^s - 1)\times (2^s - 1)$ circulant ${\bf H}_{EG} = \Psi ({\bf v})$ over GF(2).  For $0 \leq  i < c$, let ${\bf v}_i = (v_i, v_{c+i}, . . ., v_{(l-1)c+i})$ be a cyclic section of $\bf v$. The ranks of ${\bf H}_{EG}  = \Psi ({\bf v})$ and its type-1 circulant descendant $\Psi ({\bf v}_i)$ and type-3 circulant descendant ${\bf H}_{EG,mask} = \Psi ({\bf v})_{mask}$ (masked circulant of ${\bf H}_{EG} = \Psi ({\bf v}))$ as defined in Section II.B can be determined easily.  Let $\alpha$  be a primitive element of GF($2^s$). Define the following two $(2^s - 1)\times (2^s -1)$ matrices over GF($2^s$): ${\bf V} = [\alpha ^{-ij}]$ and ${\bf V}^{-1} = [\alpha ^{ij}]$, $0 \leq  i, j < 2^s - 1$.  Both $\bf V$ and ${\bf V}^{-1}$ are \emph{Vandermonde matrices} [23], [24] and non-singular.  Furthermore, ${\bf VV}^{-1} = {\bf I}$ where $\bf I$ is a $(2^s - 1)\times (2^s - 1)$ identity matrix. Hence, ${\bf V}^{-1}$ is the inverse of $\bf V$ and vice versa.  Then, the matrix
 \[
\begin{array}{ll}
                         {\bf H}_{EG}^{\mathcal F}& = {\bf V H }_{EG} {\bf V}^{-1} = {\bf V} \Psi ({\bf v}) {\bf V}^{-1}   \vspace{0.4cm}  \\                                                                  &  = diag (\sum\limits^{2^{s}-2}_{j=0} v_j, \sum\limits^{2^s-2}_{j=0}  \alpha ^j v_j, \ldots , \sum\limits^{2^s-2}_{j=0}  \alpha ^{(2^s-2)j} v_j)
\end{array}
\]              
is a $(2^s - 1)\times (2^s -1)$ diagonal matrix over GF($2^s$) whose $i$th diagonal element, $0 \leq  i < 2^s - 2$, equals $\sum\limits^{2^s-2}_{j=0}  \alpha ^{ij} v_j$.      The vector composed of the diagonal elements of ${\bf H}_{EG}^{\mathcal F}$ is the Fourier transform [23] of the incidence vector ${\bf v} = (v_0, v_1, . . . , v_{2^s -2})$.  ${\bf H}_{EG} ^{\mathcal F}$ is called the Fourier transform of ${\bf H}_{EG} $. $ {\bf H}_{EG} ^{\mathcal F}$ and ${\bf H}_{EG}$ have the same rank.  Since ${\bf H}_{EG} ^{\mathcal F}$ is a diagonal matrix, its rank, denoted by $rank({\bf H}_{EG} ^{\mathcal F})$, is equal to the number of nonzero diagonal elements in ${\bf H}_{EG} ^{\mathcal F}$ which is  $3^s - 1$, same as that of ${\bf H}_{EG}$.
   
Similarly, the rank of the type-3 circulant descendant ${\bf H}_{EG ,mask} = \Psi ({\bf v})_{mask}$ of ${\bf H}_{EG} = \Psi ({\bf v})$ is equal to the number of nonzero diagonal elements of its Fourier transform $({\bf H}_{EG,mask})^{\mathcal F}$ of ${\bf H}_{EG,mask}$.

   To determine the rank of a type-1 descendant circulant $\Psi ({\bf v}_i)$ of ${\bf H}_{EG} = \Psi ({\bf v})$.  We define ${\bf V} = [\beta ^{-ij}]$ and ${\bf V}^{-1} = [\beta ^{ij}]$, $0 \leq  i, j < l$ where $\beta  = \alpha ^c$. The order of $\beta$ is $l$.  Then, for $0 \leq  i < c$, the Fourier transform of $\Psi ({\bf v}_i)$ is
 \[
\begin{array}{ll}
                         (\Psi ({\bf v}_i))^{\mathcal F} &= {\bf V}\Psi ({\bf v}_i) {\bf V}^{-1}\\ 
                                 &= diag (\sum\limits^{l-1}_{j=0} v_{jc+i}, \sum\limits^{l-1}_{j=0}  \beta ^j v_{jc+i}, \ldots, \sum\limits^{l-1}_{j=0}  \beta ^{(l-1)j} v_{jc+i})
                                 \end{array}
   \]
is an $l\times l$ diagonal matrix over GF($2^s$) whose $i$th diagonal element, $0 \leq  i < l$, equals $\sum\limits^{l-1}_{j=0}  \beta ^{ij} v_{jc+i}$.      The vector  composed of the diagonal elements of $(\Psi ({\bf v}_i))^{\mathcal F}$ is the Fourier transform of ${\bf v}_i = (v_i, v_{c+i}, \ldots, v_{(l-1)c+i})$.  Then, for $0 \leq  i < c$, $(\Psi ({\bf v}_i))^{\mathcal F}$ is the Fourier transform of the type-1 descendant circulant $\Psi ({\bf v}_i)$ of ${\bf H}_{EG} = \Psi ({\bf v})$.  $(\Psi ({\bf v}_i))^{\mathcal F}$ and $\Psi ({\bf v}_i)$ have the same rank.  Hence the rank, $rank(\Psi ({\bf v}_i))$, is equal to the number of nonzero diagonal elements in $(\Psi ({\bf v}_i))^{\mathcal F}$.

  To determine the rank of the parity-check matrix ${\bf H}_{col,k}$ of a type-2 cyclic descendant code given by (8).  We first find the Fourier transform  of each $l\times l$ circulant descendant in ${\bf H}_{col,k}$.   Divide the rows of the Fourier transforms of the $k$ descendant circulants in ${\bf H}_{col,k}$ into $l$ groups.  Each group $\theta _j$, $1 \leq  j \leq  l$, consists of the $j$th rows of the $k$ descendant circulants in ${\bf H}_{col,k}$. A group is called a nonzero group if not all its $k$ rows are zero rows, otherwise called a zero group.  Then the rank of ${\bf H}_{col,k}$ is equal to the number of nonzero groups of rows in the Fourier transforms of the $k$ descendant circulants in ${\bf H}_{col,k}$.

\begin{example} 
Let the two-dimensional Euclidean geometry EG(2, $2^6$) over GF($2^6$) be the code construction geometry.  The field GF($2^{12}$) is a realization of EG(2, $2^6$). Based on the incidence vectors of the $2^{2\times 6} - 1 = 4095$ lines not passing the origin of EG(2, $2^6$), we can construct a $4095\times 4095$ RC-constrained circulant ${\bf H}_{EG}$ with both column and row weights 64. Any line not passing through the origin of EG(2,$2^6$) can be used to construct the generator (the first row) of ${\bf H}_{EG}$.  The rank of ${\bf H}_{EG}$ is $3^6 -1 =728$. The null space of ${\bf H}_{EG}$ gives a (4095,3367) cyclic EG-LDPC code ${\cal C}_{EG}$ with minimum distance 65.  Its error performances decoded with 50 iterations of the sum-product algorithm (SPA) [3], [6], [15] and the scaled min-sum (MS) algorithm [29] over the binary AWGN channel are shown in Figure 1. We see that the error performance of the code decoded with 50 iterations of SPA is slightly better than that of 50 iterations of the scaled MS-algorithm.  Furthermore, decoding of the code with the MS algorithm converges very fast.  The performance curves with 5, 10 and 50 iterations of the scaled MS-algorithm almost overlap with each other.  Also included in Figure 1 is the error performance of the code decoded with the soft-reliability based iterative majority-logic decoding (SRBI-MLGD) devised in [30]. We see that, at bit-error rate (BER) of $10^{-6}$, the SRBI-MLGD performs only 0.6 dB from the scaled MS with 50 iterations.  The SRBI-MLGD requires only integer and binary logical operations with a computational complexity much less than that of the SPA and the MS-algorithm.  It offers more effective trade-off between error-performance and decoding complexity compared to the other reliability-based iterative decoding, such as the weighted bit-flipping (WBF) algorithms [5], [6], [15], [31], [32].

\twotriangle\end{example}

 \begin{example} Consider the $4095\times 4095$ circulant ${\bf H}_{EG}$ constructed in Example 2. Suppose we factor 4095 as the product of $c = 3$ and $l = 1365$.  By column and row permutations, the $4095\times 4095$ circulant ${\bf H}_{EG}$ can be decomposed into a $3\times 3$ array $\pi({\bf H}_{EG})$ of descendant circulants of size $1365\times 1365$ in the form given in (4).    Let $\Psi_0$,  $\Psi_1$ and $\Psi_2$ denote the 3 descendant circulants of ${\bf H}_{EG}$ in the first row of $\pi({\bf H}_{EG})$.  Then
\[
                  \pi ({\bf H}_{EG})  = \left[\begin{array}{lll} 
                                                \Psi _0   &        \Psi _1 &       \Psi _2\\
      \Psi^{(1)} _2  &  \Psi _0   &        \Psi _1 \\
      \Psi^{(1)} _1  &  \Psi^{(1)} _2   &        \Psi _0 
\end{array}\right].
 \]
The descendant circulants $\Psi_0$ and $\Psi_2$ both have column and row weights 24. The descendant circulant $\Psi_1$ has both column and row weights 16. The rank of $\Psi_1$ is 600 (the number of nonzero diagonal elements of its Fourier transform $\Psi_1 ^{\mathcal F}$).  Consider the cyclic LDPC code ${\cal C}_{EG}^{(1)}$ given by the null space of $\Psi_1$. This code is a (1365,765) cyclic EG-LDPC code with rate 0.56 and minimum weight   at least 17, the column weight of $\Psi_1$ plus 1. The code is a type-1 cyclic descendant of the cyclic (4095,3367) EG-LDPC code given in Example 2.  Its generator polynomial has $\beta  = \alpha^3, \beta^2, . . . ,\beta^{16}$ consecutive power of $\beta$  as roots where $\alpha$  is a primitive element of GF($2^{12}$).  It follows from the BCH bound [6], that the minimum weight is again at least 17 which agrees with bound of column weight plus one. By extensive computer search, we find that ${\cal C}_{EG}^{(1)}$ has no trapping set with size smaller than 17 (see Section VI), however, we do find a (17,0) trapping set which gives a codeword of weight 17.  Therefore, the minimum weight of ${\cal C}_{EG}^{(1)}$ is exactly 17 and the error-floor of this code is dominated by the minimum weight of the code.  The error performance of the code over the AWGN channel using BPSK signaling decoded with 50 iterations of the SPA (or MSA) is shown in Figure 2(a).  At the block error rate (BLER) of $10^{-5}$, the code performs 1.6 dB from the sphere packing bound.

  Suppose we use
\[                                               
{\bf H}_{col,3} =  \left[\begin{array}{c}
 \Psi _0\\
 \Psi _1\\
 \Psi _2
 \end{array} \right]
 \]
as a parity-check matrix.  This matrix is a $4095\times 1365$ matrix over GF(2) with constant column weight 64 but two different row weights, 16 and 24.  Its rank is 664 and hence it has a large row redundancy (3431 redundant rows).  The null space of ${\bf H}_{col,3}$ gives a (1365,701) cyclic-EG-LDPC code ${\cal C}_{EG}^{(2)}$ with rate 0.5135 and minimum distance at least 65.  It is a type-2 cyclic descendant of the (4095,3367) cyclic EG-LDPC code given in Example 2.  The error performances of this code over the AWGN channel decoded with 50 iterations of the SPA and the SRBI-MLGD-algorithm are shown in Figure 2(b). This code is one-step majority-logic decodable and it can corrects 32 errors with simple one-step (OS) majority-logic decoding (MLGD) [6].

 Suppose we replace the circulants, $\Psi_2$  and its cyclic-shift $\Psi ^{(1)}_2$, in $\pi ({\bf H}_{EG})$ by two $1365\times 1365$ zero matrices ${\bf O}$.  We obtain the following $3\times 3$ masked array of circulants of size $1365\times 1365$:
\[
                  \pi ({\bf H}_{EG})_{mask}  =  \left[\begin{array}{lll} 
                                                \Psi _0   &        \Psi _1 &       {\bf O}\\
       {\bf O}&  \Psi _0   &        \Psi _1 \\
      \Psi^{(1)} _1  &   {\bf O}&        \Psi _0 
\end{array}\right].
 \]
The above array is still  in the form of (4) with doubly cyclic structure.  It is a $4095\times 4095$ matrix over GF(2) with both column and row weights 40.  Applying the inverse permutation $\pi ^{-1}$ to the rows and columns of $\pi ({\bf H}_{EG})_{mask}$, we obtain an RC-constrained $4095\times 4095$ circulant ${\bf H}_{EG,mask}$ with both column and row weights 40. The rank of ${\bf H}_{EG,  mask}$ is 1392. The null space of ${\bf H}_{EG,mask}$ gives a (40,40)-regular (4095,2703) cyclic-EG-LDPC code with minimum distance at least 41. It is a type-3 cyclic descendant code of the (4095,3367) cyclic EG-LDPC code given in Example 2.  The error performances of this code over the AWGN channel decoded with 3, 5 and 50 iterations of the SPA is shown in Figure 2(c).               
\twotriangle\end{example}  

We can factor 4095 as the product of 15 and 273.  Setting $c = 15$ and $l = 273$, we can decompose the $4095\times 4095$ circulant ${\bf H}_{EG}$  given in Example 2 into a $15\times 15$ array $\pi ({\bf H}_{EG} )$ of circulants of size $273\times 273$.  From this array of circulants, we can construct many type-1,-2 and -3 cyclic descendant LDPC codes of the (4095-3367) cyclic EG-LDPC code ${\cal C}_{EG}$  given by the null space of ${\bf H}_{EG} $.
 
   In this section, we have shown that given a two-dimensional Euclidean geometry, many cyclic EG-LDPC codes with large minimum weights can be constructed.

\subsection{Quasi-Cyclic Descendants of Two-Dimensional Cyclic EG-LDPC Codes}
 
   In the previous subsection, we have considered constructions of cyclic descendant LPDC codes of cyclic EG-LDPC codes based on two-dimensional Euclidean geometries.  In this subsection, we consider constructions of QC descendant LDPC codes of cyclic EG-LDPC codes based on two-dimensional Euclidean geometry.  As pointed out earlier that construction of QC descendant EG-LDPC codes based on two-dimensional Euclidean geometries was also proposed in [9].  However, the approach to construction proposed in this section is different, mathematically simpler and more general than that in [9].  The approach in conjunction with masking allows us to construct both high and low rate codes. Furthermore, a fundamental theorem on decomposition of a circulant parity-check matrix ${\bf H}_{EG}$ constructed based on a two dimensional Euclidean geometry into an array of \emph{circulant permutation matrices} (CPMs) is proved.  This theorem will be generalized for constructing QC-EG-LDPC codes based on high-dimensional Euclidean geometries. Therefore, the construction of QC descendant EG-LDPC codes is a generalization of that proposed in [9].

    In the following, we will present two types of QC descendant EG-LDPC codes.  First, we consider the RC-constrained $c\times c$ array $\pi ({\bf H}_{EG} )$ of circulants over GF(2) of size of $l\times l$ constructed in the previous subsection where $cl = n = q^2 - 1$ and $l > q - 1$.  For a pair of positive integers, ($s$,$t$) with $1 \leq  s, t \leq  c$,  let $\pi ({\bf H}_{EG} ) (s,t)$ be a $s\times t$ subarray of $\pi ({\bf H}_{EG} )$.  This subarray also satisfies the RC-constraint and its null space gives a QC descendant LDPC code ${\cal C}_{EG,qc}^{(1)}$ of the cyclic EG-LDPC code ${\cal C}_{EG}$  given by the null space of the $n\times n$ circulant ${\bf H}_{EG} $.  The QC-LDPC code ${\cal C}_{EG,qc}^{(1)}$ is referred to as a \emph{type-1 QC descendant code} of ${\cal C}_{EG}$.  Note that $q$ does not divide $q^2 - 1$.  For $l > q -1$ and be a factor of $q^2 - 1$, the smallest $l$  is $q + 1$.

   Notice that the transpose of the parity-check matrix ${\bf H}_{col,k}$ of a type-2 cyclic descendant EG-LDPC code gives the parity-check matrix $\pi({\bf H}_{EG})(1,k)$ of a type-1 QC descendant EG-LDPC code.  Both parity-check matrices ${\bf H}_{col,k}$ and $\pi({\bf H}_{EG})(1,k)$ have the same rank which is equal to the number of nonzero groups of rows in the Fourier transforms of the $k$ circulants in ${\bf H}_{col,k}$ (or the number of nonzero groups of columns in the Fourier transforms of the $k$ circulants in $\pi ({\bf H}_{EG})(1,k))$. 

\begin{example}
Consider the $3\times 3$ array $\pi ({\bf H}_{EG} )$ of circulants of size $1365\times 1365$ given in Example 3 constructed based on the two-dimensional Euclidean geometry EG(2,$2^6$).  Set $s = 1$ and $t = 3$.  Take the first row $[\Psi_0 \;  \Psi_1 \; \Psi_2]$ of $\pi ({\bf H}_{EG} )$ as a $1\times 3$ subarray $\pi ({\bf H}_{EG} )(1,3)$ of $\pi ({\bf H}_{EG} )$, i.e., $\pi ({\bf H}_{EG} )(1,3) = [\Psi _0 \; \Psi _1 \; \Psi _2]$ which is the transpose of the parity-check matrix ${\bf H}_{col,3}$ of the type-2 cyclic LDPC code given in Example 3.   $\pi ({\bf H}_{EG} )(1,3)$ is a $1365\times 4095$ matrix over GF(2) with constant row weight 64 but two different column weights 16 and 20.  The null space of this subarray gives a (4095,3431) QC-EG-LDPC code, a QC descendant of (4095,3367) cyclic EG-LDPC code given in Example 2.  The bit and block error performances with 3, 5, and 50 iterations of the SPA are shown in Figure 3.                         
\twotriangle\end{example}
 
   For a type-1 QC descendant of a cyclic EG-LDPC codes ${\cal C}_{EG}$  given by the null space of a $(q^2 - 1)\times (q^2 - 1)$ circulant ${\bf H}_{EG}$  constructed based on the 2-dimensional Euclidean geometry EG(2,$q$), the size of each circulant in its parity-check matrix is at least $q + 1$.  
   
   Next, we consider type-2 QC descendants of ${\cal C}_{EG}$ .   Suppose $q - 1$ can be factored as a product of two integers, $b$ and $l$ with $1\leq b, l < q$, i.e., $q - 1 = bl$.  Then $n = q^2 - 1$ can be factored as the following product: $n = (q  + 1)(q - 1) = (q + 1)bl$.  Let $c = (q + 1)b$.  Then, the circulant parity-check matrix ${\bf H}_{EG}$ of the cyclic EG-LDPC code ${\cal C}_{EG}$ of length $n = q^2 - 1$ can be decomposed into an RC-constrained $(q + 1)b \times (q + 1)b$ array $\pi ({\bf H}_{EG})_{cpm}$ of circulants over GF(2) of size $l\times l$.  Since $\pi ({\bf H}_{EG})_{cpm}$ is obtained from ${\bf H}_{EG}$ by column and row permutations, the rank of $\pi ({\bf H}_{EG})_{cpm}$ is the same as the rank of ${\bf H}_{EG}$. The following theorem gives a fundamental structure of the array $\pi ({\bf H}_{EG})_{cpm}$ which allows us to construct a large class of QC-LDPC codes which are QC descendants of the cyclic EG-LDPC code ${\cal C}_{EG}$. We will show that each circulant in $\pi ({\bf H}_{EG})_{cpm}$ is either a \emph{circulant permutation matrix} (CPM) or a zero matrix of size $l\times l$. (A CPM is a permutation matrix for which each row is the cyclic-shift of the row above it and the first row is the cyclic-shift of the last row.) We call the array $\pi ({\bf H}_{EG})_{cpm}$ the \emph{CPM-decomposition} of ${\bf H}_{EG}$, where the subscript ``CPM'' stands for ``CPM-decomposition''.
   
   \begin{theorem}
Let ${\bf H}_{EG}$ be the $(q^2 - 1 )\times (q^2 - 1)$ circulant over GF(2) constructed based on the $q^2 - 1$ lines of the two-dimensional Euclidean geometry EG(2,$q$) over GF($q$) not passing through the origin.  Suppose $q - 1$ can be factored as a product of two integers, $b$ and $l$ with $1\leq b, l < q$, i.e., $q - 1 = bl$.    Let $c = (q + 1)b$. Then, ${\bf H}_{EG}$ can be decomposed as a $(q + 1)b \times (q + 1)b$ array $\pi ({\bf H}_{EG})_{cpm}$ of circulants of size $l\times l$.  Each circulant is either an $l\times l$ CPM or an $l\times l$ zero matrix (ZM).  Each row (or column) block of $\pi ({\bf H}_{EG})_{cpm}$ consists of exactly $q$ CPMs and $(q + 1)b - q$  ZMs.
\end{theorem}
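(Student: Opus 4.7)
The plan is to exploit the multiplicative structure of $\mathrm{GF}(q^2)^{*}$ as a realization of the points of $\mathrm{EG}^{*}(2,q)$. The central step is a \emph{separation lemma}: under the factorization $q^2 - 1 = c \cdot l$ with $c = (q+1)b$ and $l\mid q-1$, the $q$ points of any line $\mathcal{L}$ not passing through the origin lie in $q$ distinct cosets of the subgroup $\langle \alpha^{c}\rangle$ of $\mathrm{GF}(q^2)^{*}$. Equivalently, the discrete-log exponents $j_1,\ldots,j_q$ of the points of $\mathcal{L}$ are pairwise distinct modulo $c$. Since the $j$th cyclic section ${\bf v}_j$ defined in (5) picks out exactly the positions of the incidence vector ${\bf v}_{\mathcal{L}}$ that are congruent to $j$ modulo $c$, this forces every ${\bf v}_j$ to have weight at most one. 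Consequently each descendant circulant $\Psi({\bf v}_j)$, as well as each shift $\Psi^{(1)}({\bf v}_j)$, is either a CPM (when ${\bf v}_j$ has weight one) or a zero matrix (when ${\bf v}_j$ has weight zero), which settles the CPM/ZM dichotomy.

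To prove the separation lemma, first note that $\alpha^{c} = \alpha^{(q+1)b}$ has multiplicative order $l$, and since $l\mid q-1$ we have $\alpha^{c}\in\mathrm{GF}(q)^{*}$; hence the whole subgroup $\langle \alpha^{c}\rangle$ is contained in $\mathrm{GF}(q)^{*}$. Suppose $\mathcal{L}$ contained two points $x$ and $x\cdot\alpha^{ck}$ in the same $\langle\alpha^{c}\rangle$-coset, with $1\le k\le l-1$. Then the difference $x\,\alpha^{ck}-x=x(\alpha^{ck}-1)$ is a nonzero $\mathrm{GF}(q)$-multiple of $x$, so the direction of $\mathcal{L}$ is a $\mathrm{GF}(q)$-multiple of $x$, giving $\mathcal{L} = x + \mathrm{GF}(q)\cdot x = \mathrm{GF}(q)\cdot x$. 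This is a one-dimensional subspace and therefore contains the origin, contradicting $\mathcal{L}\in\mathrm{EG}^{*}(2,q)$.

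The counting part then reduces to a conservation argument. The incidence vector ${\bf v}_{\mathcal{L}}$ has weight exactly $q$, and the cyclic sections ${\bf v}_0,{\bf v}_1,\ldots,{\bf v}_{c-1}$ form a disjoint partition of its positions, so exactly $q$ of them have weight one while the remaining $c-q=(q+1)b-q$ have weight zero. The first row block of $\pi({\bf H}_{EG})_{cpm}$ consists of the $c$ circulants $\Psi({\bf v}_0),\ldots,\Psi({\bf v}_{c-1})$, giving $q$ CPMs and $(q+1)b-q$ ZMs. By the doubly cyclic structure in (4), every other row block is a cyclic reordering of the same $c$ circulants, with some of them replaced by their cyclic shifts $\Psi^{(1)}({\bf v}_j)$, which preserve the CPM/ZM status. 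The first column block of (4), namely $\Psi({\bf v}_0),\Psi^{(1)}({\bf v}_{c-1}),\ldots,\Psi^{(1)}({\bf v}_1)$, contains each index $j$ exactly once, so the same count of $q$ CPMs and $(q+1)b-q$ ZMs holds for every column block.

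I expect the main obstacle to be the separation lemma itself; specifically, noticing that $l\mid q-1$ places $\langle\alpha^{c}\rangle$ inside $\mathrm{GF}(q)^{*}$, and then translating ``two points in a common coset'' into ``direction is a $\mathrm{GF}(q)$-multiple of a point'' to force the line through the origin. Once this is in hand, the CPM/ZM classification and the block-by-block counts follow directly from the decomposition of Theorem 1 and the definition of the line-incidence vector.
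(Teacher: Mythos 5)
Your proof is correct and follows essentially the same route as the paper: both reduce the claim to showing that no two points of a line in EG$^{*}(2,q)$ can differ by a factor in $\langle\alpha^{c}\rangle\subseteq\mathrm{GF}(q)^{*}$ (so each cyclic section of the incidence vector has weight at most one), and both finish with the same weight-conservation count giving $q$ CPMs and $(q+1)b-q$ ZMs per row and column block. The only cosmetic difference is the final contradiction in the separation step---you conclude that the line would be the one-dimensional subspace $\mathrm{GF}(q)\cdot x$ through the origin, whereas the paper concludes that the second point would lie on two distinct parallel lines; both rest on the same observation that $\alpha^{(l_2-l_1)c}$ is an element of $\mathrm{GF}(q)^{*}$ different from $1$.
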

 \begin{proof}  It follows from the definition of the incidence vector of a line in EG*(2,$q$) that the $q^2 - 1$ columns of  ${\bf H}_{EG}$ correspond to the $q^2 - 1$ non-origin points, $\alpha ^0 = 1,  \alpha , \alpha^2, . . . , \alpha ^{q-2}$, of EG*(2,$q$).  Permuting the columns and rows based on the permutation $\pi$  defined by (2) and (3), we decompose the circulant ${\bf H}_{EG}$ into a $c\times c$ array $\pi ({\bf H}_{EG})_{cpm}$ of circulants of size $l\times l$ in the form of (4).  For $0 \leq j < c$, consider the $j$th circulant $\Psi_j$ in the first row block of the array $\pi ({\bf H}_{EG})_{cpm}$.  It follows from the column permutation $\pi$  that the columns of $\Psi_j$ correspond to the non-origin points, $\alpha ^j, \alpha ^{c+j}, \alpha ^{2c+j}, . . . , \alpha ^{(l-1)c+j}$.  Suppose that $\Psi_j$ is neither an $l\times l$ CPM nor an $l\times l$ ZM. Then, the first row of $\Psi_j$ must have at least two 1-components.  Let ${\bf y}_1 = \alpha ^{l_1 c+j}$ and ${\bf y}_2 =  \alpha ^{l_2 c+j}$ with $0\leq l_1 < l_2 < l$, be the points that correspond to two positions where the first row of  $\Psi_j$ have 1-components.  Then,
\begin{equation} 
                                    {\bf y}_2 = \lambda {\bf y}_1,     
 \end{equation}
where $\lambda  = \alpha ^{(l_2-l_1)c}$ which is a nonzero element in GF($q$).  Since $0 < l_2  - l_1 < l$, $\lambda  \neq  1$. Let ${\bf y} = \eta {\bf x} + {\bf z}$ be the line in EG*(2,$q$) that contains the points (or connects) ${\bf y}_1$ and ${\bf y}_2$  where ${\bf x}$ and ${\bf z}$ are two linearly independent points in EG*(2,$q$) and $\eta \in \mbox{GF}(q)$.  Then,
\begin{equation}
\begin{array}{l} 
                                 {\bf y}_1 = \eta_1 {\bf x} + {\bf z},\\
                                 {\bf y}_2 = \eta_2 {\bf x} + {\bf z}.
 \end{array}
 \end{equation}
It follows from (54) and (55) that we have
 \begin{equation}
                              {\bf y}_2 = \lambda \eta_1 {\bf x} + \lambda {\bf z}. 
 \end{equation}
where $\lambda \eta _1$ is a nonzero element in GF($q$).  Since $\lambda  \neq  1$, the point $\lambda {\bf z}$ is different from the point ${\bf z}$.  Equality (56) implies that ${\bf y}_2$  is also a point on the line ${\bf y}' = \eta {\bf x} + \lambda {\bf z}$ that is parallel to the line ${\bf y} = \eta {\bf x} + {\bf z}$.  However, a point cannot be on two parallel lines. Consequently, the first row of $\Psi_j$ cannot have more than one 1-component and $\Psi _j$ is either a CPM or a zero matrix. 
 
   As a $(q^2- 1)\times (q^2 - 1)$ matrix over GF(2), the first row of ${\bf H}_{EG}$ (the incidence vector of a line in EG*(2,$q$)) has $q$ one-components.  Since $\pi ({\bf H}_{EG})_{cpm}$ is obtained from ${\bf H}_{EG}$ through column and row permutations, the first row of $\pi ({\bf H}_{EG})_{cpm}$, as a $(q^2- 1)\times (q^2 - 1)$ matrix over GF(2), also has $q$ one-components. Based on the result proved above, these $q$ one-components must distribute in $q$ CPMs in the first row block of the array $\pi ({\bf H}_{EG})_{cpm}$, one in each.  Consequently, the first row block of the array $\pi ({\bf H}_{EG})_{cpm}$ consists of $q$ CPMs and $c - q = (q + 1)b - q$ ZMs of size $l\times l$.  Since $\pi ({\bf H}_{EG})_{cpm}$ has the cyclic structure as displayed in (4), every row block of the array is the cyclic-shift of the row block above it and the first row block is the cyclic-shift of the last row block.  This cyclic structure implies that every row (or column) block of $\pi ({\bf H}_{EG})_{cpm}$ has $q$ CPMs and $(q + 1)b - q$  ZMs.  This proves the theorem.   \end{proof}
  
    The array $\pi ({\bf H}_{EG})_{cpm}$ of CPMs and ZMs of size $l\times l$ can be used as the base to construct QC-LDPC codes.  For any pair of integers, ($\gamma$,$\rho $) with $1\leq \gamma, \rho  \leq (q + 1)b$, let $\pi ({\bf H}_{EG})(\gamma, \rho)_{cpm}$ be a $\gamma \times \rho$  subarray of $\pi ({\bf H}_{EG})_{cpm}$.  It is an RC-constrained $\gamma l \times \rho l$ matrix over GF(2). Then, the null space of $\pi ({\bf H}_{EG})(\gamma , \rho )_{cpm}$ gives a QC-EG-LDPC code ${\cal C}_{EG,qc} (\gamma , \rho )$ of length $\rho l$ whose Tanner graph has a girth of at least 6.  If $\pi({\bf H}_{EG}) (\gamma,\rho)_{cpm}$ has constant column and row weights, then ${\cal C}_{EG,qc} (\gamma,\rho)$ is a regular QC-EG-LDPC code. Otherwise, $\pi({\bf H}_{EG}) (\gamma,\rho)_{cpm}$ has multiple column and/or row weights. In this case, the null space of $\pi({\bf H}_{EG}) (\gamma,\rho)_{cpm}$ gives an irregular QC-EG-LDPC code. 

    Here we consider a very special subclass of type-2 QC descendant LDPC codes of the two-dimensional cyclic EG-LDPC code ${\cal C}_{EG}$. The entire array $\pi({\bf H}_{EG})_{cpm}$ is a $(q^2 - 1)\times (q^2- 1)$ matrix over GF(2) with both column and row weights equal to $q$.  The null space of $\pi({\bf H}_{EG})_{cpm}$ gives a QC-EG-LDPC code ${\cal C}_{EG,qc}((q+1)b,(q+1)b)$ of length $n = q^2 - 1$ with minimum distance $q + 1$.  If $q = 2^s$, then the rank of $\pi({\bf H}_{EG})_{cpm}$ is $3^s - 1$ (the rank of $\pi({\bf H}_{EG})_{cpm}$ is the same as that of ${\bf H}_{EG})$.  In this case, the null space of $\pi({\bf H}_{EG})_{cpm}$ gives a QC-EG-LDPC code with the following parameters:
    
    \begin{center}
    
 \begin{minipage}{5cm}
             Length: $n  = 4^s - 1$,
 
             Dimension $= 4^s - 3^s$,
 
             Minimum distance $= 2^s + 1$.
 \end{minipage}
 
 \end{center}
 
   For a given two-dimensional Euclidean geometry EG(2,$q$) over GF($q$), the above construction gives a family of structurally compatible QC-EG-LDPC codes.   
   
   Each factor $l$ of $q - 1$ results in a CPM-decomposition of the circulant ${\bf H}_{EG}$ with CPMs of size $l\times l$. A special case of CPM-decomposition of ${\bf H}_{EG}$ is $l = q - 1$.  In this case, the CPM-decomposition of ${\bf H}_{EG}$ is a $(q+1)\times (q+1)$ array $\pi ({\bf H}_{EG})_{cpm}$ of CPMs and ZMs of size $(q - 1)\times (q - 1)$.  Each row (or column) block of $\pi ({\bf H}_{EG})_{cpm}$ consists of $q$ CPMs and one single ZM. There are a total of $q + 1$ ZMs in $\pi ({\bf H}_{EG})_{cpm}$.  In constructing the circulant ${\bf H}_{EG}$, we can choose a line $\cal L$ such that, after decomposition, the $q + 1$ ZMs in $\pi ({\bf H}_{EG})_{cpm}$ lie on its main diagonal.  This special case with $l = q - 1$ was first presented in [8] as an array of permutation matrices (PMs) of size $(q - 1)\times (q -1)$ and was later formulated as an array of CPMs of size $(q - 1)\times (q -1)$ in [9].

\begin{example} Consider the $4095\times 4095$ circulant ${\bf H}_{EG}$ over GF(2) constructed based the two-dimensional Euclidean geometry EG(2,$2^6$) given in Example 2. Factor $2^{2\times 6} - 1 = 4095$ as the product of $q+1=2^6 + 1 = 65$ and $q-1=2^6 - 1= 63$. Let $c = 65$ and $l = 63$.  Decompose the $4095\times 4095$ circulant ${\bf H}_{EG}$ into a $65\times 65$ array $\pi ({\bf H}_{EG}) _{cpm}$ of CPMs and ZMs of size $63\times 63$.  Suppose ${\bf H}_{EG}$ is constructed by choosing a line $\cal L$ not passing through the origin of EG(2,$2^6$) such that, after decomposition of  ${\bf H}_{EG}$, the 65 ZMs of $\pi ({\bf H}_{EG}) _{cpm}$ lie on its main diagonal.  The null space of $\pi ({\bf H}_{EG}) _{cpm}$ gives (4095,3367) QC-EG-LDPC code which is combinatorially equivalent to the (4095,3367) cyclic EG-LDPC code given in Example 2.  Suppose we choose a $6\times 65$ subarray $\pi ({\bf H}_{EG}) (6,65)_{cpm}$ of of $\pi ({\bf H}_{EG}) _{cpm}$. The null space of this subarray gives a (4095,3771) code with rate 0.921. The error performance of this code with 50 iterations of the SPA is shown in Figure 4.  At the BLER of $10^{-4}$, the (4095,3771) code performs 0.75  dB from the sphere packing bound.           
\twotriangle\end{example}

\begin{example}
Continue Example 5.  Suppose we factor $q - 1 = 63$ as the product of 9 and 7.  Set $b = 9$, $l = 7$ and $c = (q + 1)b = 65\times 9 = 585$.  Decompose the $4095\times 4095$ circulant ${\bf H}_{EG}$ given in Example 2 into a $585\times 585$ array $\pi ({\bf H}_{EG})_{cpm}$ of CPMs and ZMs of size $7\times 7$.  Choose $\gamma  = 72$ and $\rho  = 585$.  Take a $72\times 585$ subarray $\pi ({\bf H}_{EG})(72,585)_{cpm}$ from $\pi ({\bf H}_{EG})_{cpm}$.  The subarray $\pi ({\bf H}_{EG})(72,585)_{cpm}$ is a $504\times 4095$ matrix over GF(2).  The null space of this matrix gives a (4095,3591) QC-EG-LDPC code with rate 0.877 whose error performance over the AWGN decoded with 50 iterations of the SPA is shown in Figure 5. 
      \twotriangle\end{example}

\begin{example}
In this example, we construct a long high-rate code and show how close the code performs to the Shannon limit.  Let the two-dimensional Euclidean geometry EG(2,257) over the prime field GF(257) be the code construction geometry.  Based on the incidence vectors of the lines in EG(2,257) not passing through the origin of the geometry, we construct a $66048\times 66048$ circulant ${\bf H}_{EG}$ with both column and row weights 257.  The null space of ${\bf H}_{EG}$ gives a cyclic-EG-LDPC code of length of 66048 with minimum distance at least 258.

   Set $c = q + 1 = 257 + 1 =258$ and $l = q - 1 = 257 - 1 = 256$. Decompose ${\bf H}_{EG}$ into a $258\times 258$ array $\pi ({\bf H}_{EG})_{cpm}$ of CPMs and ZMs of size $256\times 256$.  In this CPM-decomposition, every row and every column consists of 257 CPMs and a single ZM. Suppose ${\bf H}_{EG}$ is constructed by choosing a line not passing through the origin of EG(2,$2^8$) such that the 258 ZMs lie on the main diagonal of the array $\pi ({\bf H}_{EG})_{cpm}$.

   Let $\gamma  = 4$ and $\rho = 128$. Take a $4\times 128$ subarray $\pi ({\bf H}_{EG})(4,128)_{cpm}$ from $\pi ({\bf H}_{EG})_{cpm}$, avoiding the ZMs on the main diagonal of $\pi ({\bf H}_{EG})_{cpm}$.  This subarray $\pi ({\bf H}_{EG})(4,128)_{cpm}$ is a $1024\times 32768$ matrix with column and row weights 4 and 128, respectively.  The null space of $\pi ({\bf H}_{EG})(4,128)_{cpm}$ gives a (4,128)-regular (32768,31747) QC-EG-LDPC code with rate 0.969.  The error performance of this code over the AWGN channel decoded with 50 iterations of the SPA is shown in Figure 6.  At the BER of $10^{-6}$, the code performs 0.6 dB from the Shannon limit.  
    \twotriangle\end{example}

   If we select a set of CPMs and their cyclic-shifts in $\pi ({\bf H}_{EG})_{cpm}$ and replace them by zero matrices of size $l\times l$, we obtain an array $\pi ({\bf H}_{EG,mask})_{cpm}$ of CPMs and ZMs which has the form of (4) with doubly cyclic structure.  Applying inverse permutation $\pi ^{-1}$ to the rows and columns of $\pi ({\bf H}_{EG,mask})_{cpm}$, we obtain a $(q^2- 1)\times (q^2- 1)$ masked circulant ${\bf H}_{EG,mask}$ over GF(2). The null space ${\bf H}_{EG,mask}$ gives a cyclic-EG-LDPC code of length $q^2 - 1$.
   
   \subsection{Masking}
    For a pair of two positive integers, $(\gamma , \rho )$ with $1 \leq  \gamma , \rho  \leq  q + 1$, let
 \begin{equation}
           \pi ({\bf H}_{EG})(\gamma ,\rho )_{cpm}  =  \left[\begin{array}{cccc}
                                                {\bf B}_{0,0}     &    {\bf B}_{0,1}   &   \cdots &   {\bf B}_{0,\rho -1}\\
                                               {\bf B}_{1,0}     &    {\bf B}_{1,1}   &    \cdots &      {\bf B}_{1,\rho -1}\\
                                                \vdots & &\ddots &\vdots\\
                                            {\bf B}_{\gamma -1,0} &     {\bf B}_{\gamma -1,1} &    \cdots   &{\bf B}_{\gamma -1,\rho-1}                                               \end{array}\right]. 
                                               \end{equation}                                               
be a $\gamma \times \rho$  subarray of $\pi ({\bf H}_{EG})_{cpm}$. A set of CPMs in $\pi ({\bf H}_{EG})(\gamma ,\rho )_{cpm}$  can be replaced by a set of ZMs. This replacement is referred to as masking [6], [8], [10], [11], [15].  Masking results in a sparser matrix whose associated Tanner graph has fewer edges and hence fewer short cycles and probably a larger girth than that of the associated Tanner graph of the original $\gamma \times \rho$  subarray $\pi ({\bf H}_{EG})(\gamma ,\rho )_{cpm}$.  To carry out masking, we first design a low density $\gamma \times \rho$  matrix ${\bf Z}(\gamma ,\rho ) = [z_{i,j}]$ over GF(2).  Then take the following matrix product: $\pi ({\bf M}_{EG})(\gamma ,\rho )_{cpm} = {\bf Z}(\gamma ,\rho )\otimes \pi ({\bf H}_{EG})(\gamma ,\rho )_{cpm} = [z_{i,j}{\bf B}_{i,j}]$, where $z_{i,j}{\bf B}_{i,j} = {\bf B}_{i,j}$ for $z_{i,j} = 1$ and $z_{i,j}{\bf B}_{i,j} = {\bf O} (a (q - 1)\times (q - 1)$ zero matrix) for $z_{i,j} = 0$.  We call ${\bf Z}(\gamma ,\rho )$ the masking matrix, $\pi ({\bf H}_{EG})(\gamma ,\rho )_{cpm}$ the base array  and $\pi ({\bf M}_{EG})(\gamma ,\rho )_{cpm}$ the masked array.  Since the base array $\pi ({\bf H}_{EG})(\gamma ,\rho )_{cpm}$ satisfies the RC-constraint, the masked array $\pi ({\bf M}_{EG})(\gamma ,\rho )_{cpm}$ also satisfies the RC-constraint, regardless of the masking matrix.  Hence, the associated Tanner graph of the masked matrix $\pi ({\bf M}_{EG})(\gamma ,\rho )_{cpm}$ has a girth at least 6.  The null space of the masked array $\pi ({\bf M}_{EG})(\gamma ,\rho )_{cpm}$ gives a new  QC-EG-LDPC code.  If both the masking matrix and the base array are regular, the masked array is also regular and its null space gives a regular QC-LDPC code.  However, if the masking matrix is irregular and base array is regular, the masked array is irregular and its null space gives an irregular code.  A well designed masking matrix results in a good LDPC code.  Design and construction of masking matrices for constructing binary LDPC codes are discussed in [6], [8], [10], [11].

\begin{example}
 In this example, we construct a long irregular QC-EG-LDPC code using the masking technique presented above.  Consider the $258\times 258$ array $\pi ({\bf H}_{EG})_{cpm}$ of CPMs and ZMs of size $256\times 256$ constructed in Example 7.  Take a $128\times 256$ subarray $\pi ({\bf H}_{EG})(128,256)_{cpm}$ from $\pi ({\bf H}_{EG})_{cpm}$. We use this subarray as a base array for masking to construct an irregular code of rate 1/2. Next we construct a $128\times 256$ masking matrix ${\bf Z}(128,286)$ (by computer search) with column and row weight distributions  close to the following variable-node and check-node degree distributions (node perspective) of a Tanner graph optimally designed for an irregular code of rate 1/2 and infinite length (using density evolution [33]):
\[\begin{array}{cc}         
    \lambda(X) = 0.4410X + 0.3603X^2 + 0.00171X^5 + 0.03543X^6 + 0.09331X^7 + 0.0204 X^8 
    \\+ 0.0048X^9+ 0.000353 X^{27} + 0.04292X^{29},
        \end{array}                 \]
and
\[
             \rho (X) = 0.00842X^7 + 0.99023X^8 + 0.00135X^9.
\]
where the coefficient of $X^i$ represents the percentage of nodes with degree $i +1$.  The column and row weight distributions of the constructed masking matrix ${\bf Z}(128,256)$ are given below:
\[
            v(X) = 106X + 105X^2 + 35X^8 + 10X^{29},
            \]
            \[
            c(X) = 10X^7 + 118X^8,
\]
where the coefficient $X^i$ gives the number of columns (or rows) of ${\bf Z}(128,256)$ with weight $i + 1$.

   Masking the $128\times 256$ subarray $\pi ({\bf H}_{EG})(128,256)_{cpm}$ with ${\bf Z}(128,256)$, we obtain a $128\times 256$ masked array $\pi ({\bf M}_{EG})(128,256)_{cpm} = {\bf Z}(128,256)\otimes \pi ({\bf H}_{EG})(128,256)_{cpm}$ of $256\times 256$ CPMs and ZMs.  It is a $32768\times 65536$ matrix over GF(2) with average column and row weights 3.875 and 7.75, respectively.  The null space of $\pi ({\bf M}_{EG})(128,256)_{cpm}$ gives an irregular (65536,32768) QC-EG-LDPC code.  The error performance of this code with 50 iterations of the SPA is shown in Figure 7.  We see that at a BER of $10^{-9}$, the code performs 0.6  dB from the Shannon limit without visible error floor.  Also include in Figure 7 is the performance of the DVB S-2 standard (64800,32400) LDPC code [34] with a BCH outer code. The DVB S-2 LDPC code is an IRA (irregular repeat-accumulated) code [15], [35]. The BCH code is a (32400,32208) shortened BCH code with error-correction capability 12.  The BCH outer code is  used to push down the error-floor of the DVB S-2 code.  We see that the (65536,32768) QC-EG-LDPC code outperforms DVB S-2 code with the BCH outer code. 
    \twotriangle\end{example}

\section{Construction of QC-LDPC Codes Based on Decomposition of Multiple Circulants Constructed from High-Dimensional Euclidean Geometries}
 
   In the last subsection, we considered decomposition of the single RC-constrained circulant constructed based on the lines of a two-dimensional Euclidean geometry EG(2,$q$) over a finite field GF($q$) not passing through the origin of the geometry into a $(q+1)b\times (q+1)b$ array of CPMs and ZMs of size $l\times l$ where $b$ and $l$ are factors of $q - 1$ and $bl = q - 1$.  From this array of CPMs and ZMs, we can construct a family of RC-constrained QC-EG-LDPC codes of various lengths and rates and a family of cyclic LDPC codes.  
   
   In this section, we consider decomposition of multiple circulants constructed based on lines of an $m$-dimensional Euclidean geometry EG($m$,$q$) over the Galois field GF($q$) into arrays of CPMs and ZMs of size $l\times l$.  From these arrays, we can construct a very large array of CPMs and ZMs which forms a base array to construct a large family of RC-constrained QC-EG-LDPC codes.

    Consider the $m$-dimensional Euclidean geometry EG($m$,$q$) over GF($q$).  This geometry consists of $q^m$ points and
 $                                J = q^{m-1} (q^m - 1)/(q - 1)                $
lines.  Each line consists of $q$ points.  The field GF($q^m$) as an extension field of the ground field GF($q$) is a realization of the geometry EG($m$,$q$) [6], [26].  Let $\alpha$  be a primitive element of GF($q^m$).  Then, the powers, $\alpha ^{-\infty}   \triangleq 0, \alpha ^0= 0, \alpha , . . . , \alpha ^{q^m-2}$, represent $q^m$ points of EG($m$,$q$).  Again, the element $\alpha ^{-\infty}  = 0$ represents the origin of EG($m$,$q$).  Let EG*($m$,$q$) be the sub-geometry obtained by removing the origin and the line passing through the origin from EG($m$,$q$).  This sub-geometry consists of $q^m - 1$ non-origin points and
$                             J_0 = (q^{m-1} - 1)(q^m - 1)/(q - 1)$
lines not passing through the origin of EG($m$,$q$).
 
   Let ${\cal L} = \{\alpha ^{j_1}, \alpha ^{j_2}, . . . , \alpha ^{j_q}\}$ with $0 \leq  j_1, j_2, . . . ,  j_q < q^m - 1$ be a line in EG*($m$,$q$) consisting of the points, $\alpha ^{j_1}, \alpha ^{j_2}, . . . ,\alpha ^{j_q}$.  For $0 \leq  t < q^m - 1$, $\alpha ^t {\cal L} = \{\alpha ^{j_1+t}, \alpha ^{j_2+t}, . . . , \alpha ^{j_q+t} \}$
is also a line in EG*($m$,$q$) [6], [7], [15]. The lines ${\cal L}, \alpha {\cal L}, \alpha ^2 {\cal L}, . . . , \alpha ^{q^m-2}{\cal L}$ are $(q^m - 1)$  different lines in EG*($m$,$q$).  Since $\alpha ^{q^m-1} = 1, \alpha ^{q^m-1}{\cal L} = {\cal L}$.  The $q^m - 1$ lines, ${\cal L}, \alpha {\cal L}, \alpha ^2{\cal L}, . . . , \alpha ^{q^m-2} {\cal L}$, are said to form a \emph{cyclic class}, denoted by $Q_{\cal L}$.  The $J_0$ lines in EG*($m$,$q$) can be partitioned into $K_0 = (q^{m-1} - 1)/(q - 1)$ cyclic classes.
 
   For any line $\cal L$ in EG*($m$,$q$) not passing through the origin, the incidence vector of $\cal L$ is a ($q^m - 1$)-tuple over GF(2) defined as follows: $                      {\bf v}_{\cal L} = (v_0, v_1, . . . , v_{q^m-2}),$
whose components correspond to the $q^m - 1$ non-origin points, $\alpha ^0 = 0, \alpha , . . . , \alpha ^{q^m-2}$, of EG*($m$,$q$) , where $v_j = 1$ if $\alpha ^j$ is a point on $\cal L$, otherwise $v_j = 0$.  The weight of the  incidence vector of a line is $q$.  Due to the cyclic structure of the lines in EG*($m$,$q$), the incidence vector ${\bf v}(\alpha ^{i+1}{\cal L})$ of the line $\alpha ^{i+1}{\cal L}$ is right cyclic-shift of the incidence vector ${\bf v}(\alpha ^i {\cal L})$ for $0 \leq  i < q^m - 1$. 
 
   Denote the $K_0$ cyclic classes of lines in EG*($m$,$q$) with $Q_{{\cal L}_0}, Q_{{\cal L}_1}, . . . , Q_{{\cal L}_{K_0-1}} $. For each cyclic class $Q_{{\cal L}_i}$ of $q^m - 1$ lines with $0 \leq  i < K_0$, we form a $(q^m - 1)\times (q^m - 1)$ circulant ${\bf H}_{EG,i}$ with the incidence vectors of the lines ${\cal L}_i, \alpha {\cal L}_i, \alpha ^2 {\cal L}_i, . . . , \alpha ^{q^m-2} {\cal L}_i$ as columns such that each column is downward cyclic-shift of the column on its left and the first column is the downward cyclic-shift of the last column.  This $(q^m - 1)\times (q^m - 1)$ ciculant ${\bf H}_{EG,i}$ satisfies the RC-constraint and has both column and row weights equal to $q$.  Let $q = p^s$ where $p$ is a prime.  For $s \geq  3$ and $m \geq  3$, $q$ is very small compared to $q^m - 1$.  Therefore, ${\bf H}_{EG,i}$ is a very sparse circulant.
  
    Form the following $(q^m - 1)\times K_0(q^m - 1)$ matrix over GF(2) with circulants, ${\bf H}_{EG ,1}, {\bf H}_{EG ,2}, \ldots, {\bf H}_{EG,K_0}$ as submatrices:
 \begin{equation}
                    {\bf H}_{EG,qc} = [{\bf H}_{EG ,0}\;  {\bf H}_{EG,1}, \ldots,  {\bf H}_{EG ,K_0 -1}] .
\end{equation} 
This matrix has column and row weights $q$ and $q K_0$, respectively.  Since the columns of ${\bf H}_{EG,qc}$ correspond to the lines of $EG^*(m,q)$, ${\bf H}_{EG ,qc}$ satisfies the RC-constraint.  Its null space gives an RC-constrained QC-EG-LDPC code ${\cal C}_{qc,m}$ of length $K_0(q^m - 1)$ with minimum distance at least $q + 1$.  The subscript "$m$" stands for the dimension of the Euclidean geometry $EG(m,q)$ used for code construction.
  
     Suppose $q - 1$ can be factored as a product of $b$ and $l$ with $0\leq  b, l < q$, i.e., $q - 1 = bl$.  Then $q^m - 1$ can be factored as follows:
 \begin{eqnarray*}
           q^m - 1 &= (q^{m-1} + q^{m-2} + . . . + q + 1)(q - 1)  \nonumber \\ 
                    & = (q^{m-1} + q^{m-2} + . . . + q + 1)bl. 
                    \end{eqnarray*} 
Let
 \begin{equation}
          c = (q^{m-1} + q^{m-2} + . . . + q + 1)b.
 \end{equation}

\begin{theorem}
For $0 \leq  i < K_0 $, each $(q^m - 1)\times (q^m - 1)$ circulant ${\bf H}_{EG,i}$ constructed based on $i$th cyclic class $Q_{L_i}$ of lines of the sub-geometry EG*($m$,$q$) can be decomposed into a $c\times c$ array $\pi ({\bf H}_{EG,i})_{cpm}$ of CPMs and MZs of size $l\times l$ by applying the $\pi $-permutation to both the columns and rows of ${\bf H}_{EG,i}$.  Each row (column) block of $\pi ({\bf H}_{EG,i})_{cpm}$ consists of $q$ CPMs and $c - q$ ZMs.
\end{theorem}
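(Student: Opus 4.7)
The plan is to adapt the proof of Theorem 5 almost verbatim to the $m$-dimensional setting: the combinatorial skeleton (parallel-lines contradiction, cyclic-shift bookkeeping) transfers without change, and the only genuinely new ingredient is the verification that the scalar $\alpha^{c}$ arising from the column indexing of a descendant block still lies in the subfield GF($q$). I would organize the argument in four steps.

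First, I apply the permutation $\pi$ defined by (2)--(3) to the rows and columns of ${\bf H}_{EG,i}$. Since $q^m - 1 = c\cdot l$, Theorem 1 produces a $c\times c$ array $\pi({\bf H}_{EG,i})_{cpm}$ of $l\times l$ circulants in the doubly cyclic form (4). For $0\leq j<c$, let $\Psi_j$ denote the $j$th circulant in the first row block; by the definition of $\pi$, its columns correspond to the non-origin points $\alpha^j,\alpha^{c+j},\ldots,\alpha^{(l-1)c+j}$ of EG*($m$,$q$).

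Next I establish the key subfield membership. From (60), $c = ((q^m-1)/(q-1))\,b$, so $(q^m-1)/(q-1)$ divides $c$. Because $\alpha$ is a primitive $(q^m-1)$-th root of unity in GF($q^m$), $\alpha^{(q^m-1)/(q-1)}$ has multiplicative order exactly $q-1$ and therefore lies in $\text{GF}(q)^{*}$. Consequently $\alpha^{kc}\in \text{GF}(q)^{*}$ for every integer $k$. This is the single place where the particular formula (60) for $c$ is essential, and it is precisely what generalizes the two-dimensional equality $c=(q+1)b$ used in Theorem 5.

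With this in hand, I repeat the parallel-lines contradiction from Theorem 5. Suppose the first row of $\Psi_j$ had two $1$-components at columns indexed by ${\bf y}_1=\alpha^{l_1 c+j}$ and ${\bf y}_2=\alpha^{l_2 c+j}$ with $0\leq l_1<l_2<l$. Then ${\bf y}_2=\lambda{\bf y}_1$ with $\lambda=\alpha^{(l_2-l_1)c}$, which by the preceding step is a nonzero element of GF($q$), and is not $1$ since $0<(l_2-l_1)c<lc=q^m-1$. Writing the unique line in EG*($m$,$q$) through ${\bf y}_1$ and ${\bf y}_2$ as $\{\eta{\bf x}+{\bf z}:\eta\in\text{GF}(q)\}$ with ${\bf z}$ not a scalar multiple of ${\bf x}$ (otherwise the line would pass through the origin), exactly the same algebraic manipulation as in Theorem 5 places ${\bf y}_2$ on the distinct parallel line $\{\eta{\bf x}+\lambda{\bf z}:\eta\in\text{GF}(q)\}$, contradicting uniqueness of the parallel bundle through ${\bf y}_2$. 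Hence each $\Psi_j$ has at most one $1$ per row and is either a CPM or a ZM.

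Finally I count. The generator row of ${\bf H}_{EG,i}$, being the incidence vector of a line in EG*($m$,$q$), has weight exactly $q$. By the previous step, those $q$ ones must distribute among $q$ distinct blocks within the first row block of $\pi({\bf H}_{EG,i})_{cpm}$; the remaining $c-q$ blocks are ZMs. The doubly cyclic structure in (4) then propagates the count of $q$ CPMs and $c-q$ ZMs to every row block, and the analogous vertical cyclic structure propagates it to every column block. There is no serious obstacle beyond the subfield-membership observation in the second step; once that is in place the proof is a direct transcription of the two-dimensional argument.
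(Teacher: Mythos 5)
Your proposal is correct and follows exactly the route the paper intends: the paper's own proof of this theorem is simply a deferral to the proof of Theorem 5, and your four steps are a faithful transcription of that argument to EG($m$,$q$), with the one genuinely new ingredient --- that $\alpha^{c}\in\mathrm{GF}(q)^{*}$ because $(q^m-1)/(q-1)$ divides $c$ --- correctly identified and verified. Your explicit check that ${\bf z}\notin\langle{\bf x}\rangle$ (so the two parallel lines are genuinely distinct) is slightly more careful than the paper's wording but changes nothing in substance.
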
 
\begin{proof}
The proof of this theorem is similar to the proof of Theorem 5.
\end{proof}
  
   Again, we call $\pi ({\bf H}_{EG,i})_{cpm}$ the CPM-decomposition of ${\bf H}_{EG,i}$. Replacing each circulant ${\bf H}_{EG ,i}$ in (58) by its CPM-decomposition $\pi ({\bf H}_{EG} ,i)$, we obtain the following $c\times cK_0$ array of CPMs and ZMs of size $l\times l$ over GF(2):
 \begin{equation}
                   \pi ({\bf H}_{EG ,qc})_{cpm}  = [\pi ({\bf H}_{EG ,0})_{cpm} \; \pi ({\bf H}_{EG ,1})_{cpm}\; \ldots \; \pi ({\bf H}_{EG ,K_0 -1})_{cpm} ] .               \end{equation} 
The array $\pi ({\bf H}_{EG} ,qc)_{cpm}$  is a sparse array with relatively small number of CPMs compared to the number of ZMs.  It also satisfied the RC-constraint.  Its null space gives a QC-EG-LDPC code which is combinatorially equivalent to the QC-EG-LDPC code ${\cal C}_{qc,m}$  given by the null space of ${\bf H}_{EG,qc}$ of (58).  For $1 \leq  \gamma  \leq  c$ and $1 \leq  \rho  \leq  cK_0$ , take a $\gamma \times \rho$  suarray $\pi ({\bf H}_{EG ,qc})(\gamma ,\rho )_{cpm}$  from $\pi ({\bf H}_{EG,qc})_{cpm} $.  This subarray is $\gamma l\times  \rho l$ matrix over GF(2). Its null space gives a QC-EG-LDPC code of length $\rho l$ which is referred to as a QC descendant of the QC-EG-LDPC code ${\cal C}_{qc,m}$  given by the null space of ${\bf H}_{EG,qc}$ of (58).  The above construction gives a large family of QC descendant LDPC codes of ${\cal C}_{qc,m }$.

   Again, a special case is $b = 1$ and $l = q - 1$.  In this case, $c = (q^{m-1} + q^{m-2} + . . . + q + 1)$ and $\pi ({\bf H}_{EG,qc})_{cpm}$ is a $c\times cK_0$  array of CPMs and ZMs of size $(q - 1)\times (q - 1)$ over GF(2).
 
   Consider the $c\times c$ subarray $\pi ({\bf H}_{EG,i})_{cpm}$ of CPMs and ZMs. As stated in Theorem 7, each column (or row block) consists of $q$ CPMs and $c - q$ ZMs.  Suppose $q$ can be factored as a product $e$ and $f$, i.e, $q = ef$.  We can split each column block of $\pi ({\bf H}_{EG,i})_{cpm}$ into $e$ column blocks of the same length with the $q$ CPMs evenly distributed  into the new $e$ column blocks, each with $f$ CPMs.  This column splitting operation is referred to \emph{column block splitting}.  In distributing the CPMs into $e$ new column blocks, their relative positions are not changed.  This column block splitting results in a $c\times c e$ array  ${\bf M}_{col,i} (e)$ of CPMs and ZMs of size $l\times l$, each column block consisting of $f$ CPMs and each row block consisting of $q$ CPMs.  Next, we split each row block of ${\bf M}_{col,i}(e)$ into $e$ new row blocks of the same length with the $q$ CPMs evenly distributed among the $e$ new row blocks, each with $f$ CPMs.  This row splitting operation is referred to as the \emph{row block splitting}.  This row block splitting of  ${\bf M}_{col,i} (e)$ results in a $ce\times ce$ array ${\bf M}_{col,row,i}(e,e)$ of CPMs and ZMs of size $l\times l$. The array ${\bf M}_{col,row,i}(e,e)$ is called the $e\times e$ \emph{expansion} of $\pi ({\bf H}_{EG,i})_{cpm}$. Each column block and each row block of ${\bf M}_{col,row,i}(e,e)$ consists of $f$ CPMs.  If we replace each $c\times c$ subarray $\pi ({\bf H}_{EG,i})_{cpm}$ in $\pi ({\bf H}_{EG})_{cpm}$ given by (60) with its $e\times e$ expansion ${\bf M}_{col,row,i}(e,e)$, we obtain the following $ce\times ce K_0$  array:
\begin{equation}
             {\bf M}_{EG,qc} = [{\bf M}_{col,row,0}(e,e)\;  {\bf M}_{col,row,1}(e,e)\; \cdots\; {\bf M}_{col,row,K_0 - 1}(e,e)].  
\end{equation}
Note that ${\bf M}_{EG,qc}$ has a much smaller density of CPMs than that of the array $\pi ({\bf H}_{EG,qc})_{cpm}$.

 \begin{example}
 Let $q = 2^3$.  Consider the 3-dimensional Euclidean geometry EG(3,$2^3$) over GF($2^3$).  This geometry has $q^3-1 = 2^{3\times 3} - 1 = 511$ non-origin points and $4599$ lines not passing through the origin of the geometry.  The $4599$ lines not passing through the origin can be partitioned into $9$ cyclic classes, each consisting of 511 lines.  Using the incidence vectors of the lines in these 9 cyclic classes, we can form 9 circulants, ${\bf H}_{EG,0}, {\bf H}_{EG,1}, . . . , {\bf H}_{EG,8}$, of sized $511\times 511$.  Factor 511 as the product of $b = 73$ and $l=q - 1 = 7$.  It follows from Theorem 6, each $511\times 511$ circulant ${\bf H}_{EG,i}$ can be decomposed into a $73\times 73$ array $\pi ({\bf H}_{EG,i})_{cpm}$ of CPMs and ZMs of size $7\times 7$.  Each column (row) block consists of 8 CPMs and 65 ZMs.  Form the following $73\times 657$ array of CPMs and ZMs of size $7\times 7$:
\[ 
        \pi ({\bf H}_{EG,qc})_{cpm} = [\pi ({\bf H}_{EG,0})_{cpm}\;  \pi ({\bf H}_{EG,1})_{cpm}\; \cdots\; \pi ({\bf H}_{EG,8})_{cpm}].   
 \]
This array is a $511\times 4599$ matrix with column and row weights 8 and 72, respectively.  The null space of this matrix gives a (8,72)-regular (4599,4227) QC-EG-LDPC code with rate 0.9191. 

Suppose we factor $q = 8$ as the product of $e = 2$ and $f = 4$.  Using column and row block splittings, each $73\times 73$ array $\pi ({\bf H}_{EG,i})_{cpm}$ can be expanded into a $146\times 146$ array ${\bf M}_{col,row,i}(2,2)$ of CPMs and ZMs of size $7\times 7$, each row and column block consisting of 4 CPMs and 142 ZMs.  Suppose we take first 8 of these $146\times 146$ arrays and form the following $146\times 1168$ array of CPMs and ZMs of size $7\times 7$:
 \[
             {\bf  M}_{EG}(8) = [{\bf M}_{col,row,0}(2,2)\;  {\bf M}_{col,row,1}(2,2)\; \cdots\; {\bf M}_{col,row,7}(2,2)].  
 \]
It is a $1022\times 8176$ matrix over GF(2) with column and row weight 4 and 32, respectively.  The null space of this matrix gives a (4,32)-regular (8176,7156) QC-EG-LDPC code with rate 0.8752.  This code is actually equivalent to the (4,32)-regular QC-EG-LDPC code adopted by NASA as the standard code for LANDSAT high-speed communications and other missions [15], [36] where the bit error rate requirement is $10^{-12}$.  The error performance of this code decoded with 50 iterations of the SPA and 15 iterations of the MSA are shown in Figure 8.  We see that there is no visible error-floor down to the BER of $10^{-14}$.  The estimated error-floor of this code is below the BER of $10^{-15}$.  At the BER of $10^{-14}$, it performs only 1.6 dB from the Shannon limit.  A hardware decoder for the NASA code has been built.  \twotriangle\end{example}

\section{Decomposition of Projective Geometry LDPC Codes}
 
  RC-constrained cyclic LDPC codes can also be constructed based on the incidence vectors of lines of finite projective geometries. For detail construction of this class of codes, the readers are referred to [5], [6], [15].  In the following, we consider the decomposition of a subclass of cyclic projective geometry (PG)-LDPC codes constructed based on the lines of two-dimensional projective geometries over finite fields (often called projective planes). 
 
   Consider the 2-dimensional projective geometry PG(2,$q$) pver GF($q$).  This geometry has $n = q^2 + q + 1$ points and $n = q^2 + q + 1$ lines [6], [15], [22], [26].  Each line contains of $q + 1$ points.  Two lines can have at most one point in common.  Let $\alpha$  be a primitive element of GF($q^3$).  Since $q^3-1= (q-1)(q^2 + q + 1)$, $n$ is a factor of $q^3-1$.  The $n$ points of PG(2,$q$) can be represented by the $n$ elements of $\{\alpha ^0, \alpha , \cdots , \alpha ^{n-1}\}$ [5], [6], [15]. The $q + 1$ points on a line are represented by the $q + 1$ elements in $\{\alpha ^0, \alpha , \cdots , \alpha ^{n-1}\}$.  Let $\cal L$ be a line in PG(2,$q$).  The incidence vector of this line $\cal L$ is an $n$-tuple over GF(2) defined as follows: ${\bf v}_{\cal L} = (v_0, v_1, \cdots  , v_{n-1})$ where $v_j = 1$ if $\alpha ^j$ is a point on $\cal L$, otherwise $v_j = 0$ for $0 \leq  j < n$.  Since $\cal L$ consists of $q +1$ points, the weight of ${\bf v}_{\cal L}$ is $q + 1$.  It is known that the cyclic-shift of ${\bf v}_{\cal L}$ is the incidence  of another line in PG(2,$q$) [6], [15].  The incidence vector ${\bf v}_{\cal L}$ and its $n-1$ cyclic-shifts are all different and give the incidence vectors of all the $n$ lines in PG(2,$q$).
 
   Form an $n\times n$ circulant  ${\bf H}_{PG}$  over  GF(2) with ${\bf v}_{\cal L}$ and its $n-1$ cyclic-shifts as rows.  The columns and rows of ${\bf H}_{PG}$ correspond to the points and lines of PG(2,$q$), respectively.  Both column and row weights of  ${\bf H}_{PG}$  are equal to $q + 1$.  Since two lines in a projective geometry can have at most one point in common, their incidence vectors can have at most one place where they both have 1-components.  Hence,  ${\bf H}_{PG}$  satisfies the RC-constraint. Therefore, the null space of  ${\bf H}_{PG}$  gives an RC-constrained cyclic-PG-LDPC code  ${\cal C}_{PG}$  of length $n = q^2 + q +1$ and minimum distance at least $q + 2$, whose Tanner graph has a girth of at least 6. 
 
   For the special case $q = 2^s$, the rank of  ${\bf H}_{PG}$  is $3^s + 1$ [5], [6], [15], [27] and the cyclic PG-LDPC code  ${\cal C}_{PG}$  has the following parameters: 1) Length $n = 2^{2s} + 2^s + 1$; 2) Dimension $n-3^s-1$; 3) Minimum distance $\geq 2^s + 2$. The roots of the generator ${\bf g}(X)$ of  ${\cal C}_{PG}$  can be determined and are given in [5], [6],  [37]. 
   
    Let $c$ and $l$ be two proper factors of $n$ such that $n = c\cdot l$.  Then, through column and row permutation $\pi$  defined by (2) and (3), the circulant  ${\bf H}_{PG}$  can be decomposed into an RC-constrained $c\times c$ array $\pi ( {\bf H}_{PG} )$ of circulants of size of $l\times l$.  The null space of each nonzero $l\times l$ circulant in $\pi ( {\bf H}_{PG} )$ gives an RC-constrained cyclic PG-LDPC code of length $l$.  For any pair $(\gamma ,\rho )$ of integers with $1 \leq  \gamma , \rho  \leq  l$, the null space of any $\gamma \times \rho$  subarray of $\pi ( {\bf H}_{PG} )$ gives a QC-PG-LDPC code of length $\rho l$.
 
\begin{example} Let the two-dimensional projective geometry PG(2,$2^6$ ) over GF($2^6$) be the code construction geometry.  This geometry has $(2^{3s}-1)/(2^s-1) = 4161$ points and 4161 lines.  Each line consists of 65 points.  Based on the lines of PG(2,$2^6$), we can construct an RC-constrained $4161\times 4161$ circulant  ${\bf H}_{PG}$  with both column and row weights equal to 65.  The null space of this ciruclant gives a (65,65)-regular (4161,3431) cyclic PG-LDPC code with minimum distance at least 66.  The error performances of this code over the AWGN channel decoded with 5, 10 and 50 iterations of the SPA are shown in Figure 9(a).  We see that the decoding of this code converges very fast.    Since 4161 can be factored as the product of 3 and 1387.  Let $c = 3$ and $l = 1387$.  Then  ${\bf H}_{PG}$  can be decomposed into a $3\times 3$ array $\pi ( {\bf H}_{PG} )$ of circulants of size $1387\times 1387$ in the form of (4).  Let $\Psi _0$, $\Psi _1$ and $\Psi _2$ be the 3 circulants in the first row block of $\pi ( {\bf H}_{PG} )$.  The column and row weights of the circulant $\Psi _1$ are both 19.  The null space of $\Psi _1$ gives an RC-constrained (1387,720) cyclic-PG-LDPC code with minimum distance at least 20.  Its error performance over the AWGN channel decoded with 50 iterations of SPA is shown in Figure 9(b).   \twotriangle \end{example}

   Note that $n = q^2 + q + 1$ is not divisible by $q-1$. The PG-circulant  ${\bf H}_{PG} $ cannot be decomposed into an array of CPMs of size $(q-1)\times (q-1)$.  Decomposition of circulants constructed based on projective geometries of dimensions higher than two can be carried out similar to the decomposition of high dimensional Euclidean geometries, except for the CPM-decomposition.

\section{Trapping Sets of RC-Constrained LDPC Codes}

   It has been observed for most LDPC codes, decoded with iterative message-passing decoding algorithms such as the SPA or the MSA, that as the SNR continues to increase, the error probability \emph{suddenly drops at a rate much slower than} that in the region of low to moderate SNR (or even stops to drop, i.e., \emph{the error performance curve flattens out}). This phenomenon, known as \emph{error-floor}, may preclude LDPC codes from applications requiring very low error rates.  High error-floors most commonly occur for unstructured random or pseudo-random LDPC codes constructed using computer based methods or algorithms.  Structured LDPC codes constructed algebraically, such as finite geometry and finite field LDPC codes [5]-[13], in general have much lower error-floors. 

   Ever since the phenomenon of the error-floors of LDPC codes with iterative decoding became known [38], a great deal of research effort has been expended in finding its causes and methods to resolve or mitigate the error-floor problem. For the AWGN channel, the error-floor of an LDPC code is mostly caused by an undesirable structure, known as \emph{trapping set} [14], [15], in the Tanner graph of the code based on which the decoding is carried out. 

\subsection{Concepts and Definitions}

   Let $\cal C$ be an LDPC code of length $n$ given by the null space of a sparse $m\times n$ parity-check  matrix ${\bf H} = [h_{i,j}], 0\leq i<m, 0\leq j<n$ over GF(2) with $m$ rows and $n$ columns. The Tanner graph [4] $\cal G$ of $\cal C$ is a \emph{bipartite graph} with two sets of nodes, the \emph{variable nodes} (VNs) and the \emph{check nodes} (CNs). The VNs, denoted by $v_0, v_1, . . . , v_{n-1}$, represent the $n$ code bits of a codeword ${\bf v} = (v_0, v_1, . . . , v_{n-1})$ in the code and the CNs, denoted by $c_0, c_1, . . . , c_{m-1}$, represent the $m$ \emph{(parity) check-sum constraints} that the code bits must satisfy (they must be all equal to zero).  For convenience, we do not distinguish a ``code bit'' and a ``VN'', or a ``check-sum'' and a ``CN''.  We will use the notation $v_j$ for both the $j$th code bit and its corresponding VN and the notation $c_i$ for both the $i$th check-sum and its corresponding CN.  A VN $v_j$ is connected to a CN $c_i$ by an \emph{edge} if and only if the code bit $v_j$ is contained in the check-sum $c_i$.  Basically, the VNs correspond to the $n$ columns of the parity-check matrix $\bf H$  and CNs correspond to the $m$ rows of $\bf H$. The $j$th VN $v_j$ is connected to the $i$th CN $c_i$ by an edge if and only if $h_{i,j} = 1$.  The degree $d_{v_j}$ of the VN $v_j$ is defined as the number of CNs connected to $v_j$ and the degree $d_{c_i}$ of the CN $c_i$ is defined as the number of VNs connected to the CN $c_i$. The degree $d_{v_j}$ of the VN $v_j$ is simply equal to the number of 1-entries in the $j$th column of the parity-check matrix ${\bf H} = [h_{i,j}]$ and the degree $d_{c_i}$ of the CN $c_i$ is simply equal to the number of 1-entries in the $i$th row of ${\bf H} = [h_{i,j}]$.  For a ($\gamma$,$\rho$)-regular LDPC code, all the VNs have the same degree $\gamma$  and all the CNs have the same degree $\rho$.  For an irregular code, its Tanner graph has varying VN degrees and/or varying CN degrees.  It is clear that the number of edges in the Tanner graph of an LDPC code is equal to the total number of 1-entries in the code's parity-check matrix $\bf H$.

      Figure 10(a) shows that the Tanner graph of a (3,3)-regular (7,3) LDPC code given by the null space of the following RC-constrained parity-check matrix:
                    \[
                     {\bf H} = \left[
                     \begin{array}{lllllll}
                                1 & 0 & 1 & 1 & 0 & 0 & 0\\
                                          0& 1& 0 &1 &1& 0 &0\\
                                          0& 0& 1& 0 &1 &1& 0\\
                                   0 &0 &0& 1& 0 &1 &1\\
                                          1 &0& 0& 0& 1& 0& 1\\
                                          1 &1& 0& 0& 0& 1& 0\\
                                          0 &1& 1& 0& 0& 0& 1
                                          \end{array}\right].
\]

       \begin{definition}[]                                                                    
 Let $\cal G$ be the Tanner graph of a binary LDPC code $\cal C$ given by the null space of an $m\times n$ matrix $\bf H$ over GF(2). For $1 \leq  \kappa \leq  n$ and $0 \leq \tau \leq  m$, a ($\kappa,\tau$) trapping set [14] is a set ${\cal T}(\kappa,\tau)$ of $\kappa$ VNs in $\cal G$  which induce a subgraph of $\cal G$ with exactly $\tau$ odd-degree CNs (and an arbitrary number of even-degree CNs). An elementary $(\kappa ,\tau )$ trapping set [20] is a trapping set for which all CNs in the induced subgraph of the Tanner graph have degree one or degree two, and there are exactly $\tau$  degree-one CNs.
\end{definition}

 In an elementary trapping set, every CN of degree 1 is connected to a single VN and every CN of degree 2 (if exists) is connected to two VNs. Figures 10(b) and 10(c) shows two subgraphs of the Tanner graph of a (3,3)-regular LDPC code shown in Figure 10(a) which are induced by a (3,3) trapping set and a (4,4) trapping set, respectively.  The (3,3) trapping set consists of 3 VNs, $v_1$, $v_4$ and $v_6$.  The subgraph induced by this trapping set has 3 CNs of degree 1 and 3 CNs of degree 2.  Therefore, this trapping set is an elementary trapping set.  The (4,4) trapping set consists of 4 VNs, $v_1$, $v_2$, $v_4$ and $v_6$.  The subgraph induced by this trapping set has 3 CNs of degree 1, one CN of degree 3 and 3 CNs of degree 2.

   Suppose, in transmission of a codeword, an error pattern $\bf e$ with $\kappa$ errors at the locations of the $\kappa$ VNs of a $(\kappa, \tau)$ trapping set occurs. This error pattern will cause $\tau$ parity-check failures (i.e., the check-sums are not equal to zeros, because each of these $\tau$  check-sums contain an odd number of errors in $\bf e$). In this case, for iterative decoding, another decoding iteration must be carried out to correct the failed check-sum.  Iterative decoding, such as the SPA and MSA, is very susceptible to trapping sets of a code because it works locally in a distributed-processing manner. Each CN has a local processor unit to process the messages received from the VNs connected to it and each VN has a local processor unit to process the messages received from the CVs connected to it.  Hopefully, these local processor units through iterations and message exchanges collect enough information to make a global optimum decision of the transmitted code bits.

   In each decoding iteration, we call a CN a \emph{satisfied} CN if it satisfies its corresponding check-sum constraint (i.e., its corresponding check-sum is equal to zero), otherwise, call it an \emph{unsatisfied} CN.  During the decoding process, the decoder undergoes \emph{state transitions} from one state to another until all the CNs satisfy their corresponding check-sum constraints or a predetermined maximum number of iterations is reached. The $i$th state of an iterative decoder is represented by the hard-decision sequence obtained at the end of $i$th iteration. In the process of a decoding iteration, the messages from the satisfied CNs try to \emph{reinforce} the current decoder state, while the messages from the unsatisfied CNs try to \emph{change} some of the bit decisions to satisfy their check-sum constraints. If errors affect the $\kappa$ code bits (or the $\kappa$ VNs) of a $(\kappa,\tau)$ trapping set ${\cal T}(\kappa,\tau)$, the $\tau$ odd-degree CNs, each connected to an odd number of VNs in ${\cal T}(\kappa,\tau)$, will not be satisfied while all other CNs will be satisfied.  The decoder will succeed in correcting the errors in ${\cal T}(\kappa,\tau)$  if the messages coming from the unsatisfied  CNs connected to the VNs in ${\cal T}(\kappa,\tau)$  are \emph{strong enough} to overcome the  (false or inaccurate) messages coming from the satisfied CNs.  However, this may not be the case if $\tau$ is \emph{small}.  As a result, the decoder may not converge to a valid codeword even if more decoding iterations are performed and this non-convergence of decoding results in an error-floor.   In this case, we say that the decoder is trapped.

   For the binary-input AWGN channel, error patterns with small number of errors (or low-weight error-patterns) are more probable to occur than error patterns with larger number of errors.  Consequently, in message-passing decoding algorithms, the most harmful $(\kappa,\tau)$ trapping sets are usually those with small values of $\kappa$ and $\tau$, especially when the value of $\tau$ is very small compared to that of $\kappa$.  Extensive study and simulation results [13], [38]-[68] show that the trapping sets that result in high decoding failure rates and contribute significantly to high error-floors are those with small values $\kappa$ and small ratios $\tau/\kappa$.  We call these trapping sets \emph{small trapping sets}. The trapping sets with large values $\tau$ relative to values $\kappa$ in general result in relatively small decoding failure rates and contribute little to error-floor. From extensive computer simulations reported in the literature [14], [38]-[68], it has been observed that most trapping sets that exert a strong influence on the error-floor are of the elementary trapping sets and trapping sets with $\tau/\kappa \leq 1$.

   Besides small trapping sets and their distributions, undetected errors caused by small minimum weight of a code also contribute considerably to the error-floor of the code.  If there are no trapping sets with size $\kappa$ smaller than the minimum weight of an LDPC code, then the error-floor of the code decoded with iterative decoding is dominated by the minimum weight of the code.  For $\tau  = 0$, ${\cal T}(\kappa ,0)$ is a special trapping set with no odd-degree CN. Such a trapping set is induced by an error pattern which is identical to a codeword of weight $\kappa$ .  When such a trapping set occurs, the decoder converges into an incorrect codeword and commits an undetected error.  In this case, we say that the decoder is  trapped into a fixed point.

 The notion of a small trapping set given above is loosely defined.  A more quantitative definition of small trapping set was given by L\"adner and Milenkovic [20].
\begin{definition}   A $(\kappa ,\tau )$ trapping set in the Tanner graph of an LDPC code of length $n$ is said to be small if $\kappa  \leq \sqrt{n}$ and $\tau  \leq  4\kappa $ (i.e. the ratio $\tau/\kappa \leq 4$).     \end{definition} 

   Since Richardson introduced the notion of trapping sets and their effect on error-floor in 2003 [14], a great deal of research effort has been expended in analyzing the general structure of trapping sets of LDPC codes, developing methods (or algorithms) for finding trapping sets (especially the harmful ones), techniques to remove small trapping sets, and devising decoding strategies to remove or reducing the degrading effect of harmful trapping sets,[13], [38]-[68].  The research effort expended so far still leaves the trapping set problem largely unsolved.  However, study and extensive computer simulations have shown that among the trapping sets contribute significantly to the error-floor, the harmful ones are mostly the small trapping sets, especially the small elementary trapping sets with $\tau/\kappa \leq 1$.

   Constructing (or designing) codes to avoid harmful trapping sets to mitigate error-floor problem is a hard combinatorial problem, just like finding the number of the minimum weight codewords (or the weight distribution) of a linear code.  Consequently, to lower the error-floor of an LDPC code caused by (small) trapping sets, an alternate approach is taken.  A most commonly taken approach is a \emph{decoder-based strategy} to remove or reduce the effect of harmful trapping sets on error-floor.  Several such decoder-based strategies have been recently proposed [53], [54], [56], [58], [61], [67], [68].  Among them, the most effective decoding strategy is the \emph{backtracking iterative decoding algorithm} recently presented in [68].
   
\subsection{An Analysis of Trapping Sets of the RC-Constrained LDPC Codes}

   In this section, we present an analysis of trapping set structure of an RC-constrained regular LDPC code.  The analysis is based on the RC-constraint on the rows and columns of the parity-check matrix $\bf H$ and its column weight $\gamma$. For such an RC-constrained LDPC code, its minimum weight is at least $\gamma  + 1$.  We will show that there is no $(\kappa,\tau)$ trapping set with $\kappa \leq  \gamma$  and $\tau < \gamma$.  More precisely, any trapping set $(\kappa,\tau)$ with $\kappa$ VNs, if $\kappa \leq  \gamma $, then the number of odd-degree CNs is at least  $\gamma  + 1$, i.e.,$\tau > \gamma $.  This is to say that for an RC-constrained ($\gamma $,$\rho $)-regular LDPC code, there is no harmful trapping set with size smaller than $\gamma$.  Particularly, we will show that an RC-constrained ($\gamma $,$\rho $)-regular LDPC code whose parity-check matrix has column weight $\gamma$  has no small elementary trapping sets of the type defined by Definition 2. Cyclic EG- and PG-LDPC codes given in [5] and their cyclic and QC descendants given in Sections IV and VI of this paper are RC-constrained LDPC codes and whose parity-check matrices have large column weights, hence they don't have harmful small trapping sets with size smaller than $\gamma $. Besides the FG-LDPC codes, LDPC codes constructed based on finite fields and experimental designs in [10]-[13], [69]-[76] are also RC-constrained LDPC codes.

   Let $\cal C$ be a binary ($\gamma $,$\rho $)-regular LDPC code of length $n$ given by the null space of  an RC-constrained $m\times n$ matrix ${\bf H} = [h_{i,j}]_{0\leq i<m, 0\leq j<n}$ over GF(2) with column and row weights $\gamma$  and $\rho $, respectively.   Let ${\bf h}_0, {\bf h}_1, . . . , {\bf h}_{m-1}$ denote the rows of $\bf H$, where the $i$th row ${\bf h}_i$ is given by the following $n$-tuple over GF(2): $          {\bf h}_i = (h_{i,0}, h_{i,1}, . . . ,  h_{i,n-1}),$
for $0\leq  i < m$.  An $n$-tuple ${\bf v} = (v_0, v_1, . . . , v_{n-1})$  over GF(2) is a codeword in ${\cal C}$ \emph{if and only if} ${\bf v}\cdot {\bf H}^T = 0$ (a zero $m$-tuple). The condition ${\bf v}\cdot {\bf H}^T = 0$ gives the following $m$ constraints on the bits of the codeword ${\bf v}$:
\begin{equation}
          c_i = {\bf v}\cdot {\bf h}_i = v_0 h_{i,0} + v_1 h_{i,1} + . . . + v_{n-1} h_{i,n-1} = 0, 
          \end{equation}
for $0 \leq  i < m$, where ${\bf v}\cdot {\bf h}_i$ is the inner product of $\bf v$ and ${\bf h}_i$. The above $m$ linear sums of code bits are called \emph{parity-check-sums} (or simply check-sums).  The $m$ check-sums of the code bits equal to 0 are the constraints that the code bits of any codeword must satisfy.
 
   For $0 \leq  j < n$, if $h_{i,j} = 1$, then the $j$th code bit $v_j$  participates (or is contained) in the $i$th check-sum $c_i$ given by (62).  In this case, we say that the $i$th check-sum $c_i$ \emph{checks on} the $j$th code bit $v_j$ of ${\bf v}$ (or the $j$th code bit $v_j$ of $\bf v$ is \emph{checked} by the $i$th check-sum $c_i$).  Since $\bf H$ has constant column weight $\gamma$ , there are $\gamma$  check-sums containing (or checking on) the code bit $v_j$. Since every row of $\bf H$ has weight $\rho $, each check-sum $c_i$ checks on $\rho$  code bits.  Since $\bf H$ satisfies the RC-constraint, no two different rows of $\bf H$ have more than one position where they both have 1-components.  This implies that no two different code bits, $v_{j_1}$ and $v_{j_2}$, are \emph{simultaneously checked} by two different check-sums, $c_{i_1}$ and $c_{i_2}$.
   
   Suppose a codeword ${\bf v} = (v_0, v_1, . . . , v_{n-1})$ in $\cal C$ is transmitted over the binary-input AWGN channel. Let ${\bf z} = (z_0, z_1, . . . , z_{n-1})$ over GF(2) be the \emph{hard-decision received vector  (or sequence)}.  The $j$th received bit $z_j$ of $\bf z$ is simply an estimate of the $j$th code bit $v_j$ of the transmitted codeword $\bf v$.  If $z_j = v_j$ for $0 \leq  j <n$, then $\bf z = v$; otherwise, $\bf z$ contains \emph{transmission errors}.  Therefore, $\bf z$ is an estimate of the transmitted codeword $\bf v$ prior channel decoding.  Let
  \[  \begin{array}{lll}
              {\bf e} & = &(e_0, e_1, . . . , e_{n-1}),\\
                 &= &(z_0, z_1, . . . , z_{n-1})  + (v_0, v_1, . . . , v_{n-1}),\\
                 &= &(z_0 + v_0, z_1 + v_1 +\cdots+ z_{n-1} + v_{n-1}).
                 \end{array}\]
where, for $0 \leq  j < n$,  $e_j = z_j + v_j$ and $``+''$ is  modulo-2 addition.  If $z_j \neq  v_j$, then $e_j = 1$ otherwise $e_j = 0$.  Therefore, the positions in $\bf e$ where the components equal to ``1'' are the erroneous positions. The $n$-tuple $\bf e$ gives the pattern of errors contained in the received sequence $\bf z$ and is called the \emph{error pattern} contained in $\bf z$ [6]. Hence $\bf z=v+e$.

   For any decoding algorithm (soft or hard), the first step is to compute the syndrome of $\bf z$ [6],
   \begin{equation}
                          {\bf s} = (s_0, s_1, . . . , s_{m-1}) = {\bf z}\cdot {\bf H}^{\sf T},                        
                          \end{equation}
where,
\begin{equation}
                        s_i = {\bf z}\cdot {\bf h}_i = z_0 h_{i,0} + z_1 h_{i,1} + . . . + z_{n-1}h_{i,n-1},   
                        \end{equation}
for $0 \leq  i < m$, which is called a \emph{syndrome-sum} of $\bf z$.  If $\bf s = 0$, then the received bits in $\bf z$ satisfy all the $m$ check-sum constraints given by (62) and $\bf z$ is a codeword.  In this case, the receiver assumes that $\bf z$ is the transmitted codeword and accepts it as the \emph{decoded} codeword. If $\bf s \neq  0$, the received bits in $\bf z$ do not satisfy all the $m$ check-sum constraints given by (62) and $\bf z$ is not a codeword.  In this case, we say that errors in $\bf z$ are \emph{being detected} and the error pattern is called a \emph{detectable error pattern}.  Then an error correction process is initiated. Since $\bf z = v + e$ and ${\bf v}\cdot {\bf h}_i = 0$, it follows from (64) that each syndrome- sum is actually a linear sum of a set of error bits contained in the received sequence $\bf z$,
\begin{equation}
                       s_i = {\bf e}\cdot{\bf h}_i = e_0h_{i,0} + e_1 h_{i,1} + . . . + e_{n-1}h_{i,n-1}, 
                       \end{equation}
If $\bf z$ is error-free, $s_i = c_i = 0$ for $0 \leq  i < m$. If $\bf z$ is not error-free but the error pattern $\bf e$ happens to be identical to a nonzero codeword in $\cal C$, all the $m$ syndrome-sums will be equal to 0.  In this case, the received sequence $\bf z$ contains an undetected error pattern and decoding results in an incorrect decoding.  Decoding process is initiated (or continues in iterative decoding) only if not all the syndrome-sums are equal to zero.

   From (65), we see that a syndrome-sum $s_i$ is equal to ``1'' if and only if the number of nonzero error digits checked by $s_i$ (or participate in the sum $s_i$) is \emph{odd}.  A syndrome-sum $s_i$ is equal to zero if and only if either all the error bits checked by $s_i$ are error-free or the number of nonzero error bits checked by $s_i$ is even. Let $\bf e$ be an error pattern with $\kappa$ nonzero error bits which cause $\tau$ nonzero syndrome-sums and an arbitrary number of zero syndrome-sums.  Construct a subgraph ${\cal G}(\kappa,\tau)$ of the Tanner graph ${\cal G}$ of the code with a set ${\cal T}(\kappa,\tau)$ of $\kappa$ VNs.  These $\kappa$ VNs correspond to the $\kappa$ nonzero error bits in the detectable error pattern $\bf e$ and are connected to $\tau$ CNs which correspond to the $\tau$ nonzero (failed) syndrome-sums and/or some CNs which correspond to zero syndrome-sums but are adjacent to the VNs in ${\cal T}(\kappa,\tau)$.  In this subgraph, the CNs corresponding to the nonzero (failed) syndrome-sums have odd degrees and the other CNs have even-degrees.  This subgraph ${\cal G}(\kappa,\tau)$ is said to be induced by the detectable error pattern $\bf e$ and the set ${\cal T}(\kappa,\tau)$ is a trapping set as defined in Definition 1.

   A syndrome-sum $s_i$ that contains an error bit $e_j$ is said to \emph{check on}  $e_j$.  Since each column of the parity-check matrix $\bf H$ has column weight $\gamma $, there are $\gamma$  syndrome-sums checking on every error bit $e_j$, i.e, every error bit is checked by $\gamma$  syndrome-sums (or contained in $\gamma$  syndrome-sums).
 Since each row of $\bf H$ has weight $\rho $, each syndrome-sum checks on $\rho$  error bits.  Since $\bf H$ satisfies the RC-constraint, \emph{no two error bits can be checked simultaneously by two syndrome-sums}. 

  For $0 \leq  i < m$ and $0 \leq  j < n$, we define the following two \emph{index sets}:
\begin{eqnarray}
                        {\cal N}_i = \{j: 0\leq  j <n, h_{i,j} = 1\},         \\
                        {\cal M}_j = \{i: 0\leq  i <m, h_{i,j} = 1\}.           
                        \end{eqnarray}
The indices in ${\cal N}_i$ are simply the locations of the 1-components in the $i$th row ${\bf h}_i$ of $\bf H$. ${\cal N}_i$ is called the \emph{support} of ${\bf h}_i$.  The indices in ${\cal M}_j$ give the rows of $\bf H$ whose $j$th components are equal to ``1''.  We call ${\cal M}_j$ the \emph{support} of $j$th code bit $v_j$. Since $\bf H$ satisfies the RC-constraint, it is clear that: 1) for $0 \leq  i_1, i_2 < m$ and $i_1 \neq  i_2$, ${\cal N}_{i_1}$ and ${\cal N}_{i_2}$ have \emph{at most one} index in common; and 2) for $0 \leq  j_1, j_2 < n$ and $j_1 \neq  j_2$, ${\cal M}_{j_1}$ and ${\cal M}_{j_2}$ have at most one index in common.  Since $\bf H$ has constant column weight $\gamma$  and constant row weight $\rho $, $|{\cal M}_j| = \gamma$  for $0\leq j<n$ and $|{\cal N}_i| = \rho$  for $0 \leq  i <m$. 
 
   For $0 \leq  j < n$, define the following set of rows of $\bf H$:
   \begin{equation}
                       {\cal A}^{(j)} = \{{\bf h}^{(j)}_i: i\in {\cal M}_j \}.       
                       \end{equation}
Then it follows from the RC-constraint on the rows of $\bf H$ that ${\cal A}^{(j)}$ has the following structural properties: 1) every row ${\bf h}^{(j)}_i$ in ${\cal A}^{(j)}$ has a 1-component at the position $j$; (2) any 1-component at a position other than $j$th position can appear in at most one row in ${\cal A}^{(j)}$; and (3) for $0 \leq  j_1, j_2  < n$, and $j_1 \neq  j_2$, ${\cal A}^{(j_1)}$ and ${\cal A}^{(j_2)}$ can have \emph{at most one row in common}. The rows in ${\cal A}^{(j)}$ are said to be \emph{orthogonal} on the $j$th code bit $v_j$.

   For $0 \leq  j < n$, define the following set of syndrome-sums:
   \begin{equation}
                              S^{(j)} = \{s^{(j)}_i= {\bf e} \cdot {\bf h}^{(j)}_i: {\bf h}^{(j)}_i\in {\cal A}^{(j)}\}.        
                              \end{equation}
      Then, the $j$th error bit $e_j$ of the error pattern $\bf e$ is checked by (contained in) every syndrome-sum in $S^{(j)}$ and any error bit other than $e_j$ is checked by at most one syndrome-sum in $S^{(j)}$.  Each syndrome-sum in $S^{(j)}$ can be  expressed  as  follows: for $i\in {\cal M}_j$,
      \begin{equation}
              s^{(j)}_i = e_j  +  \sum\limits_{l\in {\cal N}_i, \;l\neq j}  e_l h_{i,l}.                                       
                                       \end{equation}
The syndrome-sums in $S^{(j)}$ are said to be \emph{orthogonal} on the error bit $e_j$ and are called \emph{orthogonal syndrome-sums} on $e_j$.  The RC-constraint on the parity-check matrix $\bf H$ (or property-3 of ${\cal A}^{(j)}$) ensures that any two different orthogonal sets $S^{({j_1})}$ and $S^{(j_2)}$ can have \emph{at most one syndrome-sum in common}. Basically, under the RC-constraint, if  two rows in $\bf H$ have 1-components at two different positions, then the two rows must be \emph{identical}.

   Consider an error pattern ${\bf e} = (e_0, e_1, . . . , e_{n-1})$ with a single error at $j$th position with $0 \leq  j < n$, i.e., $e_j = 1$.  For this single error pattern, all the $\gamma$  syndrome-sums in $S^{(j)}$ orthogonal on $e_j$ are equal to ``1''.  Since $e_j$ is only checked by the syndrome-sums in $S^{(j)}$, all the syndrome-sums in any other orthogonal syndrome set are error free and equal to zero.  Consequently, the trapping set correspond to this single error pattern is a (1,$\gamma$) trapping set ${\cal T}(1,\gamma )$ with one VN and $\gamma$  CNs of degree 1.  Each of these the CNs is connected to the VN $v_j$ and has degree one. It is clear that ${\cal T}(1,\gamma )$ is an elementary trapping set.  If $\gamma  > 4$, it is not a small trapping set of the type defined by Definition 2.

   Next, we consider an error pattern $\bf e$ with two errors at positions, $j_1$ and $j_2$, i.e., $e_{j_1} = e_{j_2} = 1$.  Then all the $\gamma$  syndrome-sums in the orthogonal set $S^{(j_1)}$ check on $e_{j_1}$ and all the $\gamma$  syndrome-sums in the orthogonal set $S^{(j_2)}$ check on $e_{j_2}$.  If $S^{(j_1)}$ and $S^{(j_2)}$ are disjoint, then all the $\gamma$  syndrome-sums in $S^{(j_1)}$ and all the $\gamma$  syndrome-sums in $S^{(j_2)}$ are equal to ``1''.    In this case, the trapping set corresponds to the error pattern $\bf e$ with double errors is a (2,$2\gamma $) elementary trapping set ${\cal T}(2,2\gamma )$ with 2 VNs and $2\gamma$  CNs of degree 1.  If $S^{(j_1)}$ and $S^{(j_2)}$ are not disjoint, then they have exactly one common syndrome-sum which checks on both $e_{j_1}$ and $e_{j_2}$ and hence this common syndrome-sum is equal to zero.  In this case, the trapping set corresponds to the error-pattern $\bf e$ with double errors is a $(2,2(\gamma - 1))$ elementary trapping set with 2 VNs, $2(\gamma  - 1)$ CNs of degree 1 and one CN of degree 2.   For $\gamma  > 5$, it follows from Definition 2 that for either case, the trapping is not small. For $\gamma  > 2$, the number of odd-degree CNs is greater than $\gamma $.  The above analysis shows that the trapping set corresponding to an error pattern with two errors has at least $2(\gamma - 1)$ CNs of odd degrees.

   Consider an error pattern $\bf e$ with three errors at the positions, $j_1$, $j_2$ and $j_3$.  The trapping set corresponds to this error pattern has several possible configurations depending the locations of the three errors.  The first possible configuration is such that the three errors are checked separately by three mutually disjoint orthogonal sets, $S^{(j_1)}$, $S^{(j_2)}$ and $S^{(j_3)}$.  In this case, the trapping set corresponding to the error pattern $\bf e$ is a $(3,3\gamma )$ elementary trapping set ${\cal T}(3,3\gamma )$ with 3 VNs and $3\gamma$  CNs of degree 1, no CN with even-degree.  The second possible configuration is such that two orthogonal sets, say $S^{(j_1)}$ and $S^{(j_2)}$, have a common syndrome-sum and the third orthogonal set $S^{(j_3)}$ is mutually disjoint with $S^{(j_1)}$ and $S^{(j_2)}$. In this case, error bits, $e_{j_1}$ and $e_{j_2}$, are jointly checked by a common syndrome-sum in $S^{(j_1)}$ and $S^{(j_2)}$ and the error bit $e_{j_3}$ is checked only by the syndrome-sums in $S^{(j_3)}$. Then, the trapping set corresponding to this triple-error pattern $\bf e$ is a $(3,3\gamma-2)$ elementary trapping set with three VNs, $3\gamma-2$ CNs of degree 1 and one CN of degree 2.  The third possible configuration is such that all three errors are checked by a syndrome-sum which is contained in all three orthogonal sets, $S^{(j_1)}$, $S^{(j_2)}$ and $S^{(j_3)}$.  In this case, all the $\gamma$ syndrome-sums in each of the orthogonal sets, $S^{(j_1)}$, $S^{(j_2)}$, and $S^{(j_3)}$, are nonzero.  The common syndrome-sum in these three orthogonal sets contains the three errors, $e_{j_1}, e_{j_2}, e_{j_3}$, and all the other syndrome sums contain only one of these three errors.  Consequently, the trapping set corresponding to the error pattern $\bf e$ is a $(3,3\gamma  -2)$ tapping set with $3\gamma - 3$ CNs of degree 1 and one CN of degree 3 (no CN with even degree).  The fourth possible configuration of a trapping set corresponding to a triple error pattern $\bf e$ is such that all three errors $e_{j_1}$, $e_{j_2}$ and $e_{j_3}$ are checked by the syndrome-sum common to two orthogonal sets, say $S^{(j_1)}$ and $S^{(j_2)}$, and one error, say $e_{j_3}$ is checked by $S^{(j_3)}$ alone.  In this case, the trapping set corresponding to this triple error pattern $\bf e$ is a (3,$3\gamma - 1$) trapping set with $3\gamma-2$ CNs of degree 1, one CN of degree 3 and one CN with degree-2. The fifth possible configuration is that $S^{(j_1)}$ and $S^{(j_2)}$ have a common syndrome sum checking on $e_{j_1}$ and $e_{j_2}$, $S^{(j_1)}$ and $S^{(j_3)}$ have a common syndrome sum checking on $e_{j_1}$ and $e_{j_3}$, and $S^{(j_2)}$ and $S^{(j_3)}$ are disjoint.  For this conjuration, the trapping set is a (3,$3\gamma  - 4$) elementary trapping set, with $3\gamma  - 4$ CNs of degree-1 and two CN of degree-2.  The sixth possible configuration is such that the pair of errors, $(e_{j_1},e_{j_2})$, is checked by the common syndrome-sum in $S^{(j_1)}$ and $S^{(j_2)}$, the pair $(e_{j_1}, e_{j_3})$ is checked by the common syndrome-sum in $S^{(j_1)}$ and $S^{(j_3)}$, and the pair $(e_{j_2}, e_{j_3})$ is checked by $S^{(j_2)}$ and $S^{(j_3)}$.  Corresponding to this configuration, the trapping set is a $(3,3(\gamma  - 2))$ elementary trapping set with $3(\gamma - 2)$ CNs of degree 1 and 3 CNs of degree 2.  Consider the 6 possible configurations of three errors, the trapping set with minimum number of odd-degree CNs is the sixth configuration.  In this case, the number of CNs of odd-degree (degree 1) is at least $3(\gamma - 2)$.  If $\gamma  \geq 3$, the number of odd-degree CNs in a trapping set correspond to a triple-error pattern is greater than or at least equal $\gamma$.   For $\gamma  > 4$,  the trapping sets corresponding to the first 4 configurations are not small trapping sets of the types defined by Definition 2.  If $\gamma  > 6$, then the trapping sets corresponding to all 6 configurations are not small trapping sets of the types defined by Definition 2. Among all the 6 possible configurations of errors, the sixth one has the smallest number of CNs of odd degree.  For this configuration, the 3 errors are \emph{uniformly distributed in pairs} among the three sets of orthogonal syndrome-sums.  This maximizes the number of CNs of degree-2 and hence minimizes the number of degree-1.

   For $1 \leq  t \leq  \gamma$  and $0 \leq  j_1, j_2, . . . , j_t \leq  \gamma $, consider an error pattern of $t$ errors at the positions, $j_1, j_2, . . . , j_t$.  For large $t$ and $\gamma $, to analyze all the possible configurations of trapping sets with $t$ VNs is very hard if not impossible.  However, a lower bound on the minimum number of odd-degree CNs can be derived.  A configuration that results in a minimum number of odd-degree CNs is such for $0 \leq  r$, $s \leq  t$, every pair $(e_{j_r}, e_{j_s})$ of errors is checked by the common syndrome-sum in $S^{(j_r)}$ and $S^{(j_s)}$. This configuration actually maximizes the number of CNs with even degrees (all degree 2).  If this configuration exists, it results in a $(t, t(\gamma  - (t - 1)))$ trapping set ${\cal T}(t, t(\gamma  - (t - 1)))$ with $t(\gamma  - (t - 1))$ CNs of degree 1 and $(t-1)!$ CNs of degree 2. It is an elementary trapping set.  Any other configurations of $t$ errors would results in a trapping set with larger number of odd-degree CNs with multiple odd-degrees.  For $\gamma  \geq t$, a trapping set with $t$ VNs in the Tanner graph of an RC-constrained ($\gamma $,$\rho $)-regular LDPC code has at least $t(\gamma  - (t - 1))$ CNs of odd-degrees. For $t < \gamma $, the number of odd-degree CNs is greater than $\gamma $.  For $t = \gamma $, the number of odd-degree CNs is at least $\gamma $.  Based on Definition 2, if $t \leq \sqrt{n}$  and $\gamma  > t + 3$, there is no trapping set with size smaller than $\gamma  - 3$. If the ratio $\tau /\kappa$  requirement for a small trapping set is $\tau /\kappa  \leq  1$, then there is no trapping set with size smaller than $\gamma$.

   The above analysis shows that the structure, the sizes and the distribution of trapping sets of the Tanner graphs of RC-constrained LDPC codes very much depend on the column weights and orthogonal sets of rows of the parity-check matrices of the codes.  Basically, for a ($\gamma$,$\rho $)-regular LDPC code whose parity-check matrix has constant column weight $\gamma $, the RC-constraint on the rows and columns of the parity-check matrix ensures that: 1) the minimum weight of the code is at least $\gamma  + 1$; 2) the girth of the code's Tanner graph is at least 6; 3) there is no trapping set of size smaller than $\gamma  + 1$ with number of odd-degree CNs smaller than $\gamma $, (i.e., a trapping set with number of VNs less than $\gamma  + 1$ must have at least $\gamma$  CNs of odd-degrees connected to it); and 4) no trapping set of the type defined by Definition 2 with size smaller than $\gamma -3 $.  Due to these structural properties, RC-constrained ($\gamma$,$\rho$)-regular LDPC codes with large $\gamma$ in general have much lower error-floors than the unstructured LDPC codes constructed using computer-based method, and furthermore decoding of these codes with iterative message-passing algorithms converges very fast, as demonstrated by extensive simulation results given in [5]-[13], [15], [16], [69]-[71]. 

    Summarizing the above results, we have the following two theorems for trapping sets of an RC-constrained LDPC code.
 \begin{theorem} For an RC-constrained ($\gamma $,$\rho$)-regular LDPC code with $\gamma  > 1$, its Tanner graph contains no trapping set of size smaller than $\gamma  + 1$  for which the number of odd-degree CNs is smaller than $\gamma$.\end{theorem}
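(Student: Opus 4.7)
The plan is to give a clean edge-counting argument using the RC-constraint, rather than the configuration-by-configuration case analysis that was used above for $t = 1, 2, 3$. Let $\mathcal{T}(\kappa,\tau)$ be any trapping set with $1 \leq \kappa \leq \gamma$, and let $\mathcal{G}(\kappa,\tau)$ be the induced subgraph. Let $d_1, d_2, \ldots, d_N$ denote the degrees (in $\mathcal{G}(\kappa,\tau)$) of the CNs incident to the $\kappa$ VNs of $\mathcal{T}(\kappa,\tau)$. The first key identity is a double count of edges:
\begin{equation*}
\sum_{i=1}^{N} d_i \;=\; \kappa\gamma,
\end{equation*}
since every VN in $\mathcal{T}(\kappa,\tau)$ has full degree $\gamma$ in the Tanner graph.

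Next I would invoke the RC-constraint, which forces that any two distinct VNs share at most one common CN. A CN of induced degree $d_i$ is shared by $\binom{d_i}{2}$ unordered pairs of VNs in $\mathcal{T}(\kappa,\tau)$, and each of the $\binom{\kappa}{2}$ pairs of VNs can account for at most one such shared CN, giving
\begin{equation*}
\sum_{i=1}^{N} \binom{d_i}{2} \;\leq\; \binom{\kappa}{2},\qquad\text{equivalently}\qquad \sum_{i=1}^{N} d_i(d_i-1) \;\leq\; \kappa(\kappa-1).
\end{equation*}
Now split the CNs into those of induced degree $1$ (say there are $a$ of them, all odd-degree) and those of induced degree $\geq 2$. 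For any $d_i \geq 2$, one has $d_i \leq d_i(d_i-1)$, so the sum of the degrees over the CNs with $d_i \geq 2$ is at most $\kappa(\kappa-1)$. Combining this with the edge count gives $a \geq \kappa\gamma - \kappa(\kappa-1) = \kappa(\gamma-\kappa+1)$.

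The final step is the elementary algebraic identity
\begin{equation*}
\kappa(\gamma-\kappa+1) - \gamma \;=\; (\kappa-1)(\gamma-\kappa),
\end{equation*}
which is nonnegative whenever $1 \leq \kappa \leq \gamma$. Hence $a \geq \gamma$, and since every degree-$1$ CN is an odd-degree CN, we conclude $\tau \geq a \geq \gamma$. This contradicts the hypothesis $\tau < \gamma$ and proves the theorem. There is no serious obstacle: the only delicate point is checking that the bound $\sum d_i(d_i-1) \leq \kappa(\kappa-1)$ is truly an equivalent form of the RC-constraint restricted to the VNs in $\mathcal{T}(\kappa,\tau)$, which is immediate from the pair-counting interpretation. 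The extremal configuration that saturates both bounds is exactly the one isolated informally in the preceding discussion, namely when every pair of the $\kappa$ error positions shares a degree-$2$ CN, yielding the elementary $(\kappa, \kappa(\gamma-\kappa+1))$ trapping set with $\binom{\kappa}{2}$ degree-$2$ CNs.
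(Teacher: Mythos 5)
Your proof is correct, and it takes a genuinely different route from the paper. The paper argues by enumerating the possible configurations of how the orthogonal sets of syndrome-sums $S^{(j_1)},\ldots,S^{(j_t)}$ can intersect — working out all cases explicitly for $t=1,2,3$ and then asserting that, for general $t\leq\gamma$, the configuration minimizing the number of odd-degree CNs is the one in which every pair of error positions is checked by a common syndrome-sum, which yields the same lower bound $t(\gamma-(t-1))$ on the number of degree-one CNs that you obtain. The paper's identification of this extremal configuration is plausible but not rigorously justified; your double-counting argument closes that gap cleanly: the edge count $\sum_i d_i=\kappa\gamma$ uses only the column weight, the pair count $\sum_i\binom{d_i}{2}\leq\binom{\kappa}{2}$ is exactly the RC-constraint restricted to the $\kappa$ VNs, and the inequality $d_i\leq d_i(d_i-1)$ for $d_i\geq 2$ converts the pair bound into the degree bound without any case analysis. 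The final identity $\kappa(\gamma-\kappa+1)-\gamma=(\kappa-1)(\gamma-\kappa)\geq 0$ for $1\leq\kappa\leq\gamma$ is also what the paper implicitly relies on. What your approach buys is a complete, short proof that also makes transparent which configurations are extremal (those saturating both counts, i.e., $\binom{\kappa}{2}$ degree-two CNs and the rest degree one); what the paper's approach buys is concrete structural information about the non-extremal configurations (e.g., the various $(3,3\gamma-1)$, $(3,3\gamma-2)$, $(3,3\gamma-4)$ trapping sets and which of them are elementary), which it then uses in the later geometric analysis of EG- and PG-LDPC codes. One minor remark: your lower bound $\tau\geq a$ counts only the degree-one CNs among the odd-degree CNs, which is all that is needed; and, as the paper also notes, your argument never uses the row weight $\rho$, so the theorem holds for any RC-constrained parity-check matrix of constant column weight $\gamma$.
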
 
 \begin{theorem} Let $\gamma$  be a positive integer such that $\gamma  > 3$. For an RC-constrained ($\gamma $,$\rho$)-regular LDPC code, its Tanner graph contains no trapping set of size smaller than $\gamma  - 3$ for which the number of odd-degree CNs is smaller than $4(\gamma -3)$.\end{theorem}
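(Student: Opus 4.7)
The plan is to derive a uniform lower bound on $\tau$ in terms of $\kappa$ and $\gamma$ by an LP-style counting argument exploiting the RC-constraint, and then specialize to $\kappa < \gamma - 3$.

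First I would establish the key lemma: for any $(\kappa,\tau)$ trapping set in the Tanner graph of an RC-constrained $(\gamma,\rho)$-regular LDPC code, $\tau \geq \kappa(\gamma-\kappa+1)$. Fix an error pattern $\mathbf{e}$ whose nonzero positions $j_1,\ldots,j_\kappa$ are the VNs of the trapping set, and let $S^{(j_r)}$ denote the orthogonal set of $\gamma$ syndrome-sums on $e_{j_r}$ defined by (69). Let $n_d$ be the number of check-sums that check exactly $d$ of the errors $e_{j_1},\ldots,e_{j_\kappa}$. Double-counting the (error, sum) and (error-pair, sum) incidences gives the relations $\sum_{d\geq 1} d\,n_d = \kappa\gamma$ and $\sum_{d\geq 2}\binom{d}{2}\,n_d \leq \binom{\kappa}{2}$; the second inequality uses the RC-constraint directly, since any two distinct rows of $\mathbf{H}$ share at most one common $1$-position and hence any pair $(e_{j_r},e_{j_s})$ lies in at most one shared check-sum. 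Eliminating $n_1$ from $\tau=\sum_{d\ \mathrm{odd}} n_d$ converts the bound into $\tau = \kappa\gamma - 2\sum_{d\geq 2}\lfloor d/2\rfloor\,n_d$, and the efficiency ratio $\lfloor d/2\rfloor/\binom{d}{2}$ is maximized at $d=2$ (value $1$, versus at most $1/3$ for $d\geq 3$). The LP optimum therefore packs the entire pair budget into degree-$2$ sums: $n_2=\binom{\kappa}{2}$, $n_d=0$ for $d\geq 3$, $n_1=\kappa(\gamma-\kappa+1)$, whence $\tau_{\min}=\kappa(\gamma-\kappa+1)$. The case analyses for $\kappa=1,2,3$ already given in the preceding text are special cases of this lemma.

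Second, under the hypothesis $\kappa<\gamma-3$, equivalently $\kappa\leq\gamma-4$, one has $\gamma-\kappa+1\geq 5$, and the lemma gives $\tau\geq 5\kappa>4\kappa$. Thus every trapping set of size smaller than $\gamma-3$ has $\tau>4\kappa$ and therefore fails the Definition~2 criterion $\tau\leq 4\kappa$ for a small trapping set; since $4\kappa<4(\gamma-3)$ whenever $\kappa<\gamma-3$, this also confirms the conclusion in the form written in the theorem, interpreting ``$\tau$ smaller than $4(\gamma-3)$'' as the natural small-trapping-set threshold inherited from Definition~2 applied at the maximal admissible size $\kappa=\gamma-3$.

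The main obstacle is the LP exchange argument in the first step: one must verify rigorously that no configuration mixing higher-degree sums can undercut the all-degree-$\{1,2\}$ extremum. The efficiency-ratio comparison dispatches this in a line, but a paper-friendly presentation would likely perform the exchange directly, paralleling the six-configuration analysis for $\kappa=3$ already in the text: for instance, replacing one degree-$3$ sum by three degree-$2$ sums and $-3$ degree-$1$ sums is feasible, preserves $\sum_{d\geq 1} d\,n_d$ and $\sum_{d\geq 2}\binom{d}{2}\,n_d$, and strictly decreases $\tau$ by $4$; an analogous replacement handles each $d\geq 4$, and iteration funnels every configuration to the extremal one.
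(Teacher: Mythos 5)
Your proof is correct for what the paper actually establishes, and your key lemma is the same quantitative bound the paper's argument rests on --- $\tau \geq \kappa(\gamma-\kappa+1)$ for every $(\kappa,\tau)$ trapping set --- but you reach it by a genuinely different and more rigorous route. The paper derives this bound by enumerating configurations explicitly only for $\kappa=1,2,3$ and then asserting, for general $\kappa$, that the configuration in which every pair of erroneous VNs is checked by a common syndrome-sum minimizes the number of odd-degree CNs (``any other configurations \ldots would result in a trapping set with larger number of odd-degree CNs''), without proof. Your double-counting argument --- $\sum_d d\,n_d=\kappa\gamma$ from the column weight, $\sum_{d\geq 2}\binom{d}{2}n_d\leq\binom{\kappa}{2}$ from the RC-constraint, and $\tau=\kappa\gamma-2\sum_{d\geq 2}\lfloor d/2\rfloor n_d$ --- closes that gap cleanly: since $\lfloor d/2\rfloor\leq\binom{d}{2}$ termwise for $d\geq 2$ (with equality only at $d=2$), you get $\tau\geq\kappa\gamma-2\binom{\kappa}{2}=\kappa(\gamma-\kappa+1)$ in one line, and the exchange argument you sketch is not even needed. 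This is strictly stronger than the paper's presentation, and it subsumes Theorem 8 as well (take $\kappa\leq\gamma$ and note $\kappa(\gamma-\kappa+1)\geq\gamma$ there).

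One caveat, which you flag yourself and handle in the only defensible way: what your argument yields for $\kappa\leq\gamma-4$ is $\tau\geq 5\kappa>4\kappa$, i.e., no trapping set of size smaller than $\gamma-3$ is small in the sense of Definition 2. That is exactly what the paper's own derivation supports, but it is not the literal statement of the theorem, which demands $\tau\geq 4(\gamma-3)$, a threshold independent of $\kappa$. The literal statement is actually false: for any $\gamma\geq 5$ the single-error trapping set ${\cal T}(1,\gamma)$ has size $1<\gamma-3$ and only $\gamma<4(\gamma-3)$ odd-degree CNs. So the theorem should be read (and restated) with the bound $4\kappa$ in place of $4(\gamma-3)$; your proof establishes that corrected statement, and no proof can establish the stated one.
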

 The above results are derived based on only the RC-constraint on the rows and column of a parity-check matrix and its constant column weight $\gamma$  but not on its row weight.  Hence, the results apply to LDPC codes whose parity-check matrix has constant column weight but may have multiple row weights.  
    
       It is important to note that a trapping set induced by an error pattern \emph{does not necessarily prevent} decoding to converge unless the error pattern is uncorrectable to the decoder.  Only the trapping sets induced by the error patterns that are uncorrectable to the decoder may prevent decoding to converge (or fail) and cause an error-floor in the code's error performance.  For long codes, it is hard (or very much time consuming) to identify the configurations of those trapping sets which do trap the decoder and prevent decoding to converge.  However, extensive simulations in many published literatures did observe that in general, trapping sets of small size (relative to minimum weight $\omega_{\min}$ of the code) with small number of odd-degree CNs are the harmful ones.   When an error pattern induces such a small trapping set, the number of failed CNs is so small such that the messages generated by these failed CNs may not strong enough to overcome the messages coming from the satisfied CNs to make the changes of the erroneous VNs checked by the failed CNs to satisfy their check-sum constraints.  As a result, iteration continues.  However, for a trapping set with large number of odd-degree CNs compared to its number of VNs, the messages generated by the failed CNs would be strong enough to overcome the messages from the satisfied CNs to make appropriate changes of some code bits checked by all the CNs in such a way that all the check-sums are zero and decoding converges.  If a trapping set is induced by an undetectable error pattern, it is of the form ${\cal T}(\kappa,0)$, in which the $\kappa$  erroneous VNs form a codeword of weight $\kappa$. When this happens, the resultant syndrome of a hard-decision decoded sequence is zero.  In this case, decoding stops and the decoder converges to an incorrect codeword.  This results in an undetected error. If the minimum weight $\omega_{\min}$ of an LDPC code is small, trapping sets induced by uncorrectable error patterns that correspond to the minimum weight codewords may contribute significantly to the error-floor in the performance of the code.  Based on the above observation and reasoning, a code, in general, has a lower error-floor if it does not have small trapping sets (relative to the minimum weight) with small numbers of odd-degree CNs and its minimum weight  $\omega_{\min}$ is reasonable large.  If a code does not have trapping sets with size smaller than its minimum weight  $\omega_{\min}$, then the error-floor of the code is dominated by its minimum weight  $\omega_{\min}$, i.e., dominated by the trapping sets, ${\cal T}(\omega_{\min},0)$s, induced by the error patterns that are identical to the minimum weight codewords.  In the next two sections, we will show several classes of RC-constrained LDPC codes do have large minimum weights and do not have harmful trapping sets with size smaller than their minimum weights.  Hence, their error-floors are dominated by their minimum weights. 
 
   Since not all the trapping sets defined in Definitions 1 and 2 trap the decoder and prevent decoding to converge, the term ,trapping set, is actually misnamed.

\section{Trapping Sets of Cyclic FG-LDPC Codes and Their Cyclic and QC Descendants}
   Since cyclic FG-LDPC codes and their cyclic and QC descendants are RC-constrained LDPC codes, they have the trapping set structure presented in the last section.  In the following, we consider the trapping set structures of two special subclasses of cyclic FG-LDPC codes.  These two subclasses of FG codes have large minimum distances and no small trapping sets.  They can be decoded with various decoding algorithms ranging from hard-decision, reliability-based to pure soft-decision iterative decoding, such as the OSMLGD, the bit-flipping (BF), the weighted BF, the binary message-passing (SRBI-MLGD), the min-sum and the sum-product algorithms, to provide a wide range of effective trade-offs between error performance and decoding complexity.
   The first subclass of cyclic FG-LDPC codes is the class of cyclic EG-LDPC codes constructed based on the two-dimensional Euclidean geometries presented in Section IV.  Here, we consider the cyclic EG-LDPC code ${\cal C}_{EG}$ (or its QC equivalent ${\cal C}_{EG,qc}$) constructed based on the two-dimensional Euclidean geometry EG(2,$2^s$) over the field GF($2^s$).  The parity-check matrix ${\bf H}_{EG}$  of the code is a $(2^{2s} - 1)\times (2^{2s} - 1)$ circulant over GF(2) (or $(q+1)\times (q+1)$ array of $(q-1)\times (q-1)$ CPMs over GF(2)) whose rows are the incidence vectors of the lines in EG(2,$2^s$)  not passing through the origin of the geometry.  The column and row weights of this circular parity-check matrix ${\bf H}_{EG}$   are both $2^s$. Its rank is $3^s - 1$.  As shown in Section IV. B the null space of ${\bf H}_{EG}$  gives a ($4^s - 1$,$4^s - 3^s$) cyclic EG-LDPC code ${\cal C}_{EG}$ with minimum distance exactly $2^s + 1$.  With the OSMLGD, this code is capable of correcting $2^{s-1}$ or fewer random errors (or $2^s$ or fewer erasures). 
   
   Recall that the columns and rows of ${\bf H}_{EG}$ , as a $(2^{2s} - 1)\times (2^{2s} - 1)$ matrix over GF(2), correspond to the $2^{2s} - 1$ non-origin points and $2^{2s} - 1$ lines (not passing the origin) of EG(2,$2^s$), respectively.  The symbols  of a codeword ${\bf v} = (v_0, v_1, \ldots, v_{2^{2s} -2})$ in ${\cal C}_{EG}$  correspond to the $2^{2s} - 1$ non-origin points of EG(2,$2^s$) and therefore correspond to the columns of ${\bf H}_{EG}$ .  Since any two points in EG(2,$2^s$) are connected by a line, any two code symbols are checked by a row in ${\bf H}_{EG}$ .  Consequently, for any two error symbols, $e_{j_1}$ and $e_{j_2}$, in an error pattern $\bf e$, the two sets of syndrome-sums, $S^{j_1}$ and $S^{j_2}$, orthogonal on $e_{j_1}$ and $e_{j_2}$ have (exactly) one syndrome-sum in common.   
   
   It follows from the trapping set analysis given in the last section, any trapping set corresponding to an error pattern with  $2^s$ or fewer random errors will induce a subgraph of the Tanner graph of the code which contains at least $2^s$ CNs of odd degrees.  This is to say that code has no ($\kappa$,$\tau $) trapping set of size $\kappa$  smaller than $2^s + 1$ with the number $\tau$  of odd-degree CNs smaller than $2^s$.  This implies that for $\kappa  \leq  2^s$, there is no ($\kappa$ ,$\tau$) trapping set with the ratio $\tau /\kappa  \leq  1$.  If the (commonly used) requirements of small value of $\kappa$  and $\tau /\kappa  \leq  1$ are used to define a small trapping set, then the cyclic EG-LDPC code ${\cal C}_{EG}$   has no trapping set of size smaller than $2^s + 1$ (the minimum weight of the code).  For $\kappa  \leq  2^{s-1}$, the number $\tau$  of odd-degree CNs is greater than $ 2^{s-1}$.  Since the code is capable of correcting $2^{s-1}$ or fewer errors with the OSMLGD, all the trapping sets of size equal to or smaller than $2^{s-1}$ are un-harmful (i.e., they do not prevent decoding to converge or converge to an incorrect codeword) if the OSMLGD is performed before each new decoding iteration.   Since the length of the code is  $n = 4^s -1$, the square root of $n$, $\sqrt{n} = \sqrt{4^s - 1}\approx 2^s$.  For  $1 < \kappa  < 2^s - 3$, it follows from the tapping set analysis given in the last section that for a ($\kappa $,$ \tau $) trapping set, the number $\tau$  of CNs of odd-degree is at least $\kappa (2^s - (\kappa  - 1)) > 4\kappa $.  Then, it follows from Definition 2 that the cyclic EG-LDPC code ${\cal C}_{EG}$  has no small trapping set of the type defined by Definition 2 with size smaller than $2^s - 3$.  
 
   Summarizing all the results developed in the last and this sections, we have the following parameters for the cyclic EG-LDPC code ${\cal C}_{EG}$   constructed based on two-dimensional Euclidean geometry EG(2,$2^s$) over GF($2^s$) has the structure parameters: 1) length $4^s - 1$; 2) dimension $4^s - 3^s$; 3) minimum distance $2^s + 1$; and 4) no trapping set of size less than $2^s + 1$ or ($2^s - 3$) with number of odd-degree CNs less than $2^s$ (or less than $4\times 2^s$).  

 In fact, there are many trapping sets of size greater than $2^s$ with number of odd-degree CNs much greater than $2^s$.  As an example, we consider an error pattern $\bf e$ with $2^s + 1$ errors at the positions, $j_0, j_1, \ldots, j_{2^s-1}, j_{2^s}$.  Suppose the errors positions $j_0, j_1,\ldots , j_{2^s-1}$, correspond to the $2^s$ points $\alpha ^{j_0}$, $\alpha ^{j_1}$, $\ldots$, $\alpha ^{j_{2^s-1}}$ of a line $\cal L$ in EG(2,$2^s$) not passing through the origin.  The position $j_{2^s}$ is any other arbitrary position and it corresponds to the point $\alpha ^{j_{2^s}}$.  In this case, there is a single syndrome-sum contains $2^s$ errors at the positions $j_0, j_1, \ldots, j_{2^s-1}$, and this syndrome-sum equals zero.  Since in a finite geometry, any two points are connected by a line.  Then any error at the position in the set $\{ j_0, j_1, \ldots, j_{2^s-1} \}$ and the error at the position $j_{2^s}$ are contained in at most one syndrome-sum and they are the only errors in sum. (Note that the rows of the parity-check matrix ${\bf H}_{EG}$ correspond only to the lines not passing through the origin.)  Consequently, this syndrome-sum is equal to zero.  Recall that each position between 0 and $n - 1$ is checked by $2^s$ rows of ${\bf H}_{EG} $.  Therefore, for each position $j_i$, $0 \leq  i < 2^s$, there are at least $2^s - 2$ syndrome-sums contain only one error in the error pattern $\bf e$.  As a result, the trapping set induced by the error pattern $\bf e$ consists of $2^s + 1$ VNs, at least $2^s(2^s - 2)$ CNs of degree-1, at most $2^s$ CNs of degree-2 and one CN of degree $2^s$. If $s \geq 3$, the number of degree-1 CNs is much larger than the number of VNs in the trapping set. This error pattern is correctable with the OSMLGD. 
 
 Using the geometric structures, configurations of some trapping sets of an EG-LDPC code may be analyzed. Consider another case.  Let $\bf e$ be an error pattern with $2^s +2$ errors at the positions, $j_0, j_1, \cdots, j_{2^s-1}, j_{2^s}, j_{2^s+1}$.  Suppose the errors positions $j_0, j_1, \ldots, j_{2^s-1}$, correspond to the $2^s$ points $\alpha ^{j_0}, \alpha ^{j_1}, \ldots, \alpha ^{j_{2^s-1}}$ of a line $\cal L$ in EG(2,$2^s$) not passing through the origin.  The positions $j_{2^s}$ and $j_{2^s+1}$ are two arbitrary positions which correspond to the points $\alpha ^{j_{2^s}}$ and $\alpha ^{j_{2^s+1}}$.  Assume that $\alpha ^{j_{2^s}}$ and $\alpha ^{j_{2^s+1}}$ are not on the same line.  Then each point on $\cal L$ may pair with either point $\alpha ^{j_{2^s}}$ or $\alpha ^{j_{2^s+1}}$ appearing on a line.  Based on this, we can readily see that the trapping set induced by the error pattern $\bf e$ has at least $2^s(2^s- 3)$ CNs of degree-1, at most $2^{s+1}$ CNs of degree-2 and one CN with degree $2^s$.  If points $\alpha ^{j_{2^s}}$ and $\alpha ^{j_{2^s+1}}$ lie on the same line, then the number of CNs with degree-1 is at least $2^s(2^s - 2)$.  In either case, for $s \geq 3$, the number of odd-degree CNs is much greater than the number of VNs of the trapping set. 
 
   Now we consider a more general case. For $0 \leq  r < 2^s - 2$, consider an error pattern $\bf e$ with $2^s + r$ errors positions at the positions, $j_0, j_1, \ldots, j_{2^s-1}, j_{2^s}, \ldots,j_{2^s+r-1}$.  Again, we assume that the $2^s$ positions $j_0, j_1, \ldots, j_{2^s-1}$, correspond to the $2^s$ points $\alpha ^{j_0}, \alpha ^{j_1}, \ldots, \alpha ^{j_{2^s-1}}$ of a line $\cal L$ in EG(2,$2^s$) not passing through the origin. Following the same analysis given above, we can easily show that the trapping set induced by this error pattern with $2^s + r$ errors consists of at least $2^s(2^s - r)$ CNs of degree-1 and at most $r 2^s$ CN's of degree-2. Since $r < 2^s - 2$, the number of degree-1 CNs is much larger than the number of VNs.  For the case $r = 0$, the trapping set induced by the error pattern $\bf e$ whose error locations corresponding to the $2^s$ points of a line not passing through the origin of the geometry has exactly $2^s(2^s -1)$ CNs of degree-1 and one CN of degree $2^s$.  Since there are $2^{2s} - 1$ lines not passing through the origin (the rows of the parity-check matrix ${\bf H}_{EG}$ are the incidence vectors of these lines), there are $2^{2s} - 1$ such trapping sets of size $2^s$.  For such a trapping set, the number of CNs of degree-1 is $2^s - 1$ times larger than the number of VNs.  Error patterns corresponding to these trapping sets are correctable with the OSMLGD.

\begin{example} 
 Consider the (63,37) cyclic EG-LDPC code constructed based on the two-dimensional Euclidean geometry EG(2,$2^3$) over GF($2^3$).  The parity-check matrix of this code is a $63\time 63$ circulant over GF(2) with both column and row weights 8.  The minimum weight of this code is 9.  The code is capable of correcting 4 or fewer errors with OSMLGD.  By computer search, we have found all the trapping sets induced by error patterns with 3 up to 22 errors.  Table 1 gives a partial list of the found trapping sets.  From the Table 1, we see that for $\kappa  < 9$, the number $\tau$  of odd-degree CNs associated to every trapping set is greater than $\kappa$ .  For $\kappa  = 9$, there are (9,0) trapping sets which correspond to minimum weight codewords of the code.  The square root $\sqrt{63} \approx 8$. From Table 1, we see that for $\kappa  < 6$, the number $\tau$  of odd-degree CNs associated with each trapping set of size $\kappa$ smaller than 6 is greater than $4\kappa $.  Then, it follows from Definition 2 that the Tanner graph of the code does not contain small trapping set with size $\kappa  < 6$ of the type defined by Definition 2. In decoding of the (63,37) cyclic EG-LDPC code using 50 iterations of the SPA, none of the trapping sets with size smaller than 9 prevents decoding to converge (or trap the decoder) and the error patterns corresponding to these trapping sets are all correctable.  The trapping sets ${\cal T}(9,0)$, ${\cal T}(10,0)$, ${\cal T}(11,0)$, ${\cal T}(12,0)$ and ${\cal T}(14,0)$ result in undetected error (incorrect decoding). The error performance of the (63,37) cyclic EG-LDPC code is shown in Figure 11. 
 
   Suppose we consider the $(255,175)$ cyclic EG-LDPC code constructed based on the two-dimensional Euclidean geometry EG(2,$2^4$) over GF($2^4$).  This code has minimum weight 17.  Extensive computer search found no trapping set of size smaller than 17 which prevents decoding to converge or cause decoding failure.  We found some large trapping sets with very large numbers of odd-degree CNs but are not harmful.  These trapping sets are: ${\cal T}(16,102)$, ${\cal T}(18,110)$, ${\cal T}(21,102)$, ${\cal T}(30,120)$, ${\cal T}(29,120)$, and ${\cal T}(33,130)$.  All but ${\cal T}(30,120)$ have $\tau  > 4\kappa$.  Therefore, only the trapping set ${\cal T}(30,120)$ is a small trapping set by Definition 2.  \twotriangle\end{example}

\begin{example} Consider the (4095,3367) Cyclic-EG-LDPC code with minimum weight 65 constructed based on the 2-dimensional EG(2,$2^6$) over GF($2^6$) given in Example 2.  The parity-check matrix of this code has column weight 64.  The Tanner graph of this code has no trapping set of size smaller than 64 with number of odd-degree CNs smaller than 64. Note that $\sqrt{4095} \approx 64$.  It follows from Definition 2 that the code has no trapping set with size smaller than 61. As shown in figure 1, decoding of this code with either the SPA or the SMA converges very fast. Consider the (1365,765) cyclic descendant of the (4095,3367) cyclic EG-LDPC code given in Example 3.  The parity-check matrix of this code is a $1365\times 1365$ circulant with both column and row weights 16.  For this code, any trapping set of size smaller than 17 has at least 16 odd-degree CNs associated with it. Note that $\sqrt{1365}>17$. Based on Definition 2, it has no trapping set with size smaller than 13.  \twotriangle\end{example}

   Next, we consider the trapping set structure of a cyclic PG-LDPC code ${\cal C}_{PG}$ constructed based on the 2-dimensional projective geometry PG(2,$q$) over GF($q$) with $q=2^s$.  The parity-check matrix of this code is a $(q^2 + q + 1)\times (q^2 + q + 1)$ circulant ${\bf H}_{PG}$  over GF(2) with both column and row weights equal to $q + 1$. The null space of ${\bf H}_{PG}$ gives an RC-constrained cyclic PG-LDPC code ${\cal C}_{PG}$ of length $n = q^2 + q +1$ and minimum weight at least $q + 2$, whose Tanner graph has a girth of at least 6.  Since the ${\bf H}_{PG}$  satisfies the RC-constraint and its column weight is $q +1$, it follows from the analysis given in Section VII that ${\cal C}_{PG}$ has no trapping set ${\cal T}(\kappa ,\tau )$ for which both the size $\kappa$  and the number $\tau$  of odd-degree CNs smaller than $q  + 1$.  The square root of the length of the code is $\sqrt{n}\approx q$.  For $\kappa  < q -2$, it easy to check that the number $\tau$  of odd-degree CNs of a trapping set ${\cal T}(\kappa ,\tau )$ is greater than $4\kappa $.  Then, ${\cal C}_{PG}$ has no trapping set of the type defined by Definition 2 with size smaller than $q - 2$.  The results on trapping sets of the cyclic PG-LDPC code are exactly the same obtained in [20] derived in  a different approach.  Our derivation of the results are simply based on the RC-constraint on the parity-check matrix which is much simpler and less mathematical.
 
   For $q = 2^s$, the cyclic PG-LDPC code ${\cal C}_{PG}$ has the following structural parameters: 1) length $n = 2^2s + 2^s + 1$; 2) dimension $n - 3^s - 1$; 3) minimum weight at least $2^s + 2$; 4) girth at least 6; 5) no trapping set of size less than $2^s +2$ with number of odd-degree CNs less than $2^s+1$; and 6) no trapping set of the type defined by Definition 2 with size less than $2^s - 2$.

\section{Other RC-Constrained LDPC Codes and their Trapping Sets}
 
   Besides EG- and PG-LDPC codes, there are other classes of structured RC-constrained LDPC codes.  These classes of codes are either constructed based on finite fields [11]-[13], [69]-[71] or experimental designs [72]-[76].  Codes in most of these classes are QC-LDPC codes.  Since the parity-check matrices of the codes in these classes satisfy the RC-constraint, their trapping sets have the structure as described in Section VII. B.  The constructions based on finite fields given in [11]-[13], [69]-[71] are of the same nature and they give several large classes of RC-constrained QC-LDPC codes.  Among them, several subclasses have large minimum weights.  In this section, we choose the first class of QC-LDPC codes given in [11] for illustration of their trapping set structure.
 
   Consider the first construction of QC-LDPC codes given in [11].  Let $\alpha$ be a primitive element of the Galois field GF($q$) Then, $\alpha ^{-\infty}  = 0$, $\alpha ^0 = 1,  \alpha , \cdots   , \alpha ^{q-2}$ give all the elements of GF($q$).  Let ${\cal C}_{rs}$   be the cyclic $(q-1, 2, q-2)$ RS code over GF($q$) with two information symbols whose generator polynomial ${\bf g}(X)$ has $\alpha , \alpha ^2, \cdots  , \alpha ^{q-3}$ as roots.  Then, for $0 \leq  i < q-1$, the two $(q-1)$-tuples over GF($q$),
 \[
                {\bf u}_i = (\alpha ^i, \alpha ^{i+1}, \cdots  , \alpha ^0, \alpha ^{q-2}, \cdots  , \alpha ^{i-1}),
 \]
and
 \[
               {\bf  v}_i = (\alpha ^i, \alpha ^i, \cdots  , \alpha ^i),
\]
are two nonzero codewords in ${\cal C}_{rs}$   with weight $q-1$.  Note that ${\bf u}_1, \cdots  , {\bf u}_{q-2}$ are cyclic-shifts of ${\bf u}_0$.    For $i = 0$, ${\bf v}_0 = (1, 1, \cdots  , 1)$.  The subscript ``$rs$'' of ${\cal C}_{rs}$   stands for ``Reed-Solomon''.
 
   For  $0 \leq  i < q-1$, ${\bf u}_i-{\bf v}_0$ is a codeword in ${\cal C}_{rs}$   with weight $q-2$ (minimum weight). Form the following $(q-1)\times (q-1)$ matrix over GF($q$) with ${\bf u}_0-{\bf v}_0$, ${\bf u}_1-{\bf v}_0, \cdots  , {\bf u}_{q-2}-{\bf v}_0$ as rows:
 \begin{equation} 
 {\bf W}_{rs} = \left[  \begin{array}{l}
{\bf w}_0\\
{\bf w}_1\\
\vdots\\
{\bf w}_{q-2}
\end{array}  \right]    
= \left[  \begin{array}{llll}
 \alpha ^0 - 1      &    \alpha  - 1    &   \cdots   &      \alpha ^{q-2} - 1\\
  \alpha ^{q-2} - 1   &  \alpha ^0 - 1  &    \cdots  &        \alpha ^{q-3} - 1\\
\vdots& & \ddots & \vdots\\
 \alpha  - 1    &      \alpha ^2 - 1  &     \cdots     &     \alpha ^0 - 1
\end{array}  \right].       
\end{equation}
This matrix ${\bf W}_{rs}$    is the matrix (with rows permuted) given by Eq. (4) in [11] for the construction of the first class of QC-LDPC codes.  Every row (or column) of ${\bf W}_{rs}$    consists $q-2$ distinct nonzero elements and one 0-element of GF($q$). The $q-1$ zero entries of ${\bf W}_{rs}$    lie on its main diagonal. Therefore, both column and row weights of ${\bf W}_{rs}$    are $q-2$. This matrix satisfies the following constraint on the Hamming distance between two rows [11]: for $0\leq  i, j < q-1$, $i \neq  j$ and $0 \leq  c, l < q-1$, the Hamming distance between the two $(q-1)$-tuples over GF($q$), $\alpha ^c {\bf w}_i$ and $\alpha ^l {\bf w}_j$, is at least $q-2$, (i.e., $\alpha ^c {\bf w}_i$ and $\alpha ^l {\bf w}_j$ differ in at least $q-2$ places).  This constraint on the rows of matrix ${\bf W}_{rs}$ is called the \emph{row-distance (RD)-constraint} and ${\bf W}_{rs}$ is called an RD-constrained matrix.

   Let $\bf P$ be a $(q - 1)\times (q - 1)$ CPM whose top row is given by the $(q - 1)$-tuple $(0 1 0 \cdots  0)$ over GF(2) where the components are labeled from 0 to $q - 2$ and the single 1-component is located at the 1st position.  Then $\bf P$ consists of the $(q - 1)$-tuple $(0 1 0 \cdots  0)$ and its $q - 2$ right cyclic shifts as rows.  For $1 \leq  i < q$, let ${\bf P}^i = {\bf P}\times {\bf P}\times  \cdots  \times {\bf P}$ be the product of $\bf P$ with itself $i$ times, called the $i$th power of $\bf P$.  Then, ${\bf P}^i$ is also a $(q - 1)\times (q - 1)$ CPM whose top row has a single 1-component at the $i$th position.  For $i = q -1$, ${\bf P}^{q -1} = {\bf I}_{q-1}$, the $(q - 1)\times (q - 1)$ identity matrix.  Let ${\bf P}^0 = {\bf P}^{q-1} = {\bf I}_{q -1}$.  Then the set ${\cal P} = \{{\bf P}^0, {\bf P}, {\bf P}^2, \cdots  , {\bf P}^{q -2}\}$ of CPMs forms a cyclic group of order $q-1$ under matrix multiplication over GF(2) with ${\bf P}^{q -1-i}$ as the multiplicative inverse of ${\bf P}^i$ and  ${\bf P}^0$ as the  identity element.

  For $0 \leq  i < q -1$, we represent the nonzero element $\alpha ^i$ of GF($q$) by the $(q - 1)\times (q - 1)$ CPM ${\bf P}^i$. This matrix representation is referred to as the $(q - 1)$-fold binary \emph{matrix dispersion} (or simply binary matrix dispersion) of $\alpha ^i$.  Since there are $q - 1$ nonzero elements in GF($q$) and there are exactly $q - 1$ different CPMs over GF(2) of size $(q - 1)\times (q - 1)$, there is a one-to-one correspondence between a nonzero element of GF($q$) and a CPM of size $(q - 1)\times (q - 1)$.  Therefore, each nonzero element of GF($q$) is uniquely represented by a CPM of size $(q - 1)\times (q - 1)$.  For a nonzero element $\delta$  in GF($q$), we use the notation $ {\bf B}{(\delta)} $ to denote its binary matrix dispersion.  If $\delta  = \alpha ^i$, then $ {\bf B}{(\delta)}  = {\bf P}^i$.   For the 0-element of GF($q$), its binary matrix dispersion is defined as the $(q - 1)\times (q - 1)$ ZM, denote ${\bf P}^{-\infty }$. 

      Dispersing each nonzero entry of ${\bf W}_{rs}$ into a $(q - 1)\times (q - 1)$ CPM over GF(2) and each 0-entry into a $(q - 1)\times (q - 1)$ ZM, we obtain the following $(q-1)\times (q-1)$ array of CPMs and/or ZMs over GF(2) of size $(q - 1)\times (q - 1)$:
\begin{equation}        
      {\bf H}_{rs} =    \left[ \begin{array}{cccc}            {\bf B} _0     &    {\bf B} _1  &   \cdots   &     {\bf B} _{q-2}\\
                      {\bf B} _{q-2} &       {\bf B} _0 &      \cdots    &    {\bf B}_{q-3}   \\
                          \vdots  &        &     \ddots  &\vdots\\
                       {\bf B} _1 &        {\bf B} _2&       \cdots      &    {\bf B}_0
\end{array}
\right],
\end{equation}
where ${\bf B}_j={\bf B} (\alpha^j -1)$ for $0\leq j<q-1$. ${\bf H}_{rs}$  is called the binary $(q - 1)$-fold \emph{array dispersion} of ${\bf W}_{rs}$    (or simply binary array dispersion of ${\bf W}_{rs}   $). This array has $(q-1)$ ZMs which lie on its main diagonal.  It is a $(q - 1)^2\times (q - 1)^2$ matrix over GF(2) with both column and row weights equal to $q-2$.  Based on the RD-constraint on the rows of ${\bf W}_{rs}$    and the binary CPM matrix dispersions of the entries of ${\bf W}_{rs}   $, it was proved in [10], [11], [69], [71] that ${\bf H}_{rs} $, as a $(q - 1)^2\times (q - 1)^2$ matrix over GF(2), satisfies the RC-constraint.  Hence, its associated Tanner graph has a girth of at least 6. The RD-constrained matrix ${\bf W}_{rs}$ used for constructing the RC-constrained array ${\bf H}_{rs}$ of CPMs is called the base matrix for array dispersion.

  For any pair ($\gamma $,$\rho $) of integers $\gamma$  and $\rho$  with $1\leq  \gamma $, $\rho  < q $, let ${\bf H}_{rs}(\gamma ,\rho )$ be a $\gamma \times \rho$  subarray of ${\bf H}_{rs} $.  ${\bf H}_{rs} (\gamma ,\rho )$ is a $\gamma (q-1)\times \rho (q-1)$ matrix over GF(2) which also satisfies the RC-constraint.  The null space of ${\bf H}_{rs} (\gamma ,\rho )$ gives a QC-LDPC code ${\cal C}_{rs,qc}$ of length $\rho (q-1)$ with rate at least $(\rho -\gamma )/\rho $, whose Tanner graph has a girth of at least 6.  If ${\bf H}_{rs} (\gamma ,\rho )$ does not contain any of the ZMs of ${\bf H}_{rs} $, then ${\bf H}_{rs}$  has constant column weight $\gamma$  and constant row weight $\rho $.  In this case, ${\cal C}_{rs,qc}$ is a $(\gamma ,\rho )$-regular QC-LDPC code. If ${\bf H}_{rs} (\gamma ,\rho )$ contains ZM(s) of ${\bf H}_{rs} $, it has two different column weights, $\gamma - 1$ and $\gamma $, and/or two different row weights, $\rho  - 1$ and $\rho $.  In this case, the null space of ${\bf H}_{rs} (\gamma ,\rho )$ gives a near-regular binary QC-LDPC code.  

   For a given finite field GF($q$), the above construction gives a family of structurally compatible RC-constrained QC-LDPC codes.  Consequently, the construction gives a large class of binary QC-LDPC codes. Since their parity-check matrices satisfy the RC-constraint, they have the same trapping set structure presented in VII. B.

   A very special case is the QC-LDPC code ${\cal C}_{rs,qc,f}$ given by the null space of the full array ${\bf H}_{rs}$  with $q = 2^s$.  For this case, the length of the code is $n = (2^s-1)^2$ and its minimum weight is at least $2^s-1$.  Using the technique presented in [13], we find that the rank of ${\bf H}_{rs}$  is
\begin{equation}
                             rank({\bf H}_{rs} ) = 3^s-3.  
\end{equation}
(The derivation of the expression of (73) is given in a separate paper.)  Since the column weight of ${\bf H}_{rs}$  is $2^s-2$, it follows from the analysis of trapping set structure of an RC-constrained LDPC code given in VII that for $\kappa  \leq  2^s-2$, ${\cal C}_{rs,qc,f}$ has no trapping set ${\cal T}(\kappa ,\tau )$ of size $\kappa$  with number of odd-degree smaller than $2^s-2$. Note that $\sqrt{n} = 2^s-1$. Then, for $\kappa  < 2^s-5$, there is no trapping set ${\cal T}(\kappa ,\tau )$ of the type defined by Definition 2 with number of odd-degree CNs smaller than $4\kappa $.  That is to say that there is no trapping set with size smaller than $2^s-5$.  

Summarizing the above results, the QC-LDPC code ${\cal C}_{rs,qc,f}$ given by the full array ${\bf H}_{rs}$  of (72) for $q = 2^s$ has the following parameters: 1) length $n = (2^s-1)^2$; 2) dimension $(2^s-1)^2-3^s + 3$; 3) minimum weight at least $2^s-1$; 4) any trapping set ${\cal T}(\kappa ,\tau )$ with $\kappa  \leq  2^s -2$ must have more than $2^s-2$ CNs of odd-degrees; and 5) no trapping sets of the type defined by definition 2 with size smaller than $2^s-5$.

\begin{example} Let GF($2^5$) be the field for code construction.  Based on this field, we can construct a $31\times 31$ array ${\bf H}_{rs}$  of CPMs and ZMs of size $31\times 31$.  ${\bf H}_{rs}$  is a $961\times 961$ matrix over GF(2) with both column and row weights 30. The null space of ${\bf H}_{rs}$  gives a (30,30)-regular (961,721) QC-LDPC code with minimum distance at least 31.  This code is the code given in Example 1 of [11].  For this code, any trapping set ${\cal T}(\kappa ,\tau )$ with $\kappa  < 30$ must have more than 30 CNs of odd-degrees.  The code has no trapping sets of the type defined by Definition 2 with size smaller than 27.  None of the trapping sets with size smaller than 31 traps the decoder. The error performances of this code with 5, 10 and 50 iterations are shown in Figure 12. 
\twotriangle \end{example}

Besides the class of RD-constrained base matrices given above, several other classes of RD-constrained base matrices for constructing RC-constrained arrays of CPMs have been proposed in [10]-[13].  Based on these arrays of CPMs, several large classes of RC-constrained QC-LDPC codes have been constructed.  Codes in these classes perform well with iterative decoding using either the SPA or MSA.  In the following, we describe another method for constructing a large class of RD-constrained base matrices for array dispersions to construct RC-constrained QC-LDPC codes.  This method is based on a class of Latin squares over finite fields and is proposed in [13].

   An array is called a Latin square of order $n$ if each row and each column contains every element of a set of $n$ elements exactly once [77].  Latin squares can be constructed from finite fields. Consider the field GF($q$).  Let $\alpha$  be a primitive element of GF($q$)  and $\eta$  be nay nonzero element of GF($q$).  Form the following $q\times q$ matrix over GF($q$):
                     \begin{equation}
{\bf W}_{LS}  =\left[\begin{array}{ccccc}
                                  \alpha ^0\eta  - \alpha ^0       &  \alpha ^0\eta  - \alpha  &  \ldots  &       \alpha ^0\eta  - \alpha ^{q-2}   &     \alpha ^0\eta  - \alpha ^{-\infty}  \\
                                    \alpha \eta  - \alpha ^0     &         \alpha \eta  - \alpha   & \ldots &        \alpha \eta  - \alpha ^{q-2}    &       \alpha \eta  - \alpha ^{-\infty}\\
\vdots  & & \ddots & &\vdots\\ 
                              \alpha ^{q-2}\eta  - \alpha ^0   &  \alpha ^{q-2}\eta  - \alpha   &   \ldots & \alpha ^{q-2}\eta  - \alpha ^{q-2}   &  \alpha ^{q-2}\eta  - \alpha ^{-\infty} \\
                                 \alpha ^{-\infty} \eta  - \alpha ^0   &     \alpha ^{-\infty} \eta  - \alpha    &\ldots &      \alpha ^{-\infty} \eta  - \alpha ^{q-2}    &   \alpha ^{-\infty} \eta  - \alpha ^{-\infty} 
\end{array}\right].
\end{equation}
Then, ${\bf W}_{LS}$ is a Latin square of order $q$ over GF($q$)  .  Every element of GF($q$)  appears in a row and a column once and only once. In [13], it was proved that ${\bf W}_{LS}$ satisfies the RD-constraint.  Binary array dispersion of ${\bf W}_{LS}$ gives a $q\times q$ array ${\bf H}_{LS}$ of CPMs and ZMs of size $(q-1)\times (q-1)$.  Each row or column of ${\bf H}_{LS}$ contains one only one ZM.  ${\bf H}_{LS}$ is a $q(q-1)\times q(q-1)$ matrix over GF(2) with both column and row weights $q-1$.

   For any pair ($\gamma$,$\rho $) of positive integers with $1 \leq  \gamma , \rho  < q$, let ${\bf H}_{LS}(\gamma ,\rho )$ be a $\gamma \times \rho$  subarray of ${\bf H}_{LS}$.  ${\bf H}_{LS}(\gamma ,\rho )$ is a $\gamma (q-1) \times \rho (q-1)$ matrix matrix over GF(2).  If ${\bf H}_{LS}(\gamma ,\rho )$ does not contain any ZM of ${\bf H}_{LS}$, then ${\bf H}_{LS}(\gamma ,\rho )$, as a $\gamma (q-1)\times \rho (q-1)$ matrix, has column and row weights $\gamma$  and $\rho $, respectively.  The null space of ${\bf H}_{LS}(\gamma ,\rho )$ gives an RC-constrained ($\gamma $,$\rho $)-regular QC-LDPC code ${\cal C}_{LS,qc}$ of length $\rho (q-1)$. The code has the trapping set structure as described in Section VII.

   For $q = 2^s$, the QC-LDPC code ${\cal C}_{LS,qc,f}$ given by the null space of the full array ${\bf H}_{LS}$ has the following parameters [13]:

    \begin{center}
    
 \begin{minipage}{7cm}
             Length: $n  = 2^s(2^s - 1)$,
 
                    Number of parity symbols:  $n-k = 3^s-1$,
 
             Minimum distance $d_{\min} \geq  2^s + 2$.
 \end{minipage}
 
 \end{center}

   It follows from the trapping set analysis given in Section VII, any trapping set corresponding to an error pattern with  $2^s - 1$ or fewer random errors will induce a subgraph of the Tanner graph of the code which contains at least $2^s - 1$ CNs of odd degrees.  If the requirements of small value of $\kappa$  and $\tau /\kappa  \leq  1$ are used to define a small trapping set, then the QC-LDPC code ${\cal C}_{LS,qc,f}$  has no trapping set of size smaller than $2^s - 1$.
 
   Since the length of the code is $n = 2^s(2^s -1)$, the square root of $n$, $\sqrt{n} \approx  2^s$.  For $1 < \kappa  < 2^s - 4$, the number $\tau$  of CNs of odd-degrees is at least $\kappa (2^s - 1 - (\kappa  - 1)) > 4\kappa $.  Then, it follows from Definition 2 that the QC-LDPC code ${\cal C}_{LS,qc,f}$ has no trapping set of the type defined by Definition 2 with size smaller than $2^s - 4$.
 
 \begin{example}
 The code constructed based on the Latin square of order 32 over GF($2^5$) is an RC-constrained  (992,750) QC-LDPC code with minimum weight at least 34.  Extensive computer search found no trapping sets with size smaller than 34 that trap the decoder.  Two trapping sets ${\cal T}(36,0)$ are found.  This says that the minimum weight of the code is 36.  Also found are 1595 ${\cal T}(40,0)$ trapping sets. Since there are no harmful trapping sets with sizes smaller than the minimum weight, the error-floor of the code is dominated by the minimum weight of the code which is 36. The error performances of this code over the AWGN channel with 50 iterations of the SPA and the MSA are shown in Figure 13.  We see that there is no visible error-floor down to the BER of $10^{-11}$.  At the BLER of $10^{-9}$ (decoded with a min-sum FPGA decoder), the code performs 1.1 dB from the sphere packing bound. \twotriangle \end{example}

\section{Conclusion  and Remarks}

    In this paper, we have shown that cyclic and quasi-cyclic descendant codes can be derived from a known cyclic code through decomposition of its parity-check matrix in circulant form using column and row permutations.  We have analyzed some structural properties of descendant cyclic codes of a cyclic code, particularly in characterization of the roots of their generator polynomials.  By decomposition of cyclic finite geometry LDPC codes, we are able to enlarge the repertoire of  cyclic finite geometry LDPC codes and to construct new quasi-cyclic LDPC codes.  The cyclic and quasi-cyclic structures allow the implementation of encoding of LDPC codes with simple shift registers with linear complexity.  These structures also simplify the hardware implementation of LDPC decoders.  Quasi-cyclic structure simplifies wire routing of an LDPC decoder and allows partial parallel decoding that offers a trade-off between decoding complexity and decoding delay.  We have shown that a cyclic LDPC code can be put in quasi-cyclic form through column and row permutations and vice versa.  In encoding, we use its cyclic form and in decoding, we use its quasi-cyclic form. This allows us to have both advantages in encoding and decoding implementations.   In this paper, we have also analyzed the trapping set structure of LDPC codes whose parity-check matrices satisfy the RC-constraint.   We have shown that several classes of finite geometry and finite field LDPC codes don't have trapping sets with sizes smaller than the minimum weights of the codes.  The codes in these classes have large minimum weights.  Consequently, codes in these classes have very low error-floors which are pertinent to some communication and storage systems where very low error-rates are required.
    
Finally, we would like to point out that there are two large classes of structured LDPC codes [8], [69] which satisfy the RC-constraint but are not quasi cyclic.  The class of LDPC codes given in [8] was constructed based on finite geometry decomposition and the class of LDPC codes given in [69] was constructed based on Reed-Solomon codes with two information symbols. These two classes of codes have large minimum distances.  It follows from our trapping set analysis, they don't have trapping sets of sizes smaller than their minimum distances.

\newpage

\begin{table}
\caption{A partial list of trapping sets of the (63,37) cyclic EG-LDPC codes}\centering
\begin{tabular}{c||c||c||c}
\hline
                                                     Size                     &          Number of odd-degree CNs & Size                     &          Number of odd-degree CNs\\
                                                        $ \kappa $            &                                         $ \tau $ &      $ \kappa $            &                                         $ \tau $\\
\hline\hline
                                                         3                   &                                  18 & 10 & 0\\
                                                                                 &                               20 &  & 14\\
                                                                                 &	22 & &\\
\hline
                                                        4                       &                               20 & 11 & 0\\
                                                                     &                                     22 &&\\
   &                                  24 &&\\
   &                                  26 &&\\
   &		28&&\\
\hline
                                                       5                   &                                    22 &12 & 0\\
                                                       &	24&&\\
                                                                             &                                   26&&\\
                                                                            &                                    28&&\\
                                                                          &                                       30&&\\
\hline
                                                      6              &                                          22 & 13 &26\\
                                                      						&		24&&\\
                                                                        &                                        26&&\\
                                                                          &                                      28&&\\
                                                          &30&&\\
                                                                            &                                    32&&\\
\hline                                                     7               &                                          18& 14 & 0\\
                                                                                  &                              22&&\\
                                                                                  &24&&\\
                                                                              &                                  26&&\\
                                                                              &28&&\\
                                                                              &30&&\\
                               &      32&&\\
                                 &    34&&\\
                                   &  36&&\\
\hline
                                                    8                            &                              26 & 22 &32\\
                      &               30&&\\
\hline
                                                    9                          &                                 0&&\\
                                                                    &                                         26&&\\
\hline
\end{tabular}
\end{table}

  \begin{figure}
 \centering
 \includegraphics[width = 3.3in]{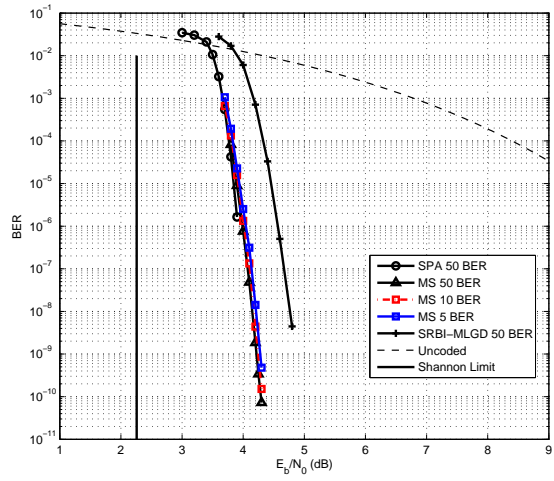}
 \caption{Bit error performances of the binary (4095,3367) cyclic EG-LDPC code given in Example 1 decoded with the SPA and the scaled MSA.}\label{fig:3367}
 \end{figure}

        \begin{figure}
 \centering
  \includegraphics[width = 3.3in]{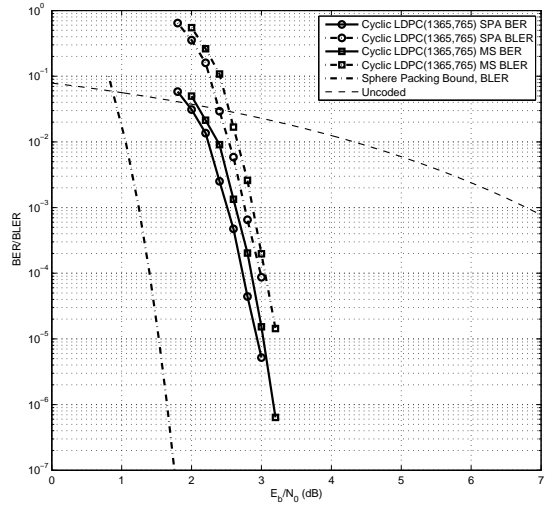}
 \caption{(a) The error performances of the binary (1365,765) cyclic EG-LDPC code given in Example 3 decoded with 50 iterations of the SPA and the MSA.}\label{fig:1365}
 \end{figure}
 
   \setcounter{figure}{1}
        \begin{figure}
 \centering
  \includegraphics[width = 3.3in]{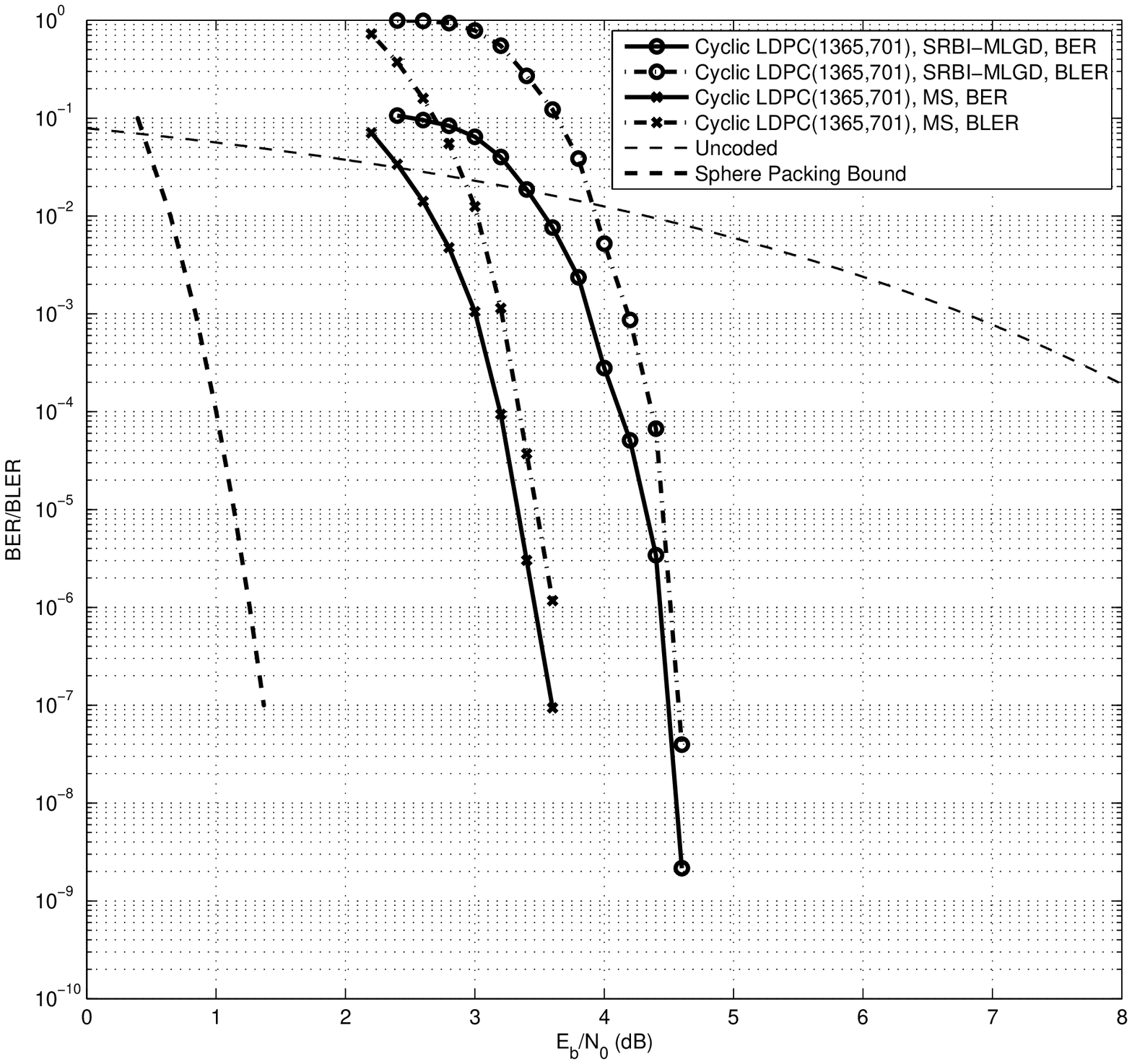}
 \caption{(b) The error performances of the binary (1365,701) cyclic EG-LDPC code given in Example 3 decoded with the MSA and the SRBI-MLGD-algorithm.}\label{fig:701}
 \end{figure}
 
    \setcounter{figure}{1}
        \begin{figure}
 \centering
  \includegraphics[width = 3.3in]{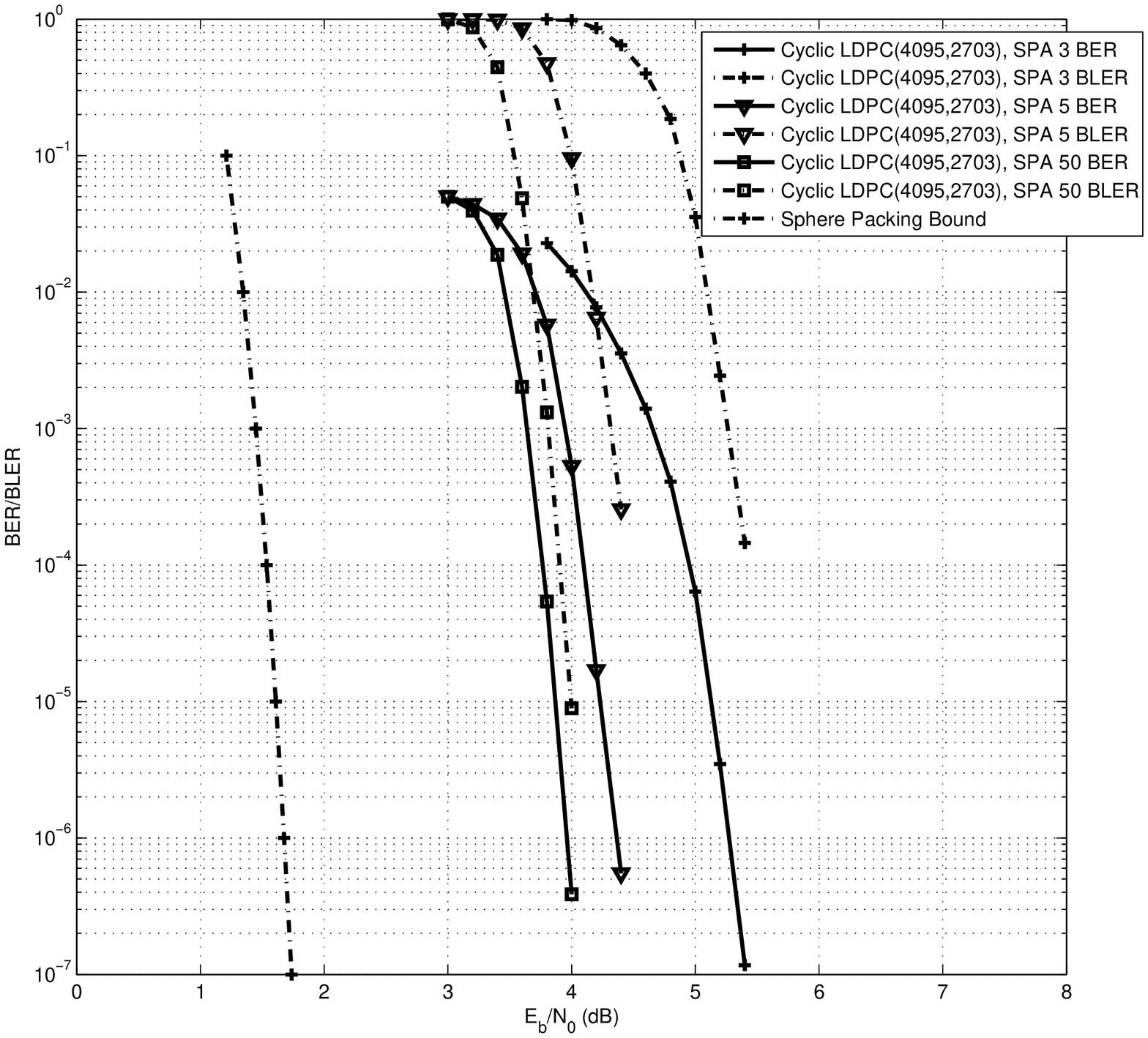}
 \caption{(c) The bit and block error performances of the binary (4095,2703) cyclic EG-LDPC code given in Example 3.}\label{fig:2703}
 \end{figure}

        \begin{figure}
 \centering
  \includegraphics[width = 3.3in]{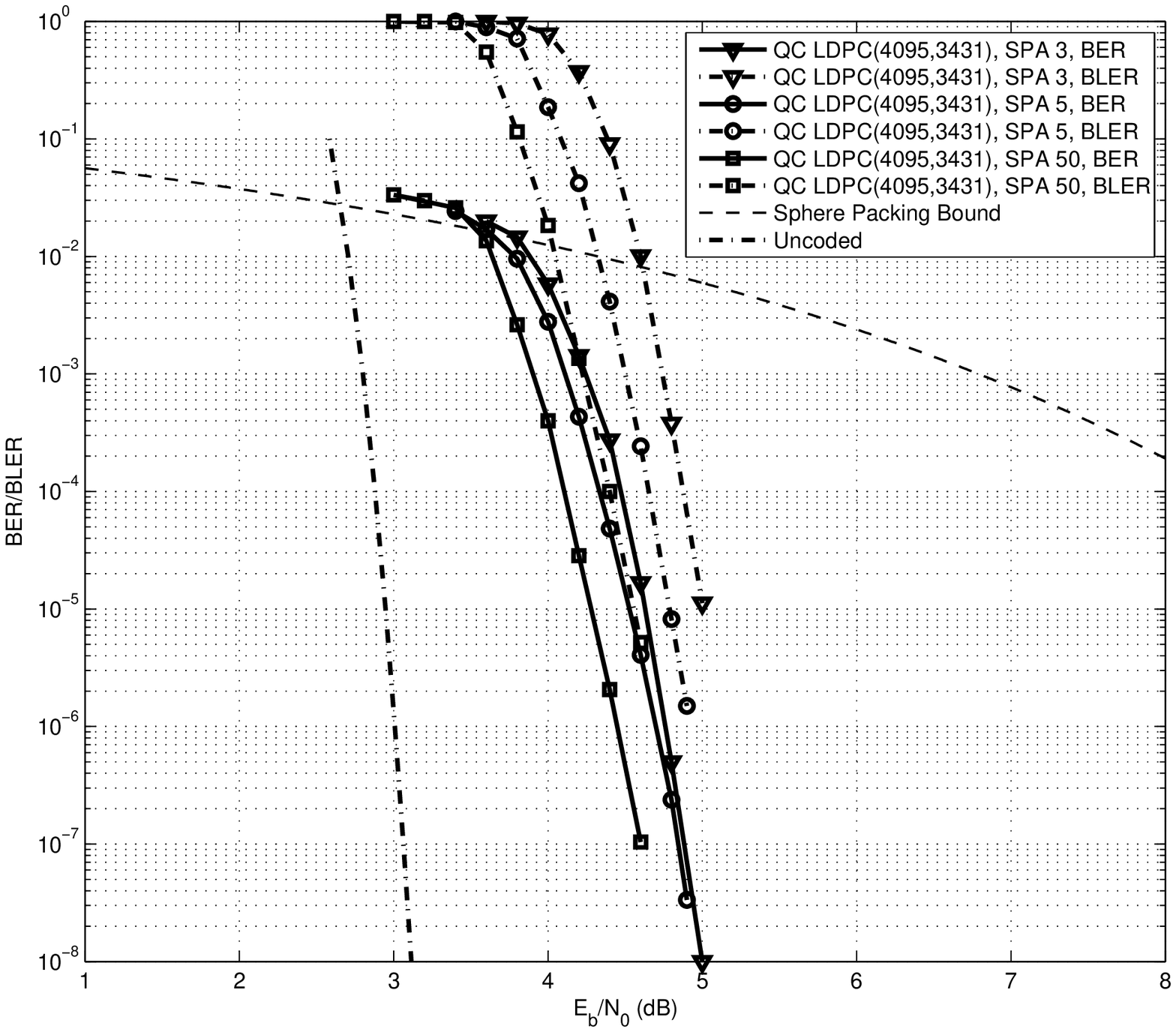}
 \caption{ The bit and block error performances of the binary (4095,3431) QC EG-LDPC code given in Example 4.}\label{fig:3431}
 \end{figure}
 
  \begin{figure}
 \centering
 \includegraphics[width = 3.3in]{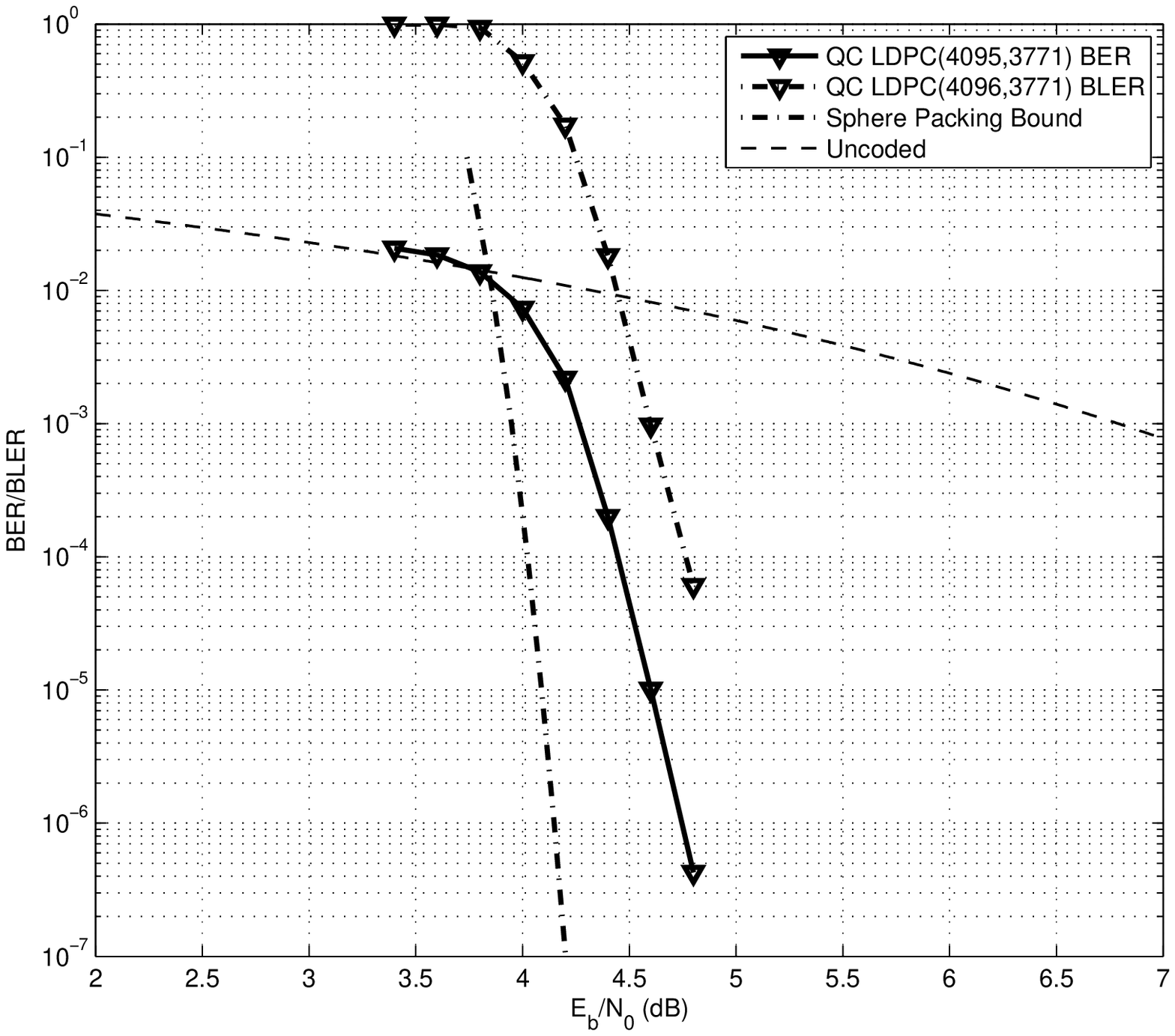}
 \caption{The bit and block error performance of the binary (4095,3771) QC-LDPC code given in Example 5.}\label{fig:3771}
 \end{figure}
 
       \begin{figure}
 \centering
 \includegraphics[width = 3.3in]{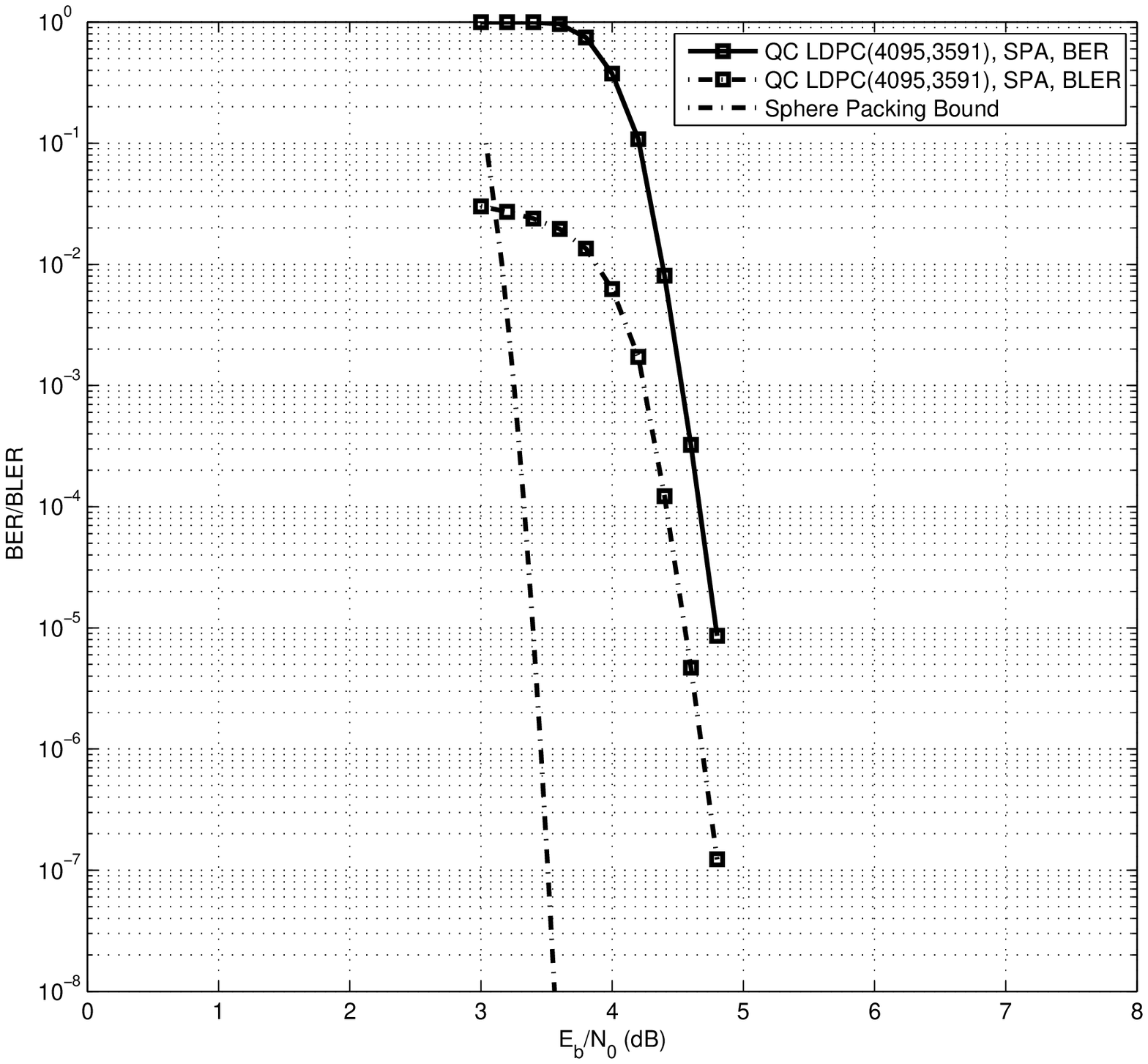}
\caption{The bit and block error performance of the binary (4095,3591) QC-LDPC code given in Example 6.}\label{fig:3591}
 \end{figure}
    
             \begin{figure}
 \centering
 \includegraphics[width = 3.3in]{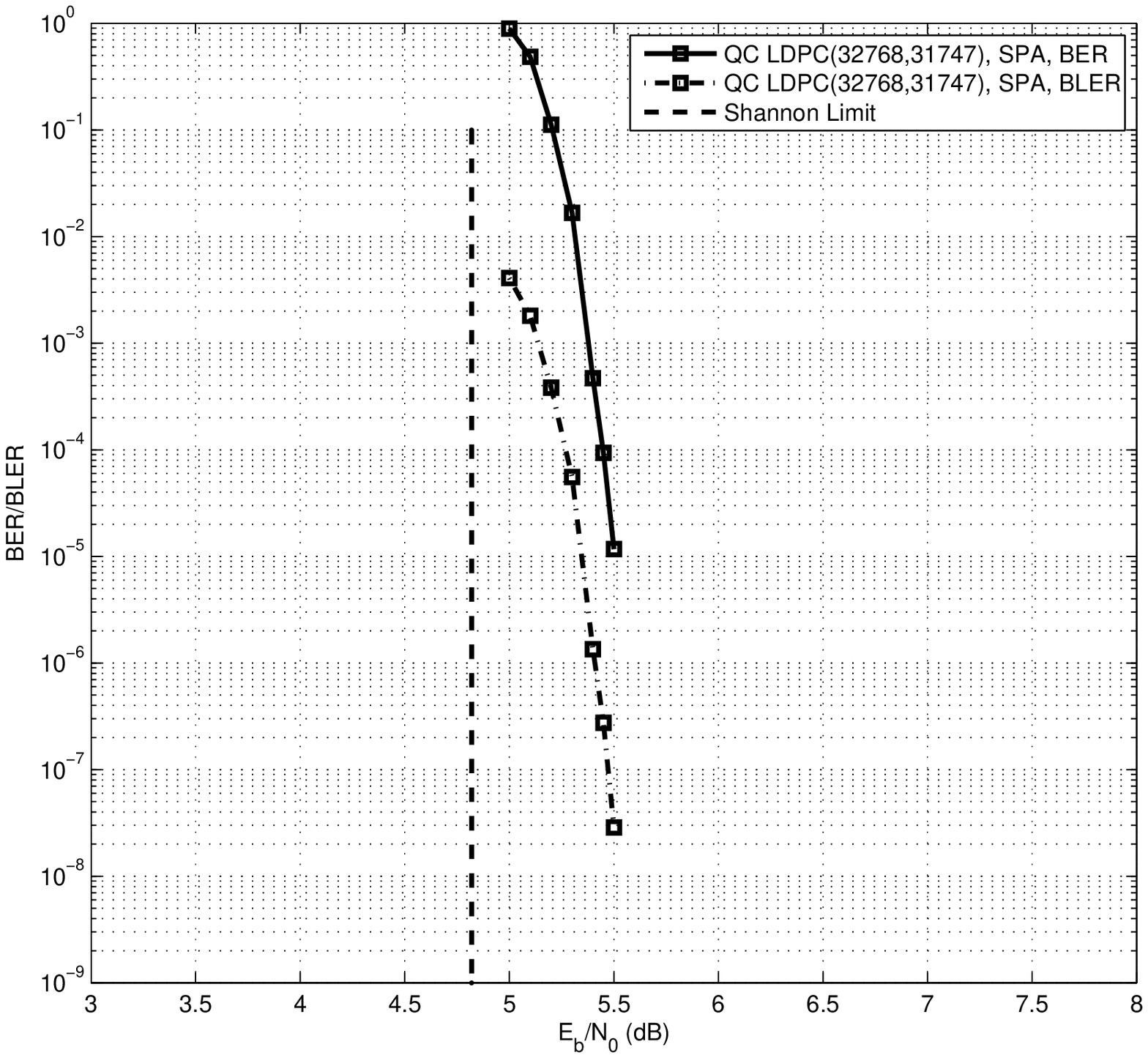}
 \caption{The bit and block error performances of the binary (32768,31747) QC-LDPC code given in Example 7.}\label{fig:31747}
 \end{figure}   
 
     \begin{figure}
 \centering
 \includegraphics[width = 3.3in]{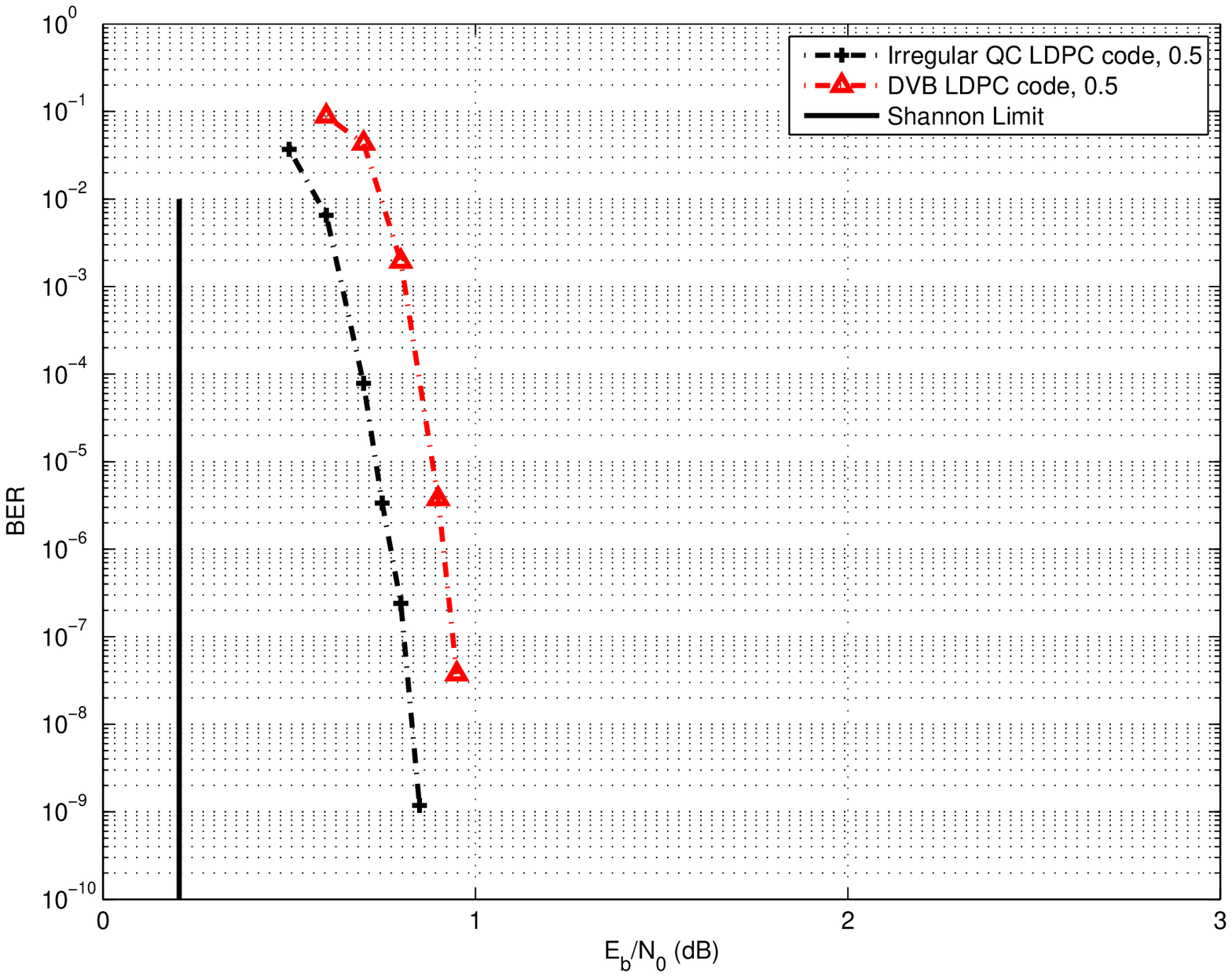}
 \caption{The error performances of the binary (65536,32768) QC-LDPC code and the DVB S-2 standard code given in Example 8.}\label{fig:65536}
 \end{figure}

      \begin{figure}
 \centering
 \includegraphics[width = 3.3in]{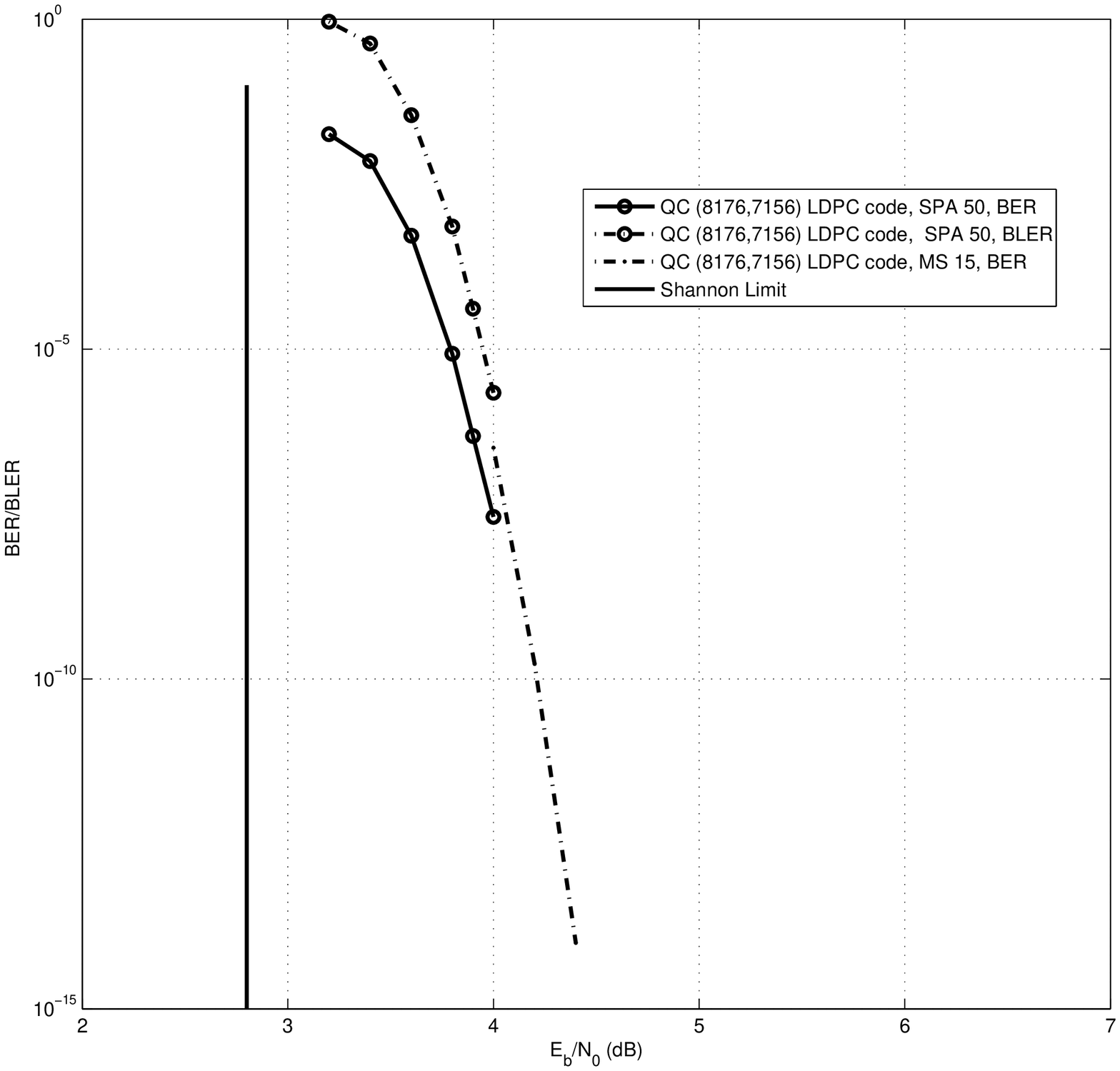}
 \caption{The error performances of the binary (8176,7156) QC-LDPC code given in Example 9.}\label{fig:7156}
 \end{figure}

      \begin{figure}
 \centering
 \includegraphics[width = 3.3in]{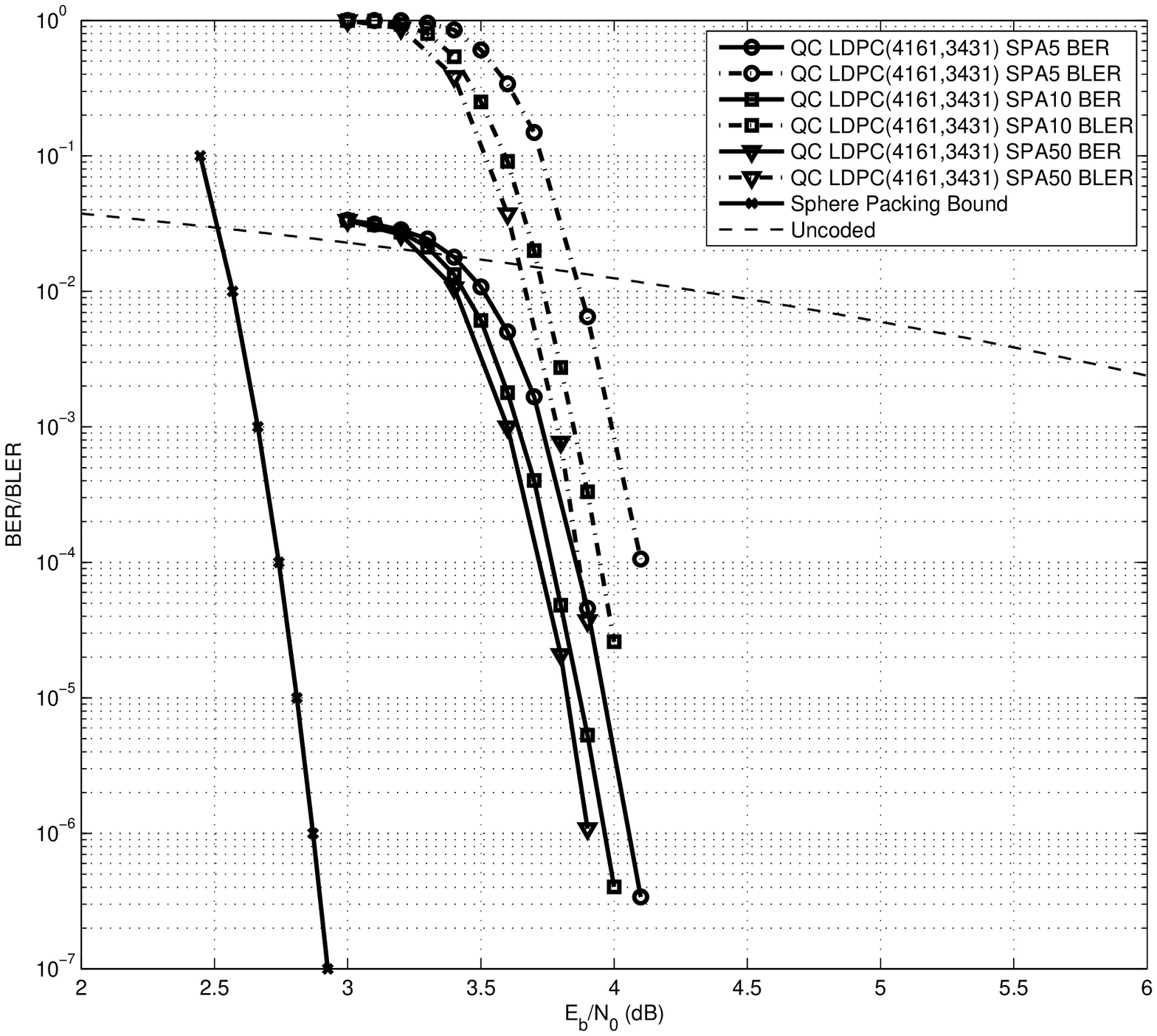}
 \caption{(a) The error performances of the (4161,3431) cyclic PG-LDPC code given in Example 10 decoded with various number of iterations of the SPA.}\label{fig:pg3431}
 \end{figure}
   
    \setcounter{figure}{8}
   \begin{figure}
 \centering
 \includegraphics[width = 3.3in]{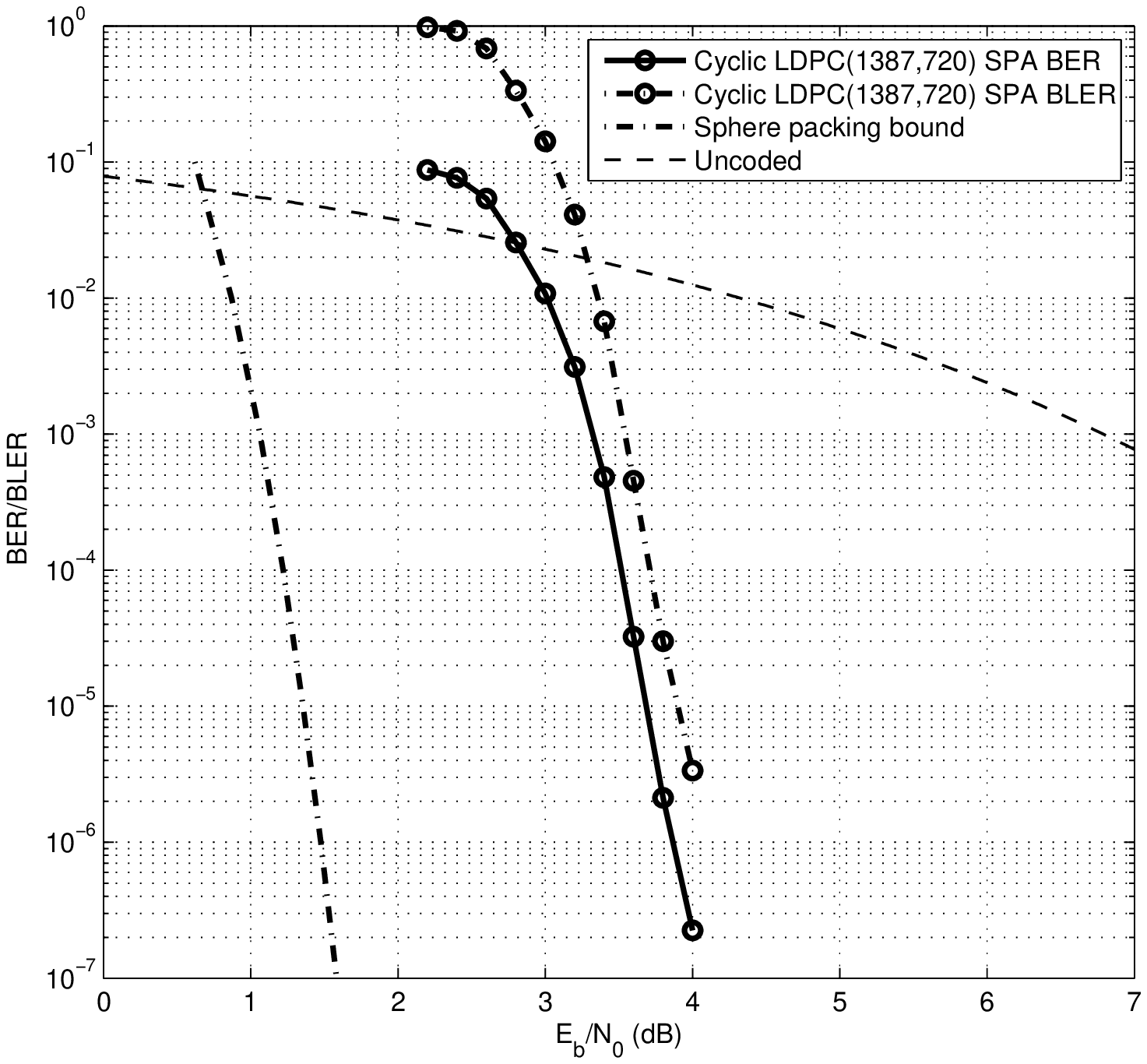}
 \caption{(b) The error performances of the binary (1387,720) cyclic LDPC code given in Example 10.}\label{fig:1387}
 \end{figure}

\setcounter{figure}{9}
   \begin{figure}
 \centering
 \includegraphics[width = 5in]{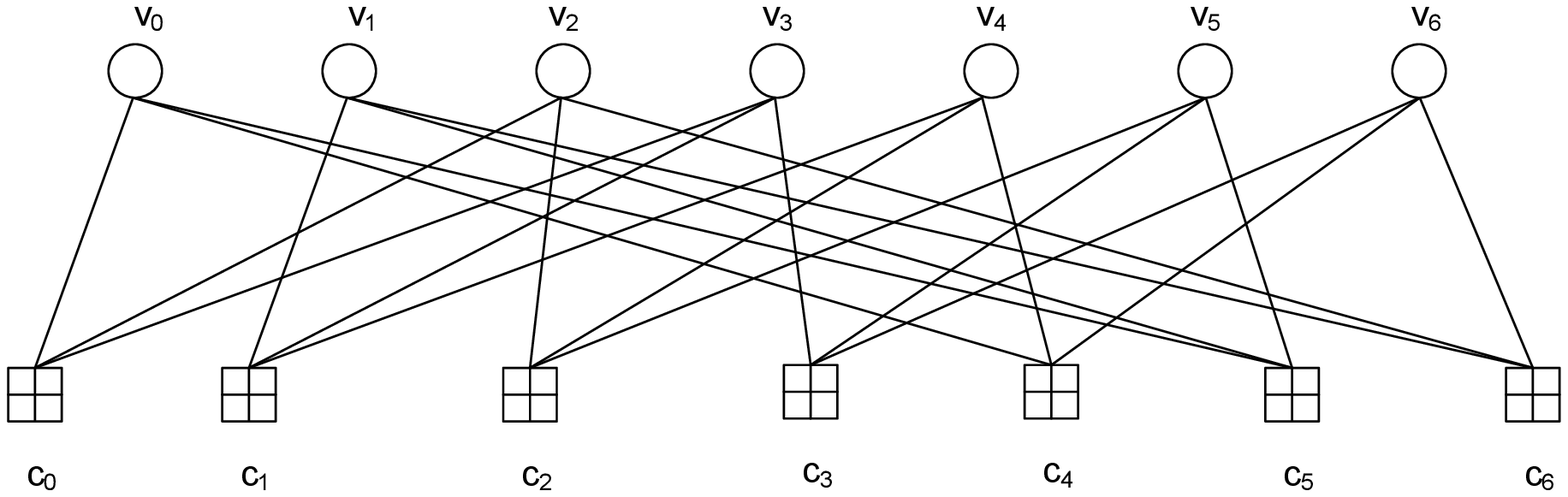}
 \caption{(a) The Tanner graph of a (3,3)-regular (7,3) LDPC code.}
 \end{figure}

\setcounter{figure}{9}
   \begin{figure}
 \centering
 \includegraphics[width = 3.6in]{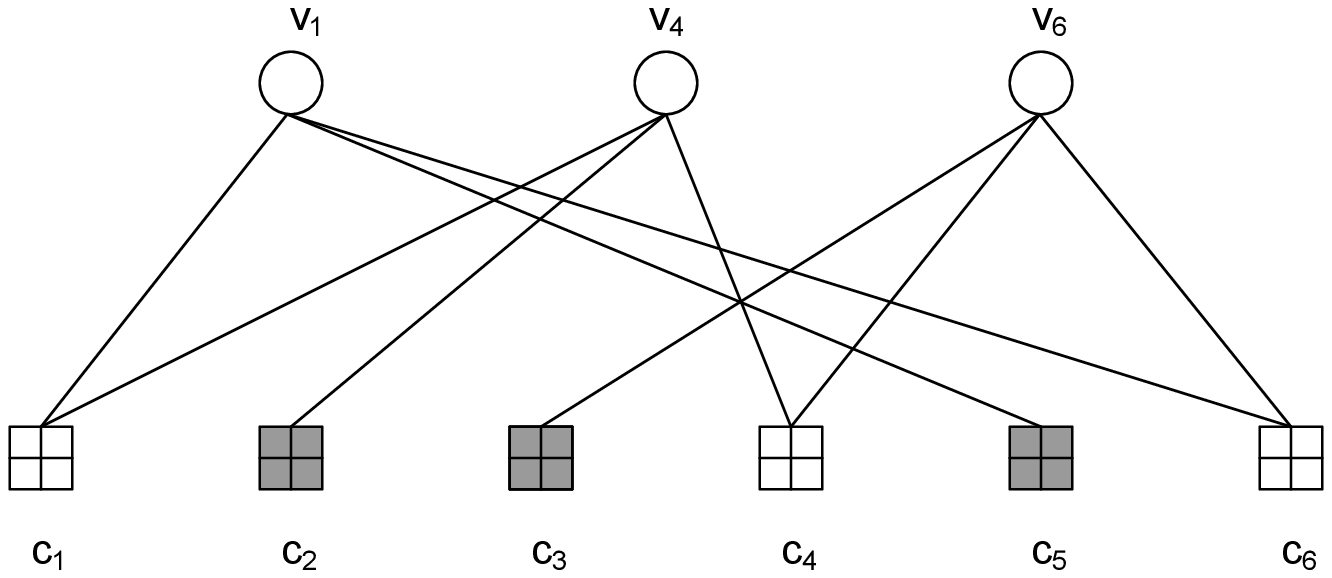}
 \caption{(b) A elementary (3,3) trapping set.}
 \end{figure}

\setcounter{figure}{9}
   \begin{figure}
 \centering
 \includegraphics[width = 3.6in]{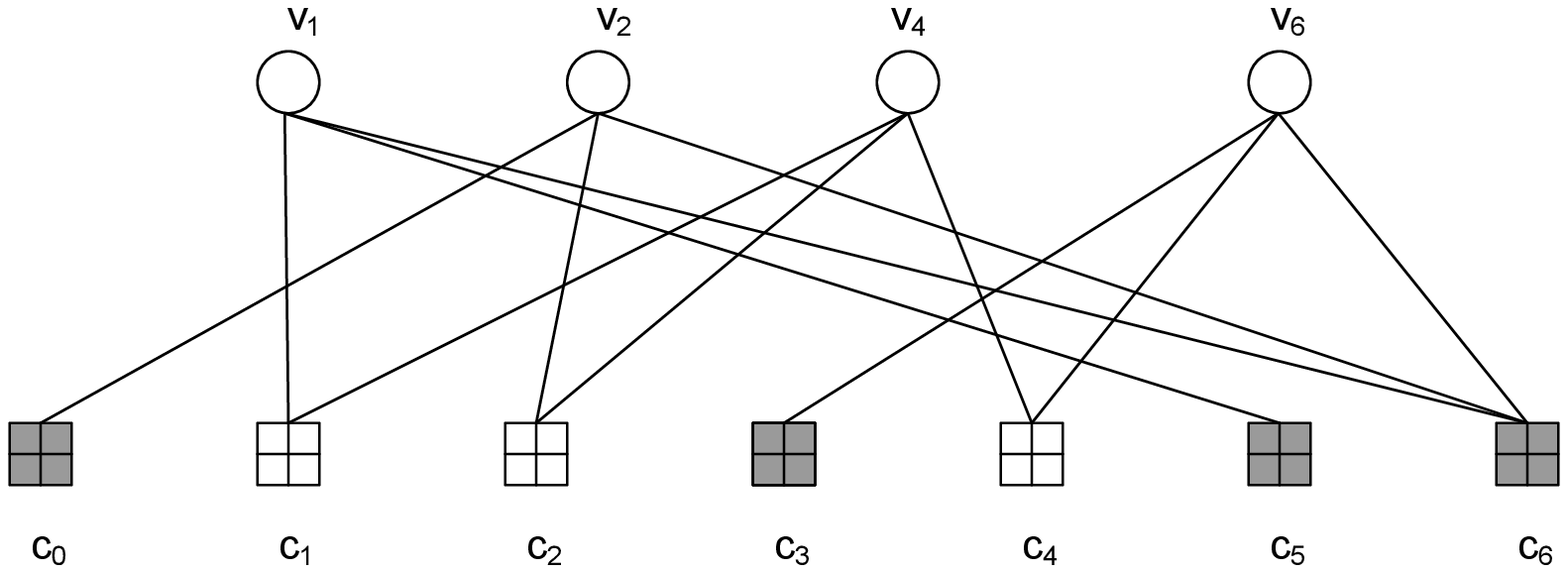}
 \caption{(c) A (4,4) trapping set.}
 \end{figure}

      \begin{figure}
 \centering
 \includegraphics[width = 3.3in]{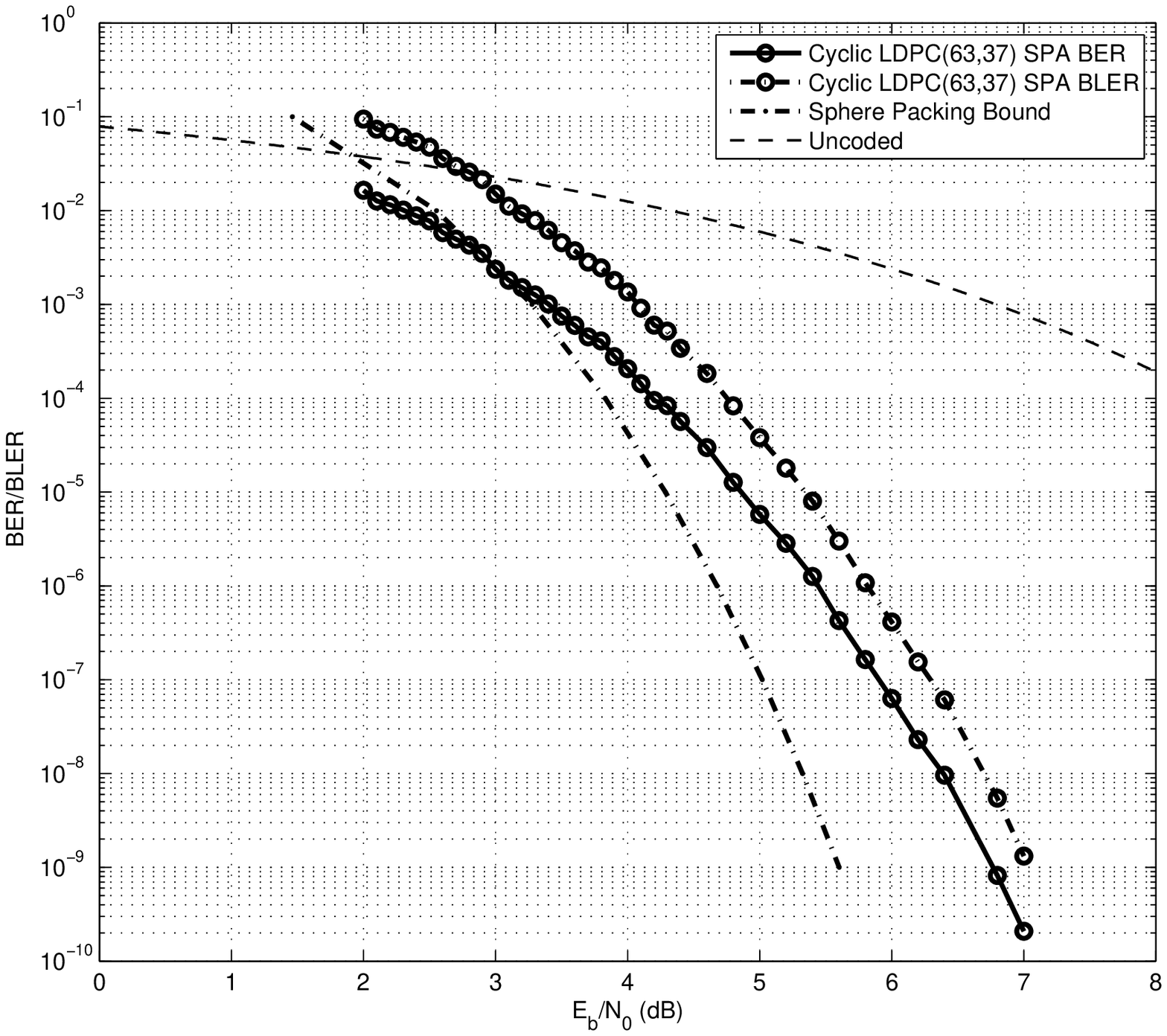}
 \caption{The bit and block error performances of the (63,37) cyclic EG-LDPC code given in Example 11.}\label{fig:63}
 \end{figure}
 
 \clearpage
 \begin{figure}
 \centering
 \includegraphics[width = 3.3in]{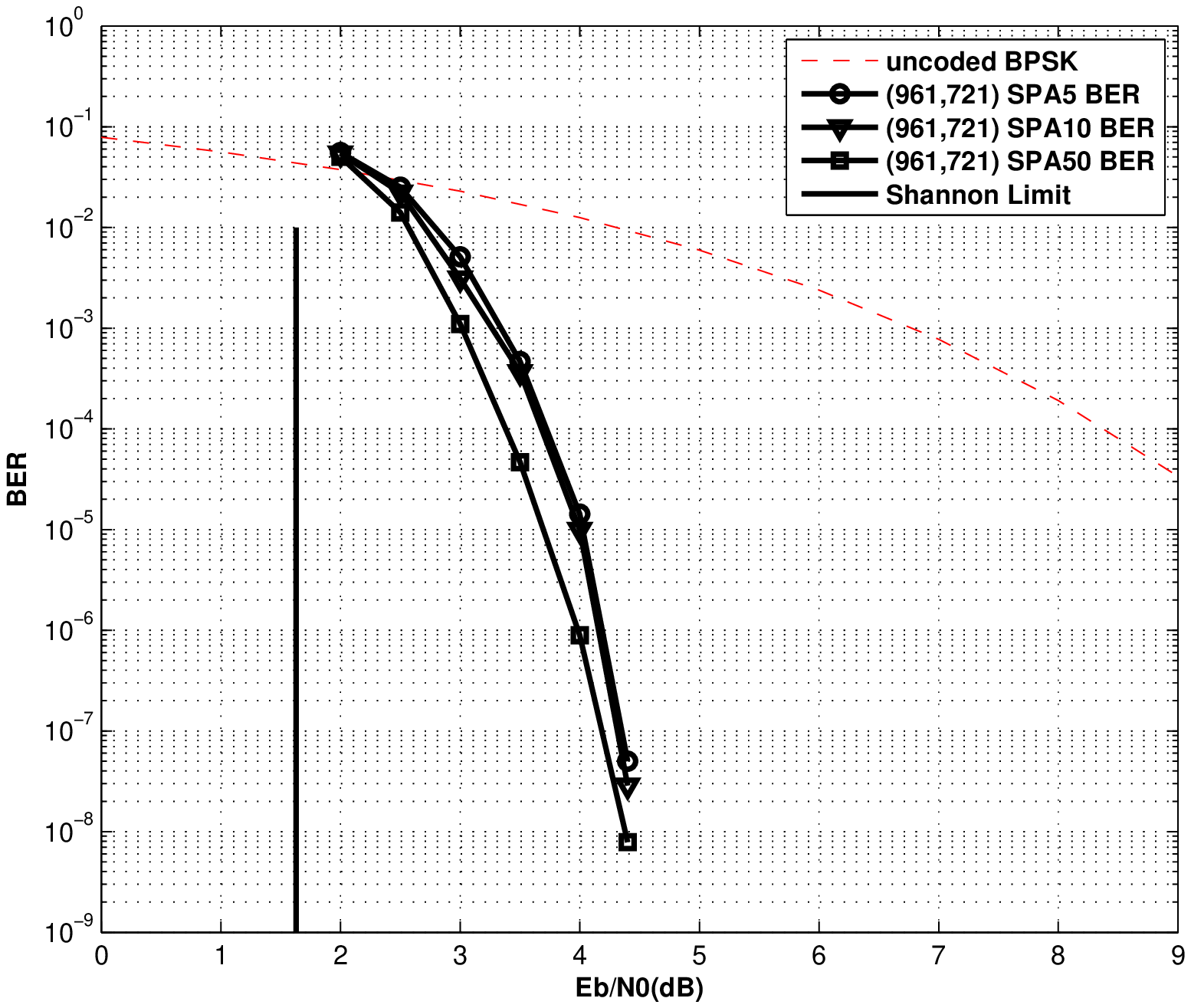}
 \caption{The bit error performance of the binary (961,721) QC-LDPC code given in Example 13 decoded with 5, 10 and 50 iterations of the SPA.}
  \end{figure}

         \begin{figure}
 \centering
 \includegraphics[width = 3.3in]{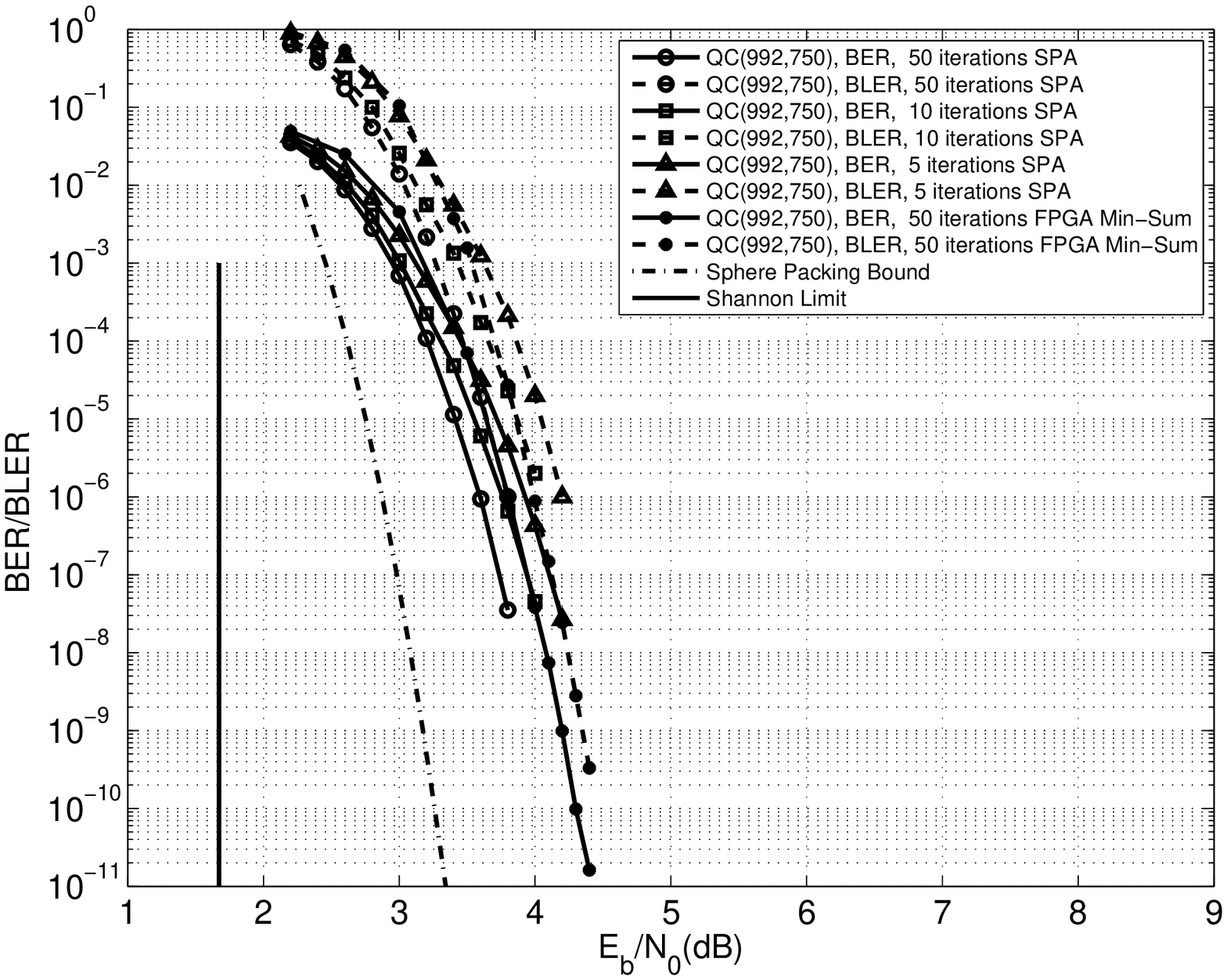}
 \caption{The bit and block error performances of the binary (992,750) QC-LDPC code given in Example 14.}\label{fig:992}
 \end{figure}

 \end{document}